\documentclass[twocolumn,secthm]{autart}  

\usepackage{graphicx}          

\usepackage{comment}
\usepackage{amsmath}
\usepackage{amssymb}
\usepackage{mathtools}
\usepackage{mathrsfs}
\newtheorem{definition}{\bfseries Definition}
\newtheorem{theorem}{\bfseries Theorem}
\newtheorem{lemma}{\bfseries Lemma}
\newtheorem{corollary}{\bfseries Corollary}
\newtheorem{remark}{\bfseries Remark}
\newtheorem{example}{\bfseries Example}
\newtheorem{assumption}{\bfseries Assumption}
\newtheorem{problem}{\bfseries Problem}

\usepackage{xcolor}

\usepackage{natbib}
\usepackage{hypernat}

\usepackage[inline]{enumitem}
\setlist{nolistsep,noitemsep}
\usepackage{subcaption}
\usepackage{array}
\usepackage{etoolbox} \AtBeginEnvironment{equation}{%
	\setlength{\abovedisplayskip}{4pt}
	\setlength{\belowdisplayskip}{4pt}
	\setlength{\abovedisplayshortskip}{0pt}
	\setlength{\belowdisplayshortskip}{0pt}
}
\AtBeginEnvironment{align}{%
	\setlength{\abovedisplayskip}{4pt}
	\setlength{\belowdisplayskip}{4pt}
	\setlength{\abovedisplayshortskip}{0pt}
	\setlength{\belowdisplayshortskip}{0pt}
}
\AtBeginEnvironment{equation*}{%
	\setlength{\abovedisplayskip}{4pt}
	\setlength{\belowdisplayskip}{4pt}
	\setlength{\abovedisplayshortskip}{0pt}
	\setlength{\belowdisplayshortskip}{0pt}
}
\AtBeginEnvironment{align*}{%
	\setlength{\abovedisplayskip}{4pt}
	\setlength{\belowdisplayskip}{4pt}
	\setlength{\abovedisplayshortskip}{0pt}
	\setlength{\belowdisplayshortskip}{0pt}
}

\newcommand{\mat}[1]{\begin{bmatrix} #1 \end{bmatrix}}
\newcommand{\smat}[1]{\begin{bsmallmatrix} #1 \end{bsmallmatrix}}
\newcommand{\R}[1]{\mathbb{R}^{#1}}
\newcommand{\ran}[1]{{\rm rank}(#1)}
\newcommand{\im}[1]{{\rm Im}(#1)}
\renewcommand{\ker}[1]{{\rm Ker}(#1)}
\newcommand{\norm}[1]{\left\lVert#1\right\rVert}

\newcommand{\ab}[1]{(A_{#1},B_{#1})}

\renewcommand{\H}{\mathcal{H}}

\newcommand{\A}{\mathcal{A}}
\newcommand{\B}{\mathcal{B}}
\newcommand{\C}{\mathcal{C}}
\newcommand{\D}{\mathcal{D}}

\newcommand{\s}{_{_S}}
\newcommand{\ms}{\mathcal{S}}
\newcommand{\mt}{\mathcal{T}}
\renewcommand{\t}{_{_T}}

\usepackage[capitalise]{cleveref}
\crefname{alg}{Algorithm}{Algorithm}

\begin{document}
	
	\begin{frontmatter}

		\title{Incremental Data Driven Transfer Identification\thanksref{footnoteinfo}} 
		\vskip-1cm
		\thanks[footnoteinfo]{The material in this paper was partially presented at the 64th IEEE Conference on Decision and control (CDC), December 10-12, 2025,  Rio de Janeiro, Brazil.}
		
		\author[First]{N. Naveen Mukesh}\ead{nnmukesh@ee.iitb.ac.in},    
		\author[First]{Debraj Chakraborty}\ead{dc@ee.iitb.ac.in}           

		\address[First]{The authors are with the Department of Electrical Engineering, Indian Institute of Technology Bombay, Mumbai, India.}  

		\begin{keyword}                           
			System identification, linear systems,  data-driven control, similar systems, transfer learning.             
		\end{keyword}                             

		\begin{abstract}                          
			We introduce a geometric method for online transfer  identification of a deterministic linear time-invariant system. At the beginning of the identification process, we assume access to abundant data from a system that is similar, though not identical, to the true system. In the early stages of data collection from the true system, the dataset generated  is still not sufficiently informative to enable precise identification. Consequently, multiple candidate models remain consistent with the observations available at that point. Our method picks, at each instant, the model closest to the similar system that is consistent with the current data. As more data are collected, the proposed model gradually moves away from the initial similar system and eventually converges to the true system when the data set grows to be informative. Numerical examples demonstrate the effectiveness of the incremental transfer identification paradigm, where identified models with minimal data are used to solve the pole placement problem. 
		\end{abstract}
		
	\end{frontmatter}
	
	\section{Introduction}
	\thispagestyle{empty}
	The first step in traditional control engineering practice is to model and identify the plant to be controlled. Usually, controlled experiments are conducted to obtain input–output data, after which model parameters are estimated through available techniques \citep{ljung1998system,van2012subspace}.
	However, in this framework, the data collection process needs to continue for long enough such that the data set grows to satisfy certain rank conditions \citep{katayama2005subspace}. Moreover, larger data sets are necessary to ensure adequate robustness against noise \citep{POURGHOLI2023109893}. Consequently, model-based controller design can only be initiated once these conditions are satisfied. In contrast, a growing body of recent research on data-driven control \citep{10266821,dorfler2023data} has bypassed the reliance on explicit models by constructing controllers directly from recorded data.
	However, even in these papers, a fully informative data set is necessary to solve control problems \citep{de2019formulas,celi2023closed}, and the requirement for lengthy data sets persists. In many practical contexts, such prolonged data collection or re-collection can be both inconvenient and costly. Typical examples include petrochemical processes, where the time constants are inherently large \citep{chiuso2007modeling}, as well as complex, nonlinear, or time-varying systems that require periodic re-identification to maintain accuracy \citep{lorenzen2019robust}. To address these challenges and accelerate the identification process, a growing body of literature \citep{xin2025learning,xin2022identifying,toso2023learning,kedia2024learning,
		chakrabarty2022optimizing,chakrabarty2023meta,richards2023control,
		zhang2023multi,ping2024parameter,ping2022multitask,huang2025expectation,du2024prediction}  advocates the use of previously recorded, abundant data from a similar system, together with data from the current (online or true) system to be identified. This combination, if done correctly, transfers the knowledge from the similar system to the true system, yielding an identified model during the early stages of online data collection, at a point where no unique model could have been obtained using only data from the online/true system. In this paper, we introduce a novel framework for such \emph{transfer identification} of deterministic systems, which guarantees both the existence and uniqueness of the identified model at each stage, and rigorously establishes the convergence of these models to the true system as additional data become available.
	
	While the literature on data-driven control is extensive (e.g., see \citep{de2019formulas,dorfler2023certainty,van2020noisy,berberich2020data,rueda2021data,baggio2021data,bisoffi2022controller}
	and the references therein), the integration of pre-collected data from a similar system with sparse online measurements has emerged recently \citep{xin2022identifying,chakrabarty2022optimizing,chakrabarty2023meta,richards2023control,xin2025learning,toso2023learning,kedia2024learning,
		zhang2023multi,ping2024parameter,ping2022multitask,huang2025expectation,du2024prediction}.  In  \citep{xin2022identifying,xin2025learning,kedia2024learning,toso2023learning},  data from a similar system was concatenated to the online data to enhance noise robustness in least squares, Bayesian estimation, and subspace identification settings. However, it is known that concatenating data from two distinct systems may not always yield consistency with any underlying model \citep{eising2024data}.
	Bayesian meta learning techniques were used for similar objectives in \citep{chakrabarty2023meta,richards2023control,chakrabarty2022optimizing}.  Nevertheless, the use of deep neural networks introduced significant difficulties in providing rigorous theoretical guarantees. 
	Data from similar systems was utilized to design LQR solutions in \citep{guo2025transfer,bajaj2025leveraging}, which however required to assume additional information such as impulse response data from the true system or known perturbation bounds on the true parameters. In \citep{ping2024parameter,ping2022multitask,huang2025expectation}, transfer identification was numerically explored in the context of regression models. 
	
	Transfer learning \citep{zhuang2020comprehensive} was systematically introduced in the control domain in \citep{du2024prediction} for training across multiple systems, though it is based on the assumption that the system parameters share a common column space.
	In \citep{8960476},  a common controller was found for all the systems that are consistent with the available (possibly non-informative) data. Due to the frequent infeasibility of this approach, \citep{li2023data} augmented the method with data from a similar system by constructing a set of systems that was both consistent with the available online data and within a pre-known distance of the similar system. 
	However, this method required an LMI-based convex feasibility search at each step, making it challenging to use in online settings. As opposed to most of the papers cited above, the geometric method of combining similar and true system data proposed in this paper, guarantees the existence and uniqueness of an identified model at each stage, does not require to perform any online optimization, nor does it require to tune weighting parameters or matrices to balance the contributions of data from the true system and auxiliary data from similar systems. This simplicity enhances computational efficiency, making the proposed approach well-suited for online implementation.
	
	In this article, we assume a certain level of confidence in the similar system. This confidence can come from earlier experience with that system or from engineering judgment. For instance, the similar system may be represented by a high-fidelity computer simulations or physics-based models built from the manufacturer’s nominal specifications. Although we anticipate that the true system differs slightly from the similar system, we rely on the similar system until sufficient evidence is obtained to suggest otherwise. To clarify this framework, we first characterize systems through the subspaces formed by the data they produce \citep{wang2024data,alsalti2024robust}, and we quantify how different two systems are by computing the distances between their corresponding subspaces  \citep{wedin2006angles,mandolesi2019grassmann}. As stated earlier, when only a short sequence of data from the true system is available up to the present time, there may exist infinitely many subspaces/systems that are consistent with the observed data. Therefore, selecting a single system from this infinite collection is not straightforward. However, if abundant data from a similar system are also available, we propose a method for constructing new subspaces that both contain the true system data up to the current instant and remain closest (according to an appropriately defined subspace distance) to the subspace associated with the similar system data. This evolving family of subspaces is then used to estimate (and repeatedly re-estimate) the system parameters at every time instant. Alternatively, these subspaces may be directly utilized to synthesize controllers through data-driven techniques. Beyond introducing this new subspace construction scheme, we prove/derive the following:
	\begin{enumerate}[label=(\roman*)]
		\item The subspaces computed by the proposed method at each time step move monotonically (according to an appropriately defined distance between subspaces) towards the unknown subspace of the true system as the amount of online data increases and satisfies the standard informativity conditions \citep{willems2005note,de2019formulas}.
		\item At every time instant, a unique state-space model can always be extracted from the constructed subspace.
		\item The model parameters obtained at each time step from the corresponding subspaces converge to the actual system parameters once the data becomes informative.
		\item The numerical implementation of the proposed paradigm and its computational complexity.
	\end{enumerate}
	Finally, with the help of numerical examples, we validate the theoretical results and demonstrate the practical relevance of the proposed approach with online/minimal data.
	
	This work extends the results presented in \citep{simsys_cdc}, where the incremental transfer identification paradigm was originally introduced. However, \citep{simsys_cdc} did not provide formal theoretical guarantees regarding how the accuracy of parameter estimates evolves as the computed subspaces incrementally converge toward the unknown true system subspace. We provide these guarantees here. This paper also includes formal proofs of several key theorems and complexity analysis of the algorithm that were omitted in \citep{simsys_cdc}. Furthermore, several novel examples are presented to illustrate the effectiveness and practical relevance of the proposed approach.

	\section{Problem Setup}
	We begin by outlining the preliminaries and formulating the problem of interest. We use standard mathematical notations from \citep{stewart1990matrix} throughout the paper.
	\subsection{True and similar system}
	\subsubsection{True System}
	Our objective is to identify the following discrete-time deterministic LTI system:%
	\begin{equation} 
		x\t(k+1)=A\t x\t(k)+B\t u\t(k), \quad k \in \{0,1,\ldots\}
		\label{true_system}
	\end{equation}%
	where  $A\t \in\mathbb{R}^{n\times n}$ and  $B\t \in\mathbb{R}^{n\times m}$ denote the unknown true system matrices to be determined. We assume that data collection begins at time $t=0$, and by time $t=i$, the input–state data of the form $\{u\t(t), x\t(t)\}_{t=0}^i$ has been recorded.
	\subsubsection{Similar System}
	Now, consider a different system having the same input and state dimensions as \eqref{true_system}:
	\begin{equation} 
		x\s(k+1)=A\s x\s(k)+B\s u\s(k), \quad k \in \{0,1,\ldots\}
		\label{sim_system}
	\end{equation}%
	The degree of similarity is specified in Assumption \ref{assump1}. We assume that input–state data of sufficient length (namely $N\s+1$) from the similar system \eqref{sim_system} is available at time $t=0$, i.e., $\{u\s(\tau), x\s(\tau)\}_{\tau=0}^{N_s}$ is already available at $t=0$.

	\subsection{Data Matrices}
	Using the available data till the $i^\text{th}$ time instant, the following  matrices are defined for the true system,
	\begin{equation}\label{eq_xu_defn}
		\begin{aligned}
			X\t^{i-}\hspace{-1mm}&:=\hspace{-1mm}\mat{x\t(0)&x\t(1)&\cdots&x\t(i -1)}\in \R{n\times i},\\
			X\t^{i+}\hspace{-1mm}&:=\hspace{-1mm}\mat{x\t(1)&x\t(2)&\cdots&x\t(i)}\in \R{n\times i}.
		\end{aligned}
	\end{equation}
	The matrix $U\t^{i-} \in \R{m\times i}$ is defined using input data similar to $X\t^{i-}$. 
	Further, we define the data matrices $X\s^{-}\in \R{n\times N\s},U\s^{-}\in \R{m\times N\s},X\s^{+}\in \R{n\times N\s}$ constructed using the available data from the similar system. 
	From \eqref{true_system} and \eqref{sim_system}, the data matrices above satisfy the following relations:
	\begin{align*}
		X\s^{+}=[A\s\;B\s]\smat{{X\s^-}\\{U\s^-}}, \quad  X\t^{i+}=[A\t\;B\t]\smat{{X\t^{i-}}\\ {U\t^{i-}}}.
	\end{align*}
	Define $M:=2n+m$, and the combined data matrix $S$ as follows: 
	\begin{equation}\label{eq_S}
		S:=\smat{{X\s^-}^\top ~{U\s^-}^\top~{X\s^+}^\top}^\top \in \R{M\times N\s}.
	\end{equation}
	Similarly, for the true system, denote the data matrix   \begin{equation}\label{eq_T_i} T_i:=\smat{{X\t^{i-}}^\top & {U\t^{i-}}^\top&{X\t^{i+}}^\top}^\top \in \R{M\times i}. \end{equation}
	Define $\Lambda:=n+m$ and $\Omega:=n+m+nm-1$.
	\begin{assumption}\label{assump1}
		The following are assumed:
		\begin{enumerate}
			\item The input $\{u\s(\tau)\}_{\tau=0}^{N\s}$ is persistently exciting \citep{willems2005note,de2019formulas} of order $n+1$. \label{asump1_s1}
			\item {$\norm{[A\t\;B\t]-[A\s\;B\s]}_F\leqslant\delta$, where $\delta$ is an unknown (small) constant, and $\norm{\cdot}_F$ denotes the  Frobenius norm.}
			\item {$\ab{\s}$ is controllable.}
			\item  {Let $\Delta A\in  \R{n\times n}$, $\Delta B \in \R{n \times m}$. Then every system in the set $\{(A\s\hspace{-1mm}+\hspace{-1mm}\Delta A,B\s\hspace{-1mm}+\hspace{-1mm}\Delta B) ~| ~\norm{[\Delta A\quad \Delta B]}_F\le\delta \}$ is controllable.} \label{asump1_s4}
			\item The current time $t=i$ satisfies $i\leqslant \Omega$.
			\item Abundant data is available from the similar system. i.e., $N_s>>\Omega$.
		\end{enumerate}
	\end{assumption}
	
	It is well established \citep{willems2005note} that, under these assumptions, ${\rm rank}\smat{{X\s^-}^\top & {U\s^-}^\top}^\top=\Lambda$. If we denote the range space of a matrix by $\im{\cdot}$ and define $\ms := \im{S}$, then $\dim(\ms) = \Lambda$. Furthermore, if we define $\mt_i := \im{T_i}$, it follows that under this setup the true system data $T_i$ may not satisfy  the above rank condition,  i.e., $\dim(\mt_i) < \Lambda$ for $i < \Omega$. On the other hand, the equality $\dim(\mt_\Omega) = \Lambda$ holds provided that a persistently exciting input 
	sequence $\{u\t(t)\}_{t=0}^\Omega$ is applied to the true system. Let ${\rm Gr}(k,N)$ denote the Grassmannian (the set of all $k$ dimensional linear subspaces of $\R{N}$) \citep{beyn2024smoothness}. Then $\ms, \mt_\Omega \in {\rm Gr}(\Lambda,M)$.

	\begin{remark}
		From \cite[Theorem 1.3]{wonhamlinear}, it is well known that if $\ab{\s}$ is controllable, then $\exists \; \varepsilon$ such that all systems in the set \linebreak $\{(A,B) | ~\norm{[A-A\s \quad  B-B\s]}_F \le\varepsilon \}$ are controllable. Hence, if $\delta\leqslant\varepsilon$, then statement \ref{asump1_s4} of Assumption \ref{assump1} follows directly.
		In addition, we note here that our approach remains valid even otherwise. However, in that case, the identified intermediate systems (say $\ab{i}$) would not necessarily be controllable. 
	\end{remark}
	
	\subsection{Problem Formulation and Preliminaries}
	\begin{problem}
		\begin{enumerate}[label=(\alph*)]
			\item Given $T_i$ at time step $i$, our goal is to identify a model $\big($say $\ab{i}\big)$ which is consistent with the data $T_i$
			and closest (in an appropriate sense) to $\ab{_S}$ by utilizing the data set $\{u\s(\tau),x\s(\tau)\}_{\tau=0}^{N_s}$.
			\item As additional data from $\ab{_T}$ becomes available,  re-identify the system at each time step to obtain a more accurate model.
		\end{enumerate}  
	\end{problem}
	\begin{figure} 
		\centering
		\hspace*{-0.28cm}
		\includegraphics[height=0.17\textwidth, width =0.53\textwidth]{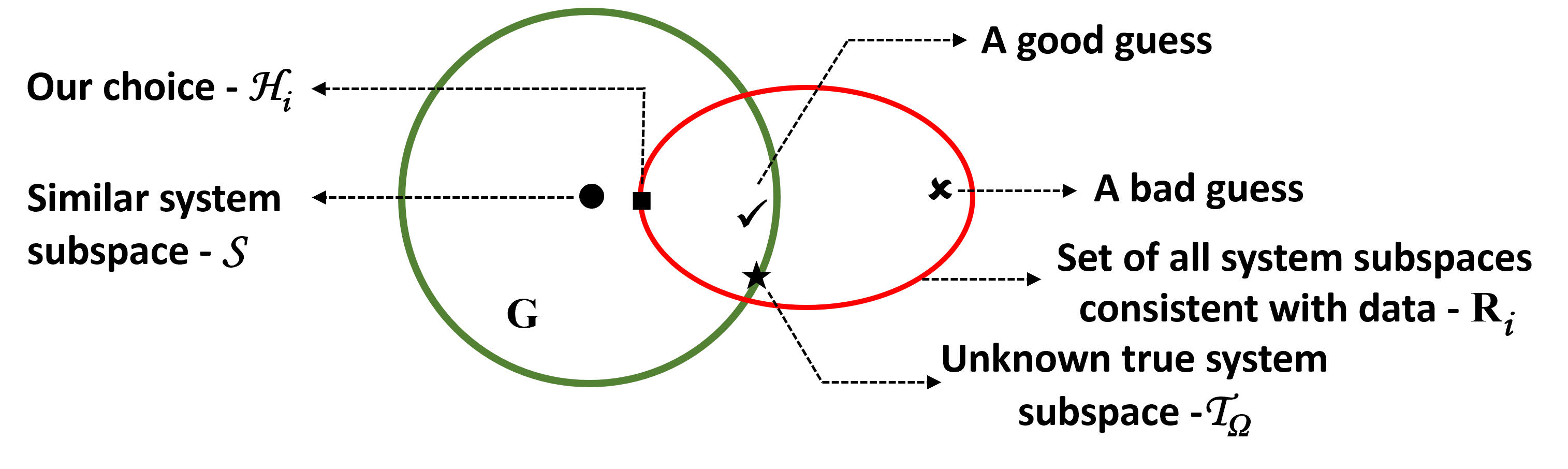}
		\caption{Illustration of  proposed paradigm}
		\label{fig_sim_sys1}
	\end{figure}
	We solve the above problem by interpreting systems as subspaces and finding the subspace (say $\H_i$) containing $\mt_i$ that is closest to, and is of the same dimension as $\ms$. We next  recall some preliminaries.
	\begin{definition}[Principal angles and vectors] \citep{bjorck1973numerical}
		Let $\mathcal{A}$, $\mathcal{B}$ denote two subspaces of $ \mathbb{R}^{M}$ with $p=\mathrm{dim}(\mathcal{A})$,  $q=\mathrm{dim}(\mathcal{B})$ and $p\geqslant q > 0$. The principal angles $0 \leqslant \theta_k \leqslant \frac{\pi}{2}$ for $k=1,2\dots,q$ is defined recursively as follows
		\begin{equation*}
			\cos \theta_k = \max_{\substack{u\in \mathcal{A} \\ v\in \mathcal{B}}}  u^\top v = u_k^\top v_k
		\end{equation*} 
		subject to $u_k^\top u_j =0$, $v_k^\top v_j =0$ $\forall ~ j = 1,2,\dots , k-1$, and $\norm{u_k}_2=1$, $\norm{v_k}_2 =1$ for all $k=1,2\dots,q$. The vectors ${u_1,u_2,\dots,u_q},{v_1,v_2,\dots,v_q}$ are called the principal vectors.
	\end{definition}
	By definition, the principal angles are such that $0 \leqslant \theta_1 \leqslant \theta_2 \leqslant \dots \leqslant \theta_q \leqslant \frac{\pi}{2}$.
	
	\begin{definition}[Distance between subspaces]
		\citep{mandolesi2019grassmann} Let  $\mathcal{A}$, $\mathcal{B}$ denote two subspaces of $ \mathbb{R}^{M}$ such that $\mathrm{dim}(\mathcal{A}) \leqslant \mathrm{dim}(\mathcal{B})$.  The distance (maximal angle) between the two subspaces is defined as, 
		\begin{equation}    \label{eq_dist_defn}
			d(\mathcal{A},\mathcal{B}) :=\max_{\substack{u\in \mathcal{A} \\ \norm{u}_{_2} =1}} \min_{\substack{v\in \mathcal{B}\\ \norm{v}_{_2} =1}} \cos^{-1} u^\top v  = \theta_{\max}
		\end{equation}
		where $ \theta_{\max}$ denotes the maximum principal angle between $\mathcal{A}$ and $\mathcal{B}$.
	\end{definition}
	The above notion of distance forms a metric space over subspaces of fixed dimensions \citep{beyn2024smoothness}.    Note that the notation $d(\cdot,\cdot)$ is used throughout the paper without proper ordering of the subspaces. The correct ordering should be understood from the context.

	\subsection{From State-Space Models to Subspace Representations}
	Our approach is illustrated in \cref{fig_sim_sys1}. By Assumption \ref{assump1}, $(A\t,B\t)\in \{ \ab{}~|\;\norm{[A\s\; B\s]-[A\; B]}_F\leqslant\delta \}$ and this set translates to a set 
	\begin{align*}
		\hspace{-1mm}\mathrm{G}:=\hspace{-1mm}\bigg\{ {\rm Im}\hspace{-0.5mm}\smat{{X^-}\\{U^-}\\X^+}~|\; X^+\hspace{-1.5mm}=[A\quad B]\smat{{X^-}\\{U^-}} \text{ with } {\rm rank}\hspace{-0.5mm}\smat{{X^-}\\{U^-}}=\Lambda \\\text{ and } \norm{[A\s\; B\s]-[A\; B]}_F\leqslant\delta  \bigg\}\subseteq{\rm Gr}(\Lambda,M)  
	\end{align*}
	This set is marked by the green circle in \cref{fig_sim_sys1}. The similar system subspace $\ms$  is denoted by a solid dot, and the unknown true system subspace $\mathcal{T}_\Omega$ is denoted by a star. When we have insufficient data (i.e., $\ran{T_i}<\Lambda$), multiple models that describe the data set exist. This set is given by $\Sigma_i=\big\{ \ab{}~|\; X\t^{i+}=[A\;B]\smat{{X\t^{i-}}\\ {U\t^{i-}}} \big\}$ and $\ab{\t}\in \Sigma_i$. Analogous to this set in ${\rm Gr}(\Lambda,M)$ we have the set $$\mathrm{R}_i:=\{ \H ~|~\H\supseteq \mt_i \text{ and } \dim(\H)=\Lambda  \}\subseteq{\rm Gr}(\Lambda,M)$$ (marked by red ellipse). Note that  $\mt_\Omega \in   \mathrm{G}\cap\mathrm{R}_i$ holds. Due to our said trust in $\ms$, we propose to choose the subspace $\H_i$ (marked by a solid square) closest to $\ms$ but lying inside the red ellipse. The method to construct such a subspace is described next.
	\begin{remark}
		We would like to emphasize the fact that the sets represented by the red ellipse and the green circle in \cref{fig_sim_sys1} are purely illustrative and are shown here to facilitate the easy interpretation of our results. In general, these sets are not constrained to such geometric forms. 
	\end{remark}

	\section{Incremental Subspaces}  \label{sec_results}
	
	We begin by describing how to construct the subspace that is closest to a given subspace (say $U$) while also containing a specified vector (say $a_1\notin U$).

	\begin{example}
		Consider a subspace $U$ and a vector $a_1 \notin U$, as shown in \cref{fig_planes1}, with  $d(U,a_1) =\theta$.  Let $b_1$ denote the projection of $a_1$ onto $U$, and define $b_2 = a_1^\perp \cap U$, where $(\cdot)^\perp$ represents the orthogonal complement. For vectors $v_1,\ldots,v_k$, we use $\langle v_1,\ldots,v_k \rangle$ to denote their linear span. It follows that among all subspaces containing $a_1$, the subspace $V=\langle a_1,b_2\rangle$, with $d(U,V)=d(U,a_1)=\theta$ is the one closest to $U$. The principal angles $\theta_i$'s together with their associated principal vectors $u_i$'s and $v_i$'s, corresponding to the subspaces $U$ and $V$, are illustrated in \cref{fig_planes1}.
		\hfill $\square$
	\end{example}

	We now employ this idea to construct subspaces by incrementally combining similar and true system data. 
	Define  $h_i:=\begin{bsmallmatrix}x(i-1)\\u(i-1)\\x(i)
	\end{bsmallmatrix}\in \R{M}$ as the vector containing the data collected from the true system $\ab{_T}$ at the $i^\text{th}$ instant. Hence,\begin{equation}\label{eq_Hibar}
		\hspace{-1mm}\mt_i = \mathrm{Im}(T_i)={\rm Im}[h_1 ~ h_2 ~ \cdots ~ h_{i}] ~\text{for}~ i=1,2,\ldots,\Omega.
	\end{equation}
	Next, we present the proposed formula for incremental subspace construction.	For $i=1,2,\ldots,\Omega$, let $\mathcal{H}_i$ be defined as follows:
	\begin{equation}\label{eq_Hi}
		\mathcal{H}_i := \mt_i + (\mt_i^\perp \cap \ms).
	\end{equation}
	Clearly, $h_i\in \mt_\Omega$ holds. Also, $\mt_i\subseteq \mt_\Omega$.       
	\begin{definition}[Partial orthogonality]\citep{mandolesi2019grassmann}
		Let $\mathcal{A}$, $\mathcal{B} ~\subseteq \R{M}$ be two subspaces. $\mathcal{A}$ is said to be partially orthogonal to $\mathcal{B}$ if $\exists$ an non-zero   vector $a \in \mathcal{A}$ such that $a^\top b =0 ~ \forall ~ b \in \mathcal{B} $. i.e $\mathcal{A} \cap \mathcal{B}^\perp \neq \{0\}$.
	\end{definition}
	Additionally, if $\mathcal{B}$ is partially orthogonal to $\mathcal{A}$, then the subspaces $\mathcal{A}$ and $\mathcal{B}$ are said to be partially orthogonal.

	In \cref{fig_planes1}, the subspaces $U$ and $V$ are partially orthogonal  as $U\cap V^\perp=U^\perp \cap V =\{0\}$ holds. Whereas, consider a subspace $V_1:=\langle a_1,b_1\rangle$, then the subspaces $U$ and $V_1$ are partially orthogonal  as $U\cap V_1^\perp=\langle u_2\rangle\neq \{0\}$ and $U^\perp \cap V_1  =U^\perp\neq \{0\}$ holds.
	
	\begin{figure} 
		\centering
		\includegraphics[height=0.15\textwidth, width =0.25\textwidth]{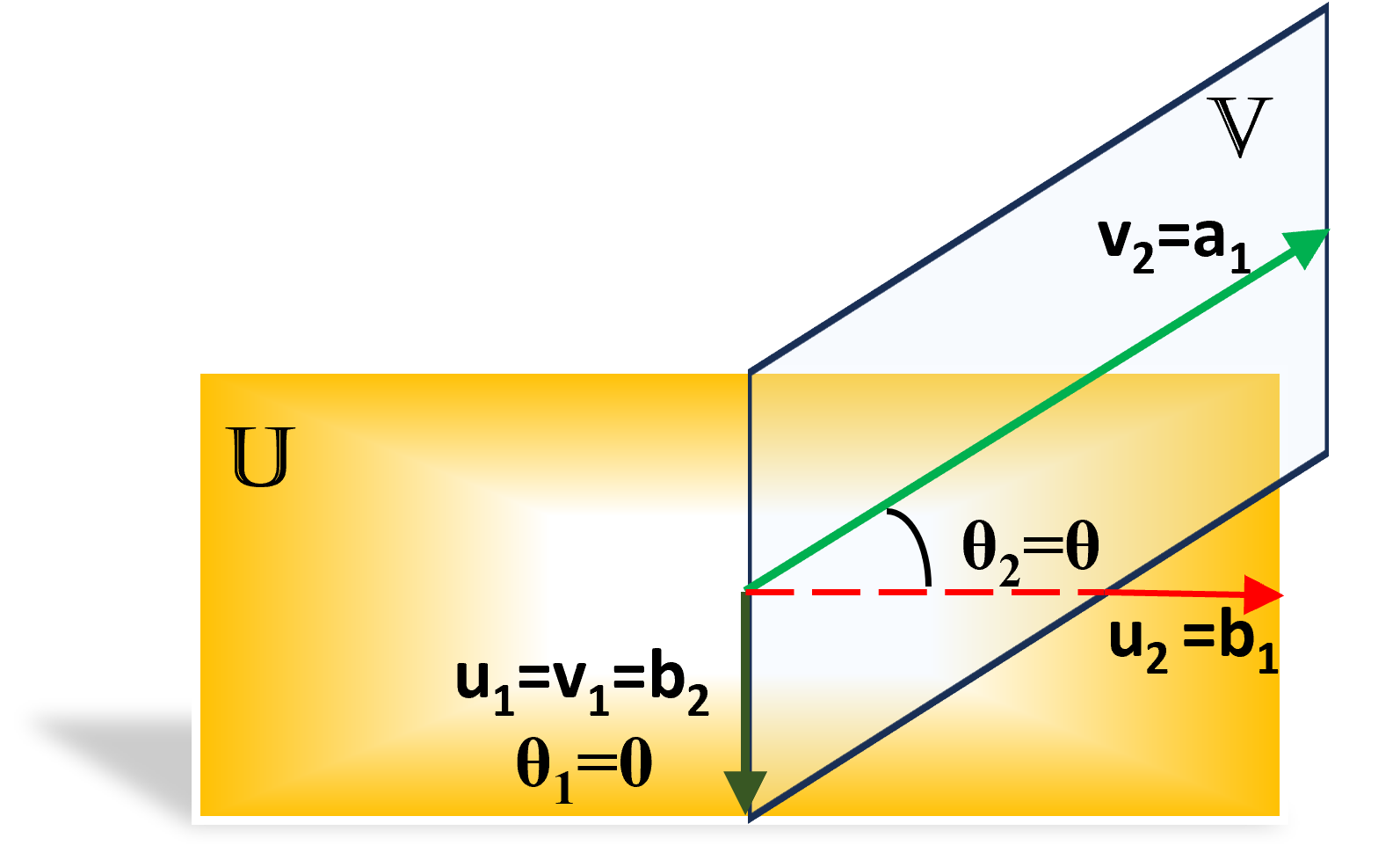}
		\caption{Closest subspace to $U$ containing the vector $a_1$}
		\label{fig_planes1}
	\end{figure}
	\begin{assumption}\label{asump_pa}
		We	assume that the subspaces $\ms$ and $\mt_\Omega$ are not partially orthogonal. (i.e., $(\ms^\perp \cap \mt_\Omega) = (\mt_\Omega^\perp \cap \ms) =\{0\}$ holds) 
	\end{assumption}
	
	\begin{remark}
		The above assumption is expected to be true even though we do not have full information or control over  the subspace $\mt_\Omega$. It is known that transversality of subspaces is a generic property (i.e., any two subspaces $\mathcal{A}$, $\mathcal{B} \subseteq \R{M}$ would have the property that $\dim(\mathcal{A}+\mathcal{B})= \min \{\dim(\mathcal{A})+\dim(\mathcal{B}),M\}$ hold generically) \cite[Section 0.16]{wonhamlinear}. Hence, in our case  any randomly chosen $\ms$, $\mt$ would satisfy $\dim(\ms+\mt_\Omega^\perp)=M (\because \dim(\ms)=\Lambda, \dim(\mt_\Omega^\perp)=M-\Lambda)$  generically. Equivalently, $\ms+\mt_\Omega^\perp=\R{M}$ and $\ms^\perp\cap\mt_\Omega=\{0\}$ hold generically. Similarly, $\ms\cap\mt_\Omega^\perp=\{0\}$ is also generically true. Hence, for any two systems $\ab{\t}$, $\ab{\s}$, the subspaces $\ms$, $\mt_\Omega$ would satisfy  Assumption \ref{asump_pa} generically and hence no prior verification is required.
	\end{remark}
	Absence of Assumption \ref{asump_pa} can lead to the following two issues:
	\begin{enumerate}[label=(\arabic*)]
		\item Necessary dimension for system identification $\Lambda$ might not be preserved.
		\item It may not be feasible to characterize a unique subspace closest (according to the metric in \eqref{eq_dist_defn}) to the similar system subspace.
	\end{enumerate}
	
	\begin{example}
		Consider the subspaces $\ms={\rm Im}\smat{1& \quad0\\0&\quad1\\1&\quad0}$, $\mt_\Omega={\rm Im}\smat{1&\quad0\\0&\quad1\\-1&\quad0}.$
		
		Note that the subspaces $\ms$ and $\mt_\Omega$ are partially orthogonal. At $t=1$, let $T_1=\smat{1&0&-1}^\top$ be available. Then, $\mt_i^\perp\cap\ms=\ms$ and $\H_i$ constructed using the formula in  \eqref{eq_Hi} yields $\H_i =\R{3}$.
		Clearly, the necessary dimension for system identification is not preserved.
		
		Next, consider the attempt to construct a subspace $\H_i$ satisfying $\dim(\H_i) = \dim(\ms)$ by selecting a subspace of dimension exactly equal to $\dim(\ms) -  \dim(\mt_i)$ from $\mt_i^\perp \cap \ms$. For example, choosing $\H_i = \langle T_1, v \rangle$ with $v \in  (\mt_i^\perp \cap \ms) $) results in infinitely many possible choices for $\H_i$. Hence, the primary objective of ensuring model uniqueness is not achieved. \hfill $\square$
	\end{example}
	We now proceed to demonstrate certain fundamental properties of the constructed subspaces $\mathcal{H}_i$.
	\subsection{Properties of $\H_i$ with fixed data size}
	
	The following result characterizes the dimension of $\H_i$'s constructed in \eqref{eq_Hi} and ensures that the necessary dimensions required for the unique identification of the system are preserved.
	\begin{theorem}[Dimension preservation] \label{thm_dim_match}
		Let $\mathcal{H}_i$ be defined as in \eqref{eq_Hi}. Then, under Assumption \ref{asump_pa}, $\dim(\mathcal{H}_i) = \Lambda ~ \forall  ~ {i} \leqslant\Omega$ holds.
	\end{theorem}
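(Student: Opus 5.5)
Since $\mt_i$ and $\mt_i^\perp\cap\ms$ are orthogonal, the sum in \eqref{eq_Hi} is direct. Hence
\[
\dim(\H_i)=\dim(\mt_i)+\dim(\mt_i^\perp\cap\ms).
\]
Write $r_i:=\dim(\mt_i)$. The proof then reduces to showing $\dim(\mt_i^\perp\cap\ms)=\Lambda-r_i$. Note that $r_i\le\Lambda$ because $\mt_i\subseteq\mt_\Omega$ and $\dim(\mt_\Omega)\le\Lambda$.

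\textbf{Lower bound.} Consider the linear map $P:\ms\to\mt_i$ given by orthogonal projection onto $\mt_i$. Its kernel is exactly $\ms\cap\mt_i^\perp$. By rank--nullity,
\[
\dim(\mt_i^\perp\cap\ms)=\Lambda-\ran{P}\ \geqslant\ \Lambda-r_i .
\]
Here I use $\dim(\ms)=\Lambda$, which is established after Assumption \ref{assump1}.

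\textbf{Upper bound.} It suffices to show that $P$ is surjective onto $\mt_i$. Equivalently, the adjoint map from $\mt_i$ to $\ms$ (orthogonal projection onto $\ms$ restricted to $\mt_i$) must be injective. Its kernel is $\mt_i\cap\ms^\perp$. Since $\mt_i\subseteq\mt_\Omega$,
\[
\mt_i\cap\ms^\perp\subseteq\mt_\Omega\cap\ms^\perp=\{0\}
\]
by Assumption \ref{asump_pa}. So the adjoint is injective, $P$ is onto, $\ran{P}=r_i$, and therefore $\dim(\mt_i^\perp\cap\ms)=\Lambda-r_i$ and $\dim(\H_i)=\Lambda$.

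The main point needing care is this adjoint/surjectivity step. I would verify it concretely with orthonormal bases. Let $Q_S\in\R{M\times\Lambda}$ and $Q_T\in\R{M\times r_i}$ have orthonormal columns spanning $\ms$ and $\mt_i$. Then $\ms\cap\mt_i^\perp=Q_S\,\ker{Q_T^\top Q_S}$, so $\dim=\Lambda-\ran{Q_T^\top Q_S}$. Also $\ran{Q_T^\top Q_S}=\ran{Q_S^\top Q_T}$, and $Q_S^\top Q_T$ has trivial kernel: $Q_S^\top Q_T y=0$ means $Q_Ty\in\ms^\perp\cap\mt_i=\{0\}$. Hence its rank is $r_i$. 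Only the half $\ms^\perp\cap\mt_\Omega=\{0\}$ of Assumption \ref{asump_pa} is used.

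The degenerate case $\mt_i=\{0\}$ is trivial, giving $\H_i=\ms$.
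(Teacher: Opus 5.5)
Your proof is correct and follows essentially the same route as the paper. Both arguments reduce the claim to $\dim(\ms\cap\mt_i^\perp)=\dim\ms-\dim\mt_i$, using $\ms^\perp\cap\mt_i\subseteq\ms^\perp\cap\mt_\Omega=\{0\}$, and then use that the sum in \eqref{eq_Hi} is orthogonal and hence direct. The only difference is cosmetic: the paper derives the dimension count from $(\ms\cap\mt_i^\perp)^\perp=\ms^\perp+\mt_i$ and the dimension formula for a sum of subspaces, whereas you apply rank--nullity to the projection $\ms\to\mt_i$ (equivalently to $Q_T^\top Q_S$).
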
\vspace*{-0.5\baselineskip}
	\begin{pf}
		Consider $\mt_i$ as defined  in \eqref{eq_Hibar}.
		\begin{align} \label{prop_eq}
			\dim (\ms \cap \mt_i^\perp) \hspace{-1mm}&= M - \dim (\ms \cap \mt_i^\perp)^\perp  (\because  \ms \cap \mt_i^\perp  \subseteq \mathbb{R}^M) \nonumber\\ 
			&= M - \dim(\ms^\perp + \mt_i) \nonumber \\
			&= M -  (\dim \ms^\perp +\dim \mt_i - \dim(\ms^\perp \cap \mt_i)) \nonumber \\ 
			&= \dim \ms -  \dim \mt_i.
		\end{align}
		The last equality follows since $\dim \ms^{\perp} = M- \dim \ms$ and $\dim(\ms^\perp \cap \mt_i) = 0$ holds. The latter one holds due to our assumption of subspaces not being partially orthogonal, i.e., $\ms^\perp \cap \mt_\Omega =\{0\}$ holds, and hence, $\ms^\perp \cap \mt_i =\{0\}$ holds as $\mt_i \subseteq \mt_\Omega$.
		Now since $\mathcal{H}_i=  \mt_i + (\ms \cap \mt_i^\perp)$, 
		\begin{align*}
			\dim \mathcal{H}_i &=  \dim \mt_i + \dim (\ms \cap \mt_i^\perp) ~ (\text{$\because$ $\mt_i \perp (\ms \cap \mt_i^\perp) $}) \\
			&= \dim \ms (\text{from \eqref{prop_eq}}). 
		\end{align*}
		The above relation holds for all $i\leqslant\Omega$. By statement \ref{asump1_s1} of Assumption \ref{assump1} it follows that $\dim(\ms)=\Lambda$ and hence $\dim(\H_i)=\Lambda$ for all $i\leqslant\Omega$. \hfill $\blacksquare$
	\end{pf}\vspace*{-0.5\baselineskip}
	It follows directly that whenever sufficient data from the actual system is available (i.e., $\ran{T_i} = \Lambda$), the subspaces $\H_i$ constructed in \eqref{eq_Hi} converge to $\mt_\Omega$.
	
	\begin{corollary}\label{coro_converge}
		Let Assumption \ref{asump_pa} hold. Then, the following are true:
		\begin{enumerate}[label=(\roman*)]
			\item If $\dim(\mt_i)=\Lambda$, then $\H_i=\mt_\Omega$.
			\item If $\dim(\mt_i)=0$, then $\H_i=\ms$.\label{coro_s2}
		\end{enumerate}	
	\end{corollary}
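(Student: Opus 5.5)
Both parts follow from Theorem \ref{thm_dim_match} together with the containments $\mt_i\subseteq\mt_\Omega$ and $\mt_i\subseteq\H_i$. The tool is the elementary fact that a subspace contained in another of the same finite dimension equals it.

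For (i), suppose $\dim(\mt_i)=\Lambda$. We have $\mt_i\subseteq\mt_\Omega$ by \eqref{eq_Hibar}, since the columns $h_1,\dots,h_i$ are among those spanning $\mt_\Omega$. Also $\dim(\mt_\Omega)=\Lambda$, so $\mt_i=\mt_\Omega$. Next, \eqref{prop_eq} gives $\dim(\ms\cap\mt_i^\perp)=\dim\ms-\dim\mt_i=\Lambda-\Lambda=0$; this uses Assumption \ref{asump_pa} exactly as in the proof of Theorem \ref{thm_dim_match}. Hence $\ms\cap\mt_i^\perp=\{0\}$, and \eqref{eq_Hi} reduces to $\H_i=\mt_i=\mt_\Omega$.

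An alternative route avoids \eqref{prop_eq}. From \eqref{eq_Hi}, $\mt_i\subseteq\H_i$, and Theorem \ref{thm_dim_match} gives $\dim\H_i=\Lambda=\dim\mt_i$, so again $\H_i=\mt_i=\mt_\Omega$.

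For (ii), suppose $\dim(\mt_i)=0$, i.e.\ $\mt_i=\{0\}$. Then $\mt_i^\perp=\R{M}$, so $\mt_i^\perp\cap\ms=\ms$ and $\H_i=\{0\}+\ms=\ms$ directly from \eqref{eq_Hi}. Assumption \ref{asump_pa} is not even needed here, though it is consistent with $\dim\H_i=\Lambda$.

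The only point needing care is the justification $\dim(\mt_\Omega)=\Lambda$ in (i). It relies on the setup assumption of a persistently exciting input on the true system, stated before the Grassmannian discussion, where $\mt_\Omega\in{\rm Gr}(\Lambda,M)$. I would cite that explicitly. Alternatively, I would note that $\mt_\Omega$ lies in the $\Lambda$-dimensional kernel of $[-A\t\;\,-B\t\;\,I]$, so $\dim\mt_\Omega\le\Lambda$, and combine this with $\mt_i\subseteq\mt_\Omega$ and $\dim\mt_i=\Lambda$. This second argument is cleaner because it needs no excitation assumption on the true system. Otherwise the proof is routine.
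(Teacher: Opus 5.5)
Your proof is correct and follows the paper's route. For (i), the paper likewise uses \eqref{prop_eq} to get $\dim(\ms\cap\mt_i^\perp)=0$ and concludes $\H_i=\mt_i=\mt_\Omega$. The paper dismisses (ii) as following ``similarly''; your direct computation $\mt_i^\perp=\R{M}$ makes this explicit, and you rightly note that Assumption \ref{asump_pa} is not needed there. Your kernel argument giving $\dim\mt_\Omega\leqslant\Lambda$ is a worthwhile addition, since the paper simply asserts $\mt_i=\mt_\Omega$ when $\dim(\mt_i)=\Lambda$ without justification.
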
\vspace*{-0.5\baselineskip}
	\begin{pf}
		From \eqref{prop_eq}, we get $\dim (\ms \cap \mt_i^\perp)=0$ when $\dim(\mt_i)={\Lambda}$. Therefore,  \eqref{eq_Hi} becomes $\H_i=\mt_i+0=\mt_\Omega$ (as $\mt_i=\mt_\Omega$ when $\dim(\mt_i)={\Lambda}$). Statement \ref{coro_s2} also follows similarly. \hfill $\blacksquare$
	\end{pf}\vspace*{-0.5\baselineskip}
	\begin{remark}
		From \eqref{eq_Hi} and \eqref{prop_eq}, we infer that the subspaces $\H_i$ possess an intrinsic property: at the initial stages, they rely more heavily on similar system data, since $ \dim\left(\mt_i^\perp \cap \ms\right) > \dim(\mt_i)$. However, as more system data becomes available, the reliance on similar system data decreases, because $ \dim\!\left(\mt_i^\perp \cap \ms\right) < \dim(\mt_i)$.  This constitutes a key feature of the construction: the subspaces naturally adjust their dependence on similar system data without the need to introduce explicit weight parameters \citep{xin2025learning,ping2022multitask,zhang2023multi,ping2024parameter,huang2025expectation}.
	\end{remark}
	
	Denote $\sigma_{\min}(\cdot)$ to be the minimum singular value of a matrix.
	The following lemmas will help us prove our main results. 
	\begin{lemma} \label{lem_dist_eq}
		Let $\mathcal{A}$, $\mathcal{B}{~\subseteq \R{M}}$ be two subspaces and $Q_\mathcal{A}$, $Q_\mathcal{B}$ be matrices used to represent their orthonormal basis respectively.  Let $\D \subseteq \mathcal{A}\cap \mathcal{B}^\perp$ and $Q_\D$ be a matrix containing orthonormal vectors that span $\mathcal{D}$. Define $\C=\B\oplus\D$, and let $\dim(\A)\geqslant\dim(\C)$. Further, 
		{assume that $\B$ is not partially orthogonal to $\A$.} Then, \begin{enumerate}[label=(\roman*)]
			\item $ \sigma_{\min} (Q_\mathcal{A}^\top Q_\mathcal{B})=\sigma_{\min}(Q_\mathcal{A}^\top[Q_\mathcal{B} \quad Q_\D])$ or equivalently
			$d( \mathcal{A}, \mathcal{B}) = d( \mathcal{A}, \mathcal{C})$.
			\item If $\D = \mathcal{A}\cap \mathcal{B}^\perp$, then $\dim(\C)=\dim(\A)$ and the subspace $\C$ is unique. \label{lem1_p2}
		\end{enumerate}
	\end{lemma}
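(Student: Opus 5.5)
The plan is to work directly with the Gram matrix of $Q_\A^\top[Q_\B\;\;Q_\D]$ and exploit the orthogonality relations built into $\D$. First I check that $[Q_\B\;\;Q_\D]$ is an orthonormal basis of $\C$. Since $\D\subseteq\B^\perp$, we have $Q_\B^\top Q_\D=0$. Hence $\C=\B\oplus\D$ is an orthogonal direct sum, and the concatenated matrix $Q_\C:=[Q_\B\;\;Q_\D]$ has orthonormal columns.

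Next I compute
\[
Q_\C^\top Q_\A Q_\A^\top Q_\C=\mat{Q_\B^\top P_\A Q_\B & Q_\B^\top P_\A Q_\D\\ Q_\D^\top P_\A Q_\B & Q_\D^\top P_\A Q_\D},
\qquad P_\A:=Q_\A Q_\A^\top .
\]
Because $\D\subseteq\A$, we get $P_\A Q_\D=Q_\D$. The off-diagonal blocks therefore become $Q_\B^\top Q_\D=0$, and the lower-right block becomes $Q_\D^\top Q_\D=I$. The Gram matrix is thus block diagonal, $\mathrm{diag}(Q_\B^\top P_\A Q_\B,\,I)$. So the singular values of $Q_\A^\top Q_\C$ are those of $Q_\A^\top Q_\B$ together with extra ones equal to $1$. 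Since all singular values are at most $1$, the minimum is unchanged, i.e.\ $\sigma_{\min}(Q_\A^\top Q_\B)=\sigma_{\min}(Q_\A^\top Q_\C)$.

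I then translate this into distances. Because $\dim\C\le\dim\A$ (and hence $\dim\B\le\dim\A$), the principal-angle definition gives $\cos d(\A,\B)=\sigma_{\min}(Q_\A^\top Q_\B)$, and likewise for $\C$. This yields (i). The only subtlety is to make sure $\sigma_{\min}$ is taken over the $\dim\B$ (respectively $\dim\C$) singular values, i.e.\ the smaller side, which is exactly the ordering convention of \eqref{eq_dist_defn}.

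For (ii), with $\D=\A\cap\B^\perp$, I will mimic the computation \eqref{prop_eq} in the proof of Theorem \ref{thm_dim_match}. We have
\[
\dim(\A\cap\B^\perp)=M-\dim(\A^\perp+\B)=M-\dim\A^\perp-\dim\B+\dim(\A^\perp\cap\B).
\]
The assumption that $\B$ is not partially orthogonal to $\A$ means $\B\cap\A^\perp=\{0\}$, so this reduces to $\dim\A-\dim\B$. By orthogonality of the sum, $\dim\C=\dim\B+\dim\D=\dim\A$. Uniqueness is immediate, because $\D$ is now canonically determined by $\A$ and $\B$, so $\C=\B+(\A\cap\B^\perp)$ involves no choice. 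This is in contrast with choosing an arbitrary subspace of $\A\cap\B^\perp$ of the right dimension, as in the earlier example.

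The main obstacle is interpreting the uniqueness claim precisely. If it is meant only as ``well-defined'', the argument above suffices. If instead it is meant as ``$\C$ is the unique $\dim\A$-dimensional subspace containing $\B$ and attaining $d(\A,\C)=d(\A,\B)$'', the claim is stronger. In that case I would argue as follows. Any such $\C'\supseteq\B$ with $\dim\C'=\dim\A$ satisfies $d(\A,\C')\ge d(\A,\B)$. Equality, together with the requirement of being closest, forces the complement $\C'\ominus\B$ to lie in $\A$ and be orthogonal to $\B$, hence inside $\A\cap\B^\perp$. Equality of dimensions then gives $\C'=\C$. I expect this step to need the most care, and the paper's intended meaning is likely the simpler well-definedness statement.
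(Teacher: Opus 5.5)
Your proof of (i) and of the dimension count in (ii) is the same as the paper's. The paper also checks that $[Q_\B\;\,Q_\D]$ is an orthonormal basis of $\C$. It derives the block-diagonal Gram matrix $\mathrm{diag}(Q_\B^\top Q_\A Q_\A^\top Q_\B,\,I_k)$ from $Q_\A Q_\A^\top d_i=d_i$ and $Q_\B^\top d_i=0$. For (ii) it repeats the computation \eqref{prop_eq} to get $\dim\D=\dim\A-\dim\B$. It says nothing more about uniqueness, so it supports only your first reading: $\C$ is canonically determined by $\A$ and $\B$.

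Your fallback argument for the stronger reading has a false step, and that stronger claim is itself false for the max-angle metric \eqref{eq_dist_defn}. The equality $d(\A,\C')=d(\A,\B)$ does not force $\C'\ominus\B\subseteq\A$. Adding directions whose angle to $\A$ is at most $d(\A,\B)$ leaves the maximal angle unchanged. Here is a counterexample in $\R{4}$:
\begin{itemize}
\item Let $\A=\langle e_1,e_2\rangle$ and $\B=\langle \cos\theta\,e_1+\sin\theta\,e_3\rangle$ with $0<\theta<\pi/2$, so $\B\cap\A^\perp=\{0\}$.
\item Let $\C=\B+\langle e_2\rangle$ and $\C'=\B+\langle \cos\theta\,e_2+\sin\theta\,e_4\rangle$.
\item Both contain $\B$ and have dimension $2=\dim\A$.
\item $Q_\A^\top Q_{\C'}=\cos\theta\,I_2$, so $d(\A,\C')=\theta=d(\A,\C)=d(\A,\B)$, yet $\C'\neq\C$.
\end{itemize}
A strong uniqueness statement would at least need a distance that sees all principal angles. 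For example, under $\sum_k\theta_k^2$, $\C$ with angles $(0,\theta)$ beats $\C'$ with angles $(\theta,\theta)$. So drop that paragraph and state (ii) as well-definedness, which is effectively what the paper's proof establishes. The same example also contradicts Lemma~\ref{lem_dist_eq2} as stated: there $\A\cap\C'=\A\cap\B=\{0\}$, but $\C'\neq\B$. That lemma is the converse the paper later relies on for its strict inequalities.
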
\vspace*{-\baselineskip}
	\begin{pf}
		See Appendix. \hfill $\blacksquare$
	\end{pf}\vspace*{-\baselineskip}
	In the following lemma, we prove the converse statement of \cref{lem_dist_eq}. 
	\begin{lemma}\label{lem_dist_eq2}
		Consider three subspaces $\A$, $\B$, $\C{~\subseteq \R{M}}$ and let  $\B\subseteq\C$.	If $d(\A,\B) =d(\A,\C)$, then $\exists$ a subspace $\mathcal{D}\subseteq \A\cap\B^\perp$ such that $\C=\B \oplus \D$  with $\dim(\D)=\dim(\A\cap\C)-\dim(\A\cap\B)$.
	\end{lemma}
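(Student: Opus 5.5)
The plan is to take the orthogonal complement of $\B$ inside $\C$ as the candidate for $\D$. Then use the variational (singular value) description of the maximal angle to show that every direction added in passing from $\B$ to $\C$ must lie in $\A$.

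I will follow the conventions of \cref{lem_dist_eq}. Fix orthonormal bases $Q_\A$ and $Q_\B$, and complete $Q_\B$ to an orthonormal basis $[Q_\B\;\; Q_\D]$ of $\C$. Here $\D:=\C\cap\B^\perp$, so $\C=\B\oplus\D$ holds by construction with $\D\perp\B$. With this choice, $\cos d(\A,\B)=\sigma_{\min}(Q_\A^\top Q_\B)$ and $\cos d(\A,\C)=\sigma_{\min}(Q_\A^\top[Q_\B\;\;Q_\D])$. What remains to prove is $\D\subseteq\A$ and the dimension count. Since $\D\subseteq\B^\perp$ already, $\D\subseteq\A$ gives $\D\subseteq\A\cap\B^\perp$.

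The first step is to write the hypothesis $d(\A,\B)=d(\A,\C)$ as an equality of extremal Rayleigh quotients. For unit vectors $c=Q_\B\beta+Q_\D\gamma\in\C$, the quantity $\norm{Q_\A^\top c}^2$ splits into $\norm{Q_\A^\top Q_\B\beta}^2$, $\norm{Q_\A^\top Q_\D\gamma}^2$ and the cross term $2\beta^\top Q_\B^\top Q_\A Q_\A^\top Q_\D\gamma$. The second step is a Cauchy interlacing argument on the Gram matrix $[Q_\B\;\;Q_\D]^\top P_\A[Q_\B\;\;Q_\D]$, where $P_\A=Q_\A Q_\A^\top$ is the orthogonal projector onto $\A$. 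Its compression to the $\B$-block is $Q_\B^\top P_\A Q_\B$. When extremal eigenvalues of a compression and of the full matrix coincide, the corresponding eigenvectors decouple, so the off-diagonal block must vanish on the relevant eigenspaces. I would combine this with the fact that $P_\A$ has only the eigenvalues $0$ and $1$ to push the conclusion to $Q_\A^\top Q_\D\gamma=\gamma$, i.e. $P_\A Q_\D=Q_\D$, which is $\D\subseteq\A$.

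The last step is the dimension count. Since $\D\subseteq\A$ and $\D\perp\B$, we get $\A\cap\C=(\A\cap\B)\oplus\D$. The inclusion $\supseteq$ is clear. For $\subseteq$, write $c=b+d\in\A\cap\C$ with $b\in\B$, $d\in\D$; then $b=c-d\in\A$, so $b\in\A\cap\B$. This gives $\dim\D=\dim(\A\cap\C)-\dim(\A\cap\B)$.

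The main obstacle is the second step: showing that equality of the maximal angles forces the whole complement $\D$ into $\A$, not just an eigen-direction attaining the extremum. A priori, equality of a single extremal singular value only controls one direction. I would therefore first try to use the fact that the eigenvalues of $P_\A$ are exactly $0$ and $1$, so that adding a direction not in $\A$ strictly worsens the extremal quantity. If that fails, I would bring in the standing non-partial-orthogonality hypothesis used in \cref{lem_dist_eq} (B not partially orthogonal to A), since the converse is only needed in that setting. I would also check carefully which subspace plays the role of the smaller one in the ordering convention of \eqref{eq_dist_defn}.
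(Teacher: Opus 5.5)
\textbf{There is a genuine gap, and it cannot be closed: the statement is false as written.} Your reduction is sound up to the key step. The choice $\D=\C\cap\B^\perp$ is forced, since any $\D\subseteq\B^\perp$ with $\C=\B\oplus\D$ must equal it. Your dimension count $\A\cap\C=(\A\cap\B)\oplus\D$ is also correct once $\D\subseteq\A$ is known. But $\D\subseteq\A$ does not follow from the sole hypothesis $d(\A,\B)=d(\A,\C)$. Here is a counterexample in $\R{4}$:
\begin{itemize}
\item Take $\A=\langle e_1,e_2\rangle$, $b=\tfrac12e_1+\tfrac{\sqrt3}{2}e_3$, $c=\tfrac{\sqrt3}{2}e_2+\tfrac12e_4$, $\B=\langle b\rangle$ and $\C=\langle b,c\rangle$.
\item Then $Q_\A^\top[b\;\;c]=\mathrm{diag}(\tfrac12,\tfrac{\sqrt3}{2})$, so $d(\A,\C)=\pi/3=d(\A,\B)$.
\item Yet $\C\cap\B^\perp=\langle c\rangle\not\subseteq\A$.
\item Moreover $\A\cap\C=\A\cap\B=\{0\}$, so the asserted $\dim\D=0$ contradicts $\dim\C-\dim\B=1$.
\end{itemize}
Here $\B\cap\A^\perp=\C\cap\A^\perp=\{0\}$, so your fallback of invoking non-partial orthogonality does not help. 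The example also shows where your mechanism breaks. The Gram matrix $[b\;\;c]^\top P_\A[b\;\;c]=\mathrm{diag}(\tfrac14,\tfrac34)$ is perfectly decoupled, as your interlacing step would force, but the $\D$-block is $\tfrac34\neq1$. Equality of the smallest eigenvalue constrains only the bottom of the spectrum. In the decoupled case it merely requires the $\D$-block to satisfy $\succeq\cos^2 d(\A,\B)\,I$, not to equal $I$. So the claim that adding a direction not in $\A$ strictly increases the maximal angle is false.

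The paper's appendix argument does not supply the missing idea either. It writes $Q_\C^\top P_\A Q_\C=\mathrm{diag}(\mathcal N,I_\beta,I_{\gamma-\beta})$ with the \emph{same} block $\mathcal N$ as for $\B$, and then sets $Q_\C=[Q_\B\;d_1\cdots d_{\gamma-\beta}]$. That already presupposes $\dim\C-\dim\B=\gamma-\beta$, which is the conclusion.

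For $\B\subseteq\C$, the maximal-angle hypothesis only gives $d(\A,\C)\ge d(\A,\B)$. The lemma's conclusion holds if and only if $\C\cap\B^\perp\subseteq\A$. Equivalently, $\mathrm{tr}(Q_\C^\top P_\A Q_\C)=\mathrm{tr}(Q_\B^\top P_\A Q_\B)+\dim\C-\dim\B$, which is equality of the trace-type distance $\sum_k\sin^2\theta_k$ rather than of $\theta_{\max}$. Under that hypothesis your Gram-matrix decomposition finishes in one line. Off-diagonal blocks do not contribute to the trace, and $\mathrm{tr}(Q_\D^\top P_\A Q_\D)=\sum_k\norm{P_\A d_k}_2^2=\dim\D$ forces every $d_k\in\A$.

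The same example also breaks the strict inequality in statement (ii) of \cref{thm_Hi_near}, which relies on this lemma, for the max-angle metric. The subspaces $\langle b,e_2\rangle=\B\oplus(\A\cap\B^\perp)$ and $\langle b,c\rangle$ are distinct, both contain $\B$, and both lie at maximal angle $\pi/3$ from $\A$.
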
\vspace*{-\baselineskip}
	\begin{pf}
		See Appendix. \hfill $\blacksquare$
	\end{pf}\vspace*{-\baselineskip}
	The following theorem ensures that the subspaces $\mathcal{H}_i$'s constructed in \eqref{eq_Hi} are closest to $\ms$ among all subspaces that contain $\mt_i$ at the $i$-th instant.
	
	\begin{theorem}[Nearest to $\ms$ and consistent with data] \label{thm_Hi_near}
		Let  $\mt_i$, $\mathcal{H}_i$ be defined as in \eqref{eq_Hibar} and \eqref{eq_Hi} respectively. 	
		Let Assumption \ref{asump_pa} hold    and let $\dim(\mt_i)=i ~\forall~i\in\{1,2,\ldots,\Lambda\}$. Then, 
		\begin{enumerate}[label=(\roman*)]
			\item $d(\ms, \mathcal{H}_i) = d(\ms, \mt_i)$ \label{thm_H0i_p1}
			\item $d(\ms, \mathcal{H}_i) < d(\ms, \widetilde{\mathcal{H}}_i)$ for any subspace $\widetilde{\mathcal{H}}_i \neq \H_i$ such that $\mt_i \subseteq  \widetilde{\mathcal{H}}_i$ and $\dim(\widetilde{\mathcal{H}}_i)=\Lambda$.\label{thm_Hi_near_p1}
		\end{enumerate} 
	\end{theorem}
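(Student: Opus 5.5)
Part \ref{thm_H0i_p1} follows from \cref{lem_dist_eq} with $\A=\ms$, $\B=\mt_i$ and $\D=\ms\cap\mt_i^\perp$, so that $\C=\mt_i\oplus\D=\H_i$. The hypotheses need checking. By definition $\D\subseteq\A\cap\B^\perp$. Since $\dim(\ms)=\Lambda$ and $\dim(\H_i)=\Lambda$ by \cref{thm_dim_match}, we have $\dim(\A)\geqslant\dim(\C)$. The lemma also requires that $\B$ not be partially orthogonal to $\A$, i.e.\ $\mt_i\cap\ms^\perp=\{0\}$. This holds because $\mt_i\subseteq\mt_\Omega$ and $\mt_\Omega\cap\ms^\perp=\{0\}$ by Assumption \ref{asump_pa}. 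Part (i) of the lemma then gives $d(\ms,\mt_i)=d(\ms,\H_i)$, where the distance is the largest principal angle, measured from the smaller-dimensional subspace or between equal-dimensional ones.

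For part \ref{thm_Hi_near_p1}, I would first show the non-strict inequality and then characterize when equality holds. Let $\widetilde{\H}_i\supseteq\mt_i$ with $\dim(\widetilde{\H}_i)=\Lambda$. Monotonicity of the max--min angle under enlargement gives $d(\ms,\widetilde{\H}_i)\geqslant d(\ms,\mt_i)$. To see this, write $\theta_{\max}$ through singular values: $\cos d(\ms,\mathcal{X})=\sigma_{\min}(Q_{\mathcal{X}}^\top Q_\ms)$ for $\dim\mathcal{X}\leqslant\Lambda$. When columns are added to $Q_{\mathcal{X}}$, the smallest singular value of the resulting taller $\Lambda$-column-side matrix cannot increase, by the Cauchy interlacing / min--max characterization. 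Combined with (i), this gives $d(\ms,\widetilde{\H}_i)\geqslant d(\ms,\H_i)$.

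Now suppose equality holds, $d(\ms,\widetilde{\H}_i)=d(\ms,\mt_i)$. Apply \cref{lem_dist_eq2} with $\A=\ms$, $\B=\mt_i$, $\C=\widetilde{\H}_i$. It yields $\D'\subseteq\ms\cap\mt_i^\perp$ with $\widetilde{\H}_i=\mt_i\oplus\D'$. Counting dimensions, $\dim\D'=\Lambda-\dim\mt_i$, which by \eqref{prop_eq} equals $\dim(\ms\cap\mt_i^\perp)$. Hence $\D'=\ms\cap\mt_i^\perp$, so $\widetilde{\H}_i=\H_i$. This is also the uniqueness statement of \cref{lem_dist_eq}\ref{lem1_p2}. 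Therefore any $\widetilde{\H}_i\neq\H_i$ must satisfy strict inequality.

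The main obstacle is making the monotonicity step rigorous. The two comparisons involve subspaces of different dimensions ($\mt_i$ of dimension $i$ versus $\widetilde{\H}_i$ of dimension $\Lambda$), so I must be sure the ordering convention in \eqref{eq_dist_defn} is applied consistently; the paper's notation allows either order. I would argue directly from the definition, taking the maximization over unit vectors in the smaller space. Let $\theta_{\max}$ be the largest principal angle between $\ms$ and $\mt_i$. By the min--max characterization of singular values, the smallest singular value of $Q_\ms^\top Q_{\widetilde{\H}_i}$ is at most the smallest singular value of its column restriction $Q_\ms^\top Q_{\mt_i}$, after choosing $Q_{\widetilde{\H}_i}=[Q_{\mt_i}\;Q']$. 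This gives the inequality. Both matrices have full column rank by the non-partial-orthogonality assumption, so this comparison is valid.
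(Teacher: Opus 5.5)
Your part (i) is exactly the paper's argument: \cref{lem_dist_eq} with $\A=\ms$, $\B=\mt_i$, $\D=\ms\cap\mt_i^\perp$. Your check of its hypotheses, in particular $\mt_i\cap\ms^\perp\subseteq\mt_\Omega\cap\ms^\perp=\{0\}$, is sound. For part (ii) you also follow the paper's route: monotonicity, then \cref{lem_dist_eq2} for the equality case. You make explicit the monotonicity and the dimension count that the paper leaves implicit. The monotonicity step is fine; it is immediate from \eqref{eq_dist_defn}, since the maximum over unit $u\in\widetilde{\H}_i$ ranges over a superset of the unit vectors of $\mt_i$. Your claim that $Q_\ms^\top Q_{\widetilde{\H}_i}$ has full column rank by Assumption~\ref{asump_pa} is unjustified, because $\widetilde{\H}_i$ is arbitrary and may meet $\ms^\perp$. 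The min--max comparison does not need it, though.

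The genuine gap is the equality case. \cref{lem_dist_eq2} is false in general, and so is the strict inequality in (ii) whenever $n\geqslant 2$. Equality of the maximal angles only says that $\mt_i$ contains a principal vector of $(\ms,\widetilde{\H}_i)$ for the largest angle. It says nothing about the remaining directions of $\widetilde{\H}_i$, as long as their angles do not exceed that maximum. The appendix proof of \cref{lem_dist_eq2} simply posits a block-diagonal Gram matrix with $Q_\B$ as leading block and identity blocks elsewhere. That is the conclusion itself, and the hypothesis $d(\A,\B)=d(\A,\C)$ is never used.

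Here is a concrete counterexample with $i=1$.
\begin{itemize}
\item Write $\hat h:=h_1/\norm{h_1}_2=\cos\theta\, s_1+\sin\theta\, w_1$ with unit $s_1\in\ms$, unit $w_1\in\ms^\perp$, and $0<\theta<\pi/2$. This is the generic case $h_1\notin\ms$.
\item Pick unit $s_2\in\ms$ with $s_2\perp s_1$, and unit $w_2\in\ms^\perp$ with $w_2\perp w_1$; the latter is possible since $\dim\ms^\perp=n\geqslant 2$.
\item Let $\mathcal{R}:=\ms\cap\{s_1,s_2\}^\perp$, fix $0<\psi\leqslant\theta$, and set $\widetilde{\H}_1:=\langle \hat h,\ \cos\psi\, s_2+\sin\psi\, w_2\rangle\oplus\mathcal{R}$.
\end{itemize}
Complete $\{s_1,s_2\}$ and $\{\hat h,\cos\psi\, s_2+\sin\psi\, w_2\}$ by a common orthonormal basis of $\mathcal{R}$. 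Then $Q_\ms^\top Q_{\widetilde{\H}_1}=\mathrm{diag}(\cos\theta,\cos\psi,1,\ldots,1)$, so $d(\ms,\widetilde{\H}_1)=\theta=d(\ms,\mt_1)=d(\ms,\H_1)$.

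Yet $s_2\in\ms\cap\mt_1^\perp\subseteq\H_1$, while $s_2\notin\widetilde{\H}_1$. Indeed, $a\hat h+b(\cos\psi\, s_2+\sin\psi\, w_2)+r$ has $w_1$- and $w_2$-components $a\sin\theta$ and $b\sin\psi$, so both $a$ and $b$ would have to vanish. Hence $\widetilde{\H}_1\neq\H_1$. Moreover $\widetilde{\H}_1\cap\ms=\mathcal{R}$ has dimension $\Lambda-2$. So $\widetilde{\H}_1$ cannot be of the form $\mt_1\oplus\D'$ with $\D'\subseteq\ms$, which is exactly what you extracted from \cref{lem_dist_eq2}.

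What your argument does establish is $d(\ms,\H_i)\leqslant d(\ms,\widetilde{\H}_i)$, i.e.\ $\H_i$ is \emph{a} nearest subspace. Strictness does hold for $n=1$. Two $\Lambda$-dimensional subspaces of $\R{\Lambda+1}$ have at most one non-zero principal angle, and in the equality case its principal vector must lie in $\mt_i$, which forces $\widetilde{\H}_i=\H_i$.

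For uniqueness in general, use the chordal distance $\sum_k\sin^2\theta_k=\Lambda-\norm{Q_\ms^\top Q_{\widetilde{\H}_i}}_F^2$ instead of the maximal angle. Write $Q_{\widetilde{\H}_i}=[Q_{\mt_i}\;E]$ with $E^\top Q_{\mt_i}=0$. Then $\norm{Q_\ms^\top Q_{\widetilde{\H}_i}}_F^2=\norm{Q_\ms^\top Q_{\mt_i}}_F^2+\norm{Q_\ms^\top E}_F^2$ and $\norm{Q_\ms^\top E}_F^2\leqslant\Lambda-i$. Equality holds iff $\mathrm{Im}(E)\subseteq\ms\cap\mt_i^\perp$, which by \eqref{prop_eq} forces $\widetilde{\H}_i=\H_i$.
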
\vspace*{-0.5\baselineskip}
	\begin{pf}
		As $\mathcal{H}_i$, $\mt_i$ are related by \eqref{eq_Hi}, \ref{thm_H0i_p1} follows from \cref{lem_dist_eq}. 
		
		From \cref{lem_dist_eq} and \cref{lem_dist_eq2}, we conclude  that for all  subspaces $\mathcal{D}\nsubseteq \A\cap\B^\perp$, we get $d(\A,\B) <d(\A,\C)$ where $\C=\B \oplus \D$. Hence, $d(\ms, \mathcal{H}_i) < d(\ms, \widetilde{\mathcal{H}}_i)$. \hfill $\blacksquare$
	\end{pf}\vspace*{-\baselineskip}
	By the above theorem, it is now clear that our choice of subspace $\H_i$ (marked by a solid square) in \cref{fig_sim_sys1} is the closest to $\ms$ and consistent with the data $T_i$ (lying inside the red ellipse).
	\begin{theorem} \label{thm_H0star_near}
		Let  $\mt_i$, $\mathcal{H}_i$ be defined as in \eqref{eq_Hibar} and \eqref{eq_Hi}, respectively. 
		Let Assumption \ref{asump_pa} hold and $\dim(\mt_i)=i ~\forall~i\in\{1,2,\ldots,\Lambda\}$. Then, 	
		\begin{enumerate}[label=(\roman*)]
			\item $d(\mt_\Omega, \mathcal{H}_i) = d(\mt_\Omega, \mt_i^\perp \cap \ms )$. \label{thm_H0star_near_p1}
			\item $d(\mt_\Omega, \mathcal{H}_i) < d(\mt_\Omega, \widetilde{\mathcal{H}}_i)$ for any subspace $\widetilde{\mathcal{H}}_i \neq \H_i$ such that $(\mt_i^\perp\cap\ms) \subseteq  \widetilde{\mathcal{H}}_i$. \label{thm_H0star_near_p2}
		\end{enumerate}
	\end{theorem}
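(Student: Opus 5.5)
The plan is to apply \cref{lem_dist_eq} and \cref{lem_dist_eq2} with the roles of the subspaces chosen as $\A=\mt_\Omega$, $\B=\mt_i^\perp\cap\ms$ and $\D=\mt_i$. With these choices $\C=\B\oplus\D=\H_i$ by \eqref{eq_Hi}. This mirrors the proof of \cref{thm_Hi_near}, but with $\mt_\Omega$ as the reference subspace in place of $\ms$.

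For part \ref{thm_H0star_near_p1} I will verify the hypotheses of \cref{lem_dist_eq}.
\begin{enumerate}[label=(\alph*)]
\item $\D\subseteq\A\cap\B^\perp$: we have $\mt_i\subseteq\mt_\Omega$, and $\mt_i\perp(\mt_i^\perp\cap\ms)$ by construction.
\item $\B$ is not partially orthogonal to $\A$: $\B\cap\A^\perp=(\mt_i^\perp\cap\ms)\cap\mt_\Omega^\perp\subseteq\ms\cap\mt_\Omega^\perp=\{0\}$ by Assumption \ref{asump_pa}.
\item $\dim\A\geqslant\dim\C$: $\dim\A=\dim\mt_\Omega=\Lambda$, and $\dim\C=\dim\H_i=\Lambda$ by \cref{thm_dim_match}.
\end{enumerate}
\cref{lem_dist_eq}(i) then gives $d(\mt_\Omega,\H_i)=d(\mt_\Omega,\mt_i^\perp\cap\ms)$.

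For part \ref{thm_H0star_near_p2} the key step is to show that $\D$ is the \emph{whole} of $\A\cap\B^\perp$, i.e.\ $\mt_\Omega\cap(\mt_i+\ms^\perp)=\mt_i$. The inclusion $\supseteq$ is clear, so I will argue by dimension counting.
\begin{enumerate}[label=(\alph*)]
\item Since $\ms^\perp\cap\mt_i=\{0\}$, we get $\dim(\mt_i+\ms^\perp)=i+M-\Lambda$.
\item Since $\mt_i\subseteq\mt_\Omega$, we have $\mt_\Omega+\mt_i+\ms^\perp=\mt_\Omega+\ms^\perp$.
\item Assumption \ref{asump_pa} gives $\mt_\Omega\cap\ms^\perp=\{0\}$, so $\dim(\mt_\Omega+\ms^\perp)=\Lambda+(M-\Lambda)=M$.
\item Hence $\dim\big(\mt_\Omega\cap(\mt_i+\ms^\perp)\big)=\Lambda+(i+M-\Lambda)-M=i=\dim\mt_i$, which forces equality.
\end{enumerate}
\cref{lem_dist_eq}\ref{lem1_p2} then also confirms that $\H_i$ is the unique subspace of this form.

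Now let $\widetilde{\H}_i\supseteq\B$ with $\widetilde{\H}_i\neq\H_i$. I read the statement as also requiring $\dim\widetilde{\H}_i=\Lambda$, as in \cref{thm_Hi_near}; without it the comparison $d(\mt_\Omega,\cdot)$ between equal-dimensional subspaces is not the intended one.
\begin{enumerate}[label=(\alph*)]
\item Weak inequality: $d(\mt_\Omega,\widetilde{\H}_i)\geqslant d(\mt_\Omega,\B)=d(\mt_\Omega,\H_i)$. This holds because enlarging the smaller subspace in \eqref{eq_dist_defn} enlarges the set over which the maximum is taken, so the maximal angle cannot decrease.
\item Excluding equality: suppose equality held. \cref{lem_dist_eq2} would give $\widetilde{\H}_i=\B\oplus\D'$ with $\D'\subseteq\A\cap\B^\perp=\mt_i$.
\item Then $\dim\D'=\Lambda-\dim\B=\dim\mt_i$, so $\D'=\mt_i$ and $\widetilde{\H}_i=\H_i$, a contradiction. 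The inequality is therefore strict.
\end{enumerate}

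The main obstacle is the bookkeeping of which argument of $d$ is the smaller subspace. The monotonicity in step (a) and the direction in which \cref{lem_dist_eq2} is applied must be consistent with the ordering convention in \eqref{eq_dist_defn}. I would handle this by working throughout with $\sigma_{\min}(Q_{\mt_\Omega}^\top Q_{\cdot})$ as in \cref{lem_dist_eq}: since $\mt_\Omega$ always has dimension $\Lambda\geqslant$ the dimension of the other argument, adding columns to $Q_{\cdot}$ can only decrease $\sigma_{\min}$, i.e.\ increase the maximal angle.
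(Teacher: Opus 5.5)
Your part (i), your identity $\mt_\Omega\cap(\mt_i+\ms^\perp)=\mt_i$, and the weak inequality in step (a) are all correct, and the overall route is the paper's. The paper gets the identity directly from $\ms^\perp\cap\mt_\Omega=\{0\}$; your dimension count is equally valid. Your extra hypothesis $\dim\widetilde{\H}_i=\Lambda$ is genuinely needed, since $\widetilde{\H}_i=\mt_i^\perp\cap\ms$ or $\widetilde{\H}_i=\R{M}$ would otherwise violate (ii).

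The gap is step (b). \cref{lem_dist_eq2} is false as stated: equality of the \emph{maximal} principal angles only fixes $\sigma_{\min}(Q_\A^\top Q_\C)$. It says nothing about the further angles created by enlarging $\B$ to $\C$. For a counterexample in $\R{4}$, take $\A=\langle e_1,e_2\rangle$, $\B=\langle\cos\alpha\,e_1+\sin\alpha\,e_3\rangle$ and $\C=\B\oplus\langle\cos\beta\,e_2+\sin\beta\,e_4\rangle$ with $0<\beta<\alpha<\pi/2$. Then $Q_\A^\top Q_\C={\rm diag}(\cos\alpha,\cos\beta)$, so $d(\A,\C)=\alpha=d(\A,\B)$. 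Yet $\A\cap\C=\{0\}$, so $\C$ is not of the form $\B\oplus\D$ with $\D\subseteq\A$. The appendix proof breaks where it assumes the same block $\mathcal{N}$ appears in both Gram matrices. The paper's proof of (ii) does not avoid this either. It rewrites $\H_i=\B+(\B^\perp\cap\mt_\Omega)$ and defers to the argument of \cref{thm_Hi_near}; its self-reference is evidently meant to point there. That argument rests on the same lemma.

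No argument can close this gap, because the strict inequality in (ii) is false in general. Let $\B=\mt_i^\perp\cap\ms$ and $\theta=d(\mt_\Omega,\B)$, which is positive exactly when $\H_i\neq\mt_\Omega$. Pick unit vectors $t\in\mt_i$ and $w\in(\mt_\Omega+\B)^\perp$. Such a $w$ exists whenever $i>m$, because $\dim(\mt_\Omega+\B)\leqslant 2\Lambda-i=M-(i-m)$. For $0<\gamma\leqslant\theta$ set
$\widetilde{\H}_i=\B\oplus(\mt_i\cap t^\perp)\oplus\langle\cos\gamma\,t+\sin\gamma\,w\rangle$.
This subspace contains $\B$ and has dimension $\Lambda$. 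It differs from $\H_i$ because $w\perp\H_i$. With the orthonormal basis $Q=[\,Q_\B\ \ Q_{\mt_i\cap t^\perp}\ \ \cos\gamma\,t+\sin\gamma\,w\,]$, the matrix $Q^\top Q_{\mt_\Omega}Q_{\mt_\Omega}^\top Q$ is block diagonal, equal to ${\rm diag}(Q_\B^\top Q_{\mt_\Omega}Q_{\mt_\Omega}^\top Q_\B,\ {\rm I}_{i-1},\ \cos^2\gamma)$. Hence $d(\mt_\Omega,\widetilde{\H}_i)=\max(\theta,\gamma)=\theta=d(\mt_\Omega,\H_i)$.

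So what your argument actually establishes is the weak inequality: $\H_i$ is \emph{a} minimizer, not the unique one. You can recover uniqueness by using a distance that sees all principal angles, e.g.\ $\sum_k\sin^2\theta_k=\Lambda-\norm{Q_{\mt_\Omega}^\top Q_{\widetilde{\H}_i}}_F^2$. Put $\D'=\widetilde{\H}_i\cap\B^\perp$; then the squared norm splits as $\norm{Q_{\mt_\Omega}^\top Q_\B}_F^2+\norm{Q_{\mt_\Omega}^\top Q_{\D'}}_F^2$. The second term is at most $i$, with equality iff $\D'\subseteq\mt_\Omega\cap\B^\perp$. Your identity $\mt_\Omega\cap\B^\perp=\mt_i$ then forces $\widetilde{\H}_i=\H_i$.
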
\vspace*{-0.5\baselineskip}
	\begin{pf}
		By \cref{lem_dist_eq},  it follows that $ d(\mt_\Omega, \mt_i^\perp \cap \ms )=d(\mt_\Omega, (\mt_i^\perp \cap \ms) +\mt_i )$ as $\mt_i\subseteq\mt_\Omega$ and $\mt_i \perp (\mt_i^\perp \cap \ms)$.
		
		Next to prove \ref{thm_H0star_near_p2}, note that $\H_i$ can be written as
		\begin{equation}\label{eq_thm_Hostar_near1}
			\H_i=(\mt_i^\perp \cap \ms)+((\mt_i^\perp \cap \ms)^\perp\cap\mt_\Omega),
		\end{equation}
		as $ ((\mt_i^\perp \cap \ms)^\perp\cap\mt_\Omega)=((\mt_i +\ms^\perp)\cap\mt_\Omega)=\mt_i (\because \ms^\perp\cap \mt_\Omega =\{0\} \text{ by }$ 
		Assumption \ref{asump_pa}).
		
		Comparing \eqref{eq_thm_Hostar_near1} and \eqref{eq_Hi},  then by  \cref{thm_H0star_near} statement \ref{thm_H0star_near_p2}, \ref{thm_H0star_near_p2} follows. \hfill $\blacksquare$
	\end{pf}\vspace*{-0.5\baselineskip}
	The above theorem says that $\H_i$'s constructed as in \eqref{eq_Hi} is the closest to $\mt_\Omega$ among all subspaces containing $(\mt_i^\perp \cap \ms)$.
	Next, we study the monotonic properties of the distance between $\mathcal{H}_i$ and the subspaces $\ms$ and $\mt_\Omega$  as the availability of data increases.

	\subsection{Properties of $\H_i$ under increasing data availability}
	The following theorem establishes the first monotonicity property relating the distances between the constructed subspaces $\H_i$ and the similar system subspace $\ms$.
	\begin{theorem}[Monotonic divergence from $\ms$] \label{thm_H0i}
		Let  $\mt_i$, $\mathcal{H}_i$ be defined as in \eqref{eq_Hibar} and \eqref{eq_Hi} respectively. 	
		Let Assumption \ref{asump_pa} hold and let $\dim(\mt_i)=i ~\forall~i\in\{1,2,\ldots,\Lambda\}$. Then, 
		\begin{enumerate}[label=(\roman*)]
			\item  $d(\ms, \mathcal{H}_i) < d(\ms, \mathcal{H}_{j}) ~\forall ~\mathcal{H}_i\neq\mathcal{H}_{j}$ for $i<j\leqslant \Lambda$. \label{thm_H0i_p3}
			\item $d(\ms, \mathcal{H}_i) < d(\ms, \mathcal{T}_{\Omega}) ~\forall~i\in\{1,2,\ldots,\Lambda\}$ such that $\H_i\neq\mt_\Omega$.
		\end{enumerate} 
	\end{theorem}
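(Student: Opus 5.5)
The plan is to reduce both statements to Theorem~\ref{thm_Hi_near}. Part~\ref{thm_H0i_p3} will then follow from a nesting argument, and part~(ii) from the fact that $\mt_\Omega$ is itself an admissible competitor for every $i$.

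\textbf{Part~\ref{thm_H0i_p3}.} Fix $i<j\leqslant\Lambda$ with $\H_i\neq\H_j$.
\begin{enumerate}[label=(\alph*)]
\item Since $h_1,\dots,h_j$ are data vectors, $\mt_i\subseteq\mt_j$, and by \eqref{eq_Hi} $\mt_j\subseteq\H_j$. Hence $\H_j$ contains $\mt_i$.
\item By Theorem~\ref{thm_dim_match}, $\dim(\H_j)=\Lambda$.
\item So $\widetilde{\H}_i:=\H_j$ satisfies the hypotheses of Theorem~\ref{thm_Hi_near}\ref{thm_Hi_near_p1}: it contains $\mt_i$, has dimension $\Lambda$, and differs from $\H_i$.
\item That theorem gives $d(\ms,\H_i)<d(\ms,\H_j)$ immediately.
\end{enumerate}
Alternatively, one can argue through Theorem~\ref{thm_Hi_near}\ref{thm_H0i_p1}. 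It gives $d(\ms,\H_k)=d(\ms,\mt_k)$, and the max--min characterization \eqref{eq_dist_defn} shows $d(\ms,\mt_i)\leqslant d(\ms,\mt_j)$ because $\mt_i\subseteq\mt_j$. The strict inequality would still have to come from the uniqueness clause. I will therefore use the first route.

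\textbf{Part~(ii).} The argument is the same with $\widetilde{\H}_i:=\mt_\Omega$. We have $\mt_i\subseteq\mt_\Omega$ and $\dim(\mt_\Omega)=\Lambda$ (assuming informative data at $i=\Omega$, as in the setup). So whenever $\H_i\neq\mt_\Omega$, Theorem~\ref{thm_Hi_near}\ref{thm_Hi_near_p1} yields $d(\ms,\H_i)<d(\ms,\mt_\Omega)$.

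\textbf{Main obstacle: justifying the strictness in Theorem~\ref{thm_Hi_near}\ref{thm_Hi_near_p1}.} Its proof combines Lemmas~\ref{lem_dist_eq} and~\ref{lem_dist_eq2}. I would make that step explicit rather than merely cite it.
\begin{enumerate}[label=(\alph*)]
\item Suppose $\widetilde{\H}\supseteq\mt_i$, $\dim\widetilde{\H}=\Lambda$, and $d(\ms,\widetilde{\H})=d(\ms,\mt_i)$.
\item Lemma~\ref{lem_dist_eq2} then gives $\widetilde{\H}=\mt_i\oplus\D$ with $\D\subseteq\ms\cap\mt_i^\perp$.
\item By the dimension count \eqref{prop_eq}, $\dim(\ms\cap\mt_i^\perp)=\Lambda-i$. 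Also $\dim\D=\Lambda-i$, since $\dim\widetilde{\H}=\Lambda$, the sum is direct, and $\dim\mt_i=i$ by hypothesis.
\item Hence $\D=\ms\cap\mt_i^\perp$, and so $\widetilde{\H}=\H_i$.
\end{enumerate}
Consequently any admissible $\widetilde{\H}\neq\H_i$ must satisfy $d(\ms,\widetilde{\H})\geqslant d(\ms,\mt_i)$ with equality excluded, i.e.\ the strict inequality. The subtle point is that Lemma~\ref{lem_dist_eq2} is stated only for equal distances. We must also know the distance cannot drop below $d(\ms,\mt_i)$. This follows from monotonicity of the max--min in \eqref{eq_dist_defn} under enlargement of the smaller subspace: $\mt_i\subseteq\widetilde{\H}$ and both are compared against the fixed $\Lambda$-dimensional $\ms$.
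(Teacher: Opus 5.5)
\textbf{The gap is your step (b), and it cannot be closed.} Your reduction of both parts to Theorem~\ref{thm_Hi_near}\ref{thm_Hi_near_p1} is logically fine, but step (b) needs Lemma~\ref{lem_dist_eq2}, and that lemma is false as stated. Equality of the \emph{maximal} principal angle says nothing about directions of $\C$ whose angle to $\A$ is smaller.

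A counterexample in $\R{4}$: let $\ms=\langle e_1,e_2\rangle$, $b=\cos\theta\,e_1+\sin\theta\,e_3$ and $w=\cos\phi\,e_2+\sin\phi\,e_4$, with $0<\phi<\theta<\pi/2$. Then $Q_{\ms}^\top[\,b\;\;w\,]=\mathrm{diag}(\cos\theta,\cos\phi)$, so $d(\ms,\langle b\rangle)=d(\ms,\langle b,w\rangle)=\theta$. Yet $\langle b,w\rangle\cap\ms=\{0\}$, so no $\D\subseteq\ms\cap\langle b\rangle^\perp$ with $\langle b,w\rangle=\langle b\rangle\oplus\D$ exists. The appendix proof posits a block form of $Q_\C^\top Q_\A Q_\A^\top Q_\C$ that already assumes $\dim\C-\dim\B=\dim(\A\cap\C)-\dim(\A\cap\B)$, and it never uses the hypothesis $d(\A,\B)=d(\A,\C)$.

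The same configuration breaks the uniqueness you rely on. With $\mt_1=\langle b\rangle$ you get $\H_1=\langle b,e_2\rangle\neq\langle b,w\rangle$. Both are $2$-dimensional, both contain $\mt_1$, and both lie at distance $\theta$ from $\ms$.

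\textbf{The statement itself fails, and the paper's proof has the same hole.} The strict inequalities fail in general for the max-angle metric. The paper splits $\H_i=\mt_i+(\mt_j^\perp\cap\ms)+\mathcal{F}$ and $\H_j=\mt_i+\langle h_{i+1},\ldots,h_j\rangle+(\mt_j^\perp\cap\ms)$ instead of citing Theorem~\ref{thm_Hi_near}\ref{thm_Hi_near_p1}. Its strict step is nevertheless again Lemma~\ref{lem_dist_eq2}, so the two routes differ only in packaging.

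A counterexample satisfying the standing assumptions:
\begin{enumerate}
\item Take $n=m=2$ with both pairs decoupled, $A=\mathrm{diag}(a_1,a_2)$ and $B=\mathrm{diag}(b_1,b_2)$ with $b_k\neq 0$, and perturb both scalar subsystems slightly.
\item The coordinate blocks $(x_k,u_k,x_k^+)$ are mutually orthogonal. Hence $\ms=P_1^S\oplus P_2^S$ and $\mt_\Omega=P_1^T\oplus P_2^T$, with distinct planes $P_k^S,P_k^T\subset\R{3}$ meeting at dihedral angles, say, $\pi/2>\psi_1\geqslant\psi_2>0$. So Assumption~\ref{asump_pa} holds and $d(\ms,\mt_\Omega)=\psi_1$.
\item Take $x_2(0)=u_2(0)=0$, and choose $(x_1(0),u_1(0))$ so that $h_1\in P_1^T$ is orthogonal to $\ell:=P_1^S\cap P_1^T$. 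Then $\mt_1^\perp\cap\ms=\ell\oplus P_2^S$, hence $\H_1=P_1^T\oplus P_2^S\neq\mt_\Omega$.
\item Nevertheless $d(\ms,\H_1)=\psi_1=d(\ms,\mt_\Omega)$, which violates (ii).
\item Continuing with generic inputs gives $\dim\mt_k=k$ up to $k=\Lambda=4$. Thus $\H_4=\mt_\Omega$, and (i) fails for $(i,j)=(1,4)$.
\end{enumerate}
The max in $d$ saturates as soon as $\mt_i$ contains the worst direction of $\mt_\Omega$.

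\textbf{What survives.} Your argument does prove the weak version $d(\ms,\H_i)\leqslant d(\ms,\H_j)\leqslant d(\ms,\mt_\Omega)$, via your alternative route. Your reduction would also give a complete proof of the strict version if $d$ were replaced by the chordal distance $(\sum_k\sin^2\theta_k)^{1/2}$. To see this, write $\widetilde{\H}=\mt_i\oplus R$ with $R=\widetilde{\H}\cap\mt_i^\perp$. Then $\sum_k\cos^2\theta_k=\norm{Q_{\ms}^\top Q_{\mt_i}}_F^2+\norm{Q_{\ms}^\top Q_R}_F^2$. The last term attains its maximum $\Lambda-i$ iff $R\subseteq\ms\cap\mt_i^\perp$. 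By \eqref{prop_eq} that space has dimension $\Lambda-i$, so the maximum is attained iff $\widetilde{\H}=\H_i$.
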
\vspace*{-0.5\baselineskip}
	\begin{pf}
		To prove \ref{thm_H0i_p3}, we compare $\H_{j}={\mt_j}+ (\mt_j^\perp \cap \ms)$ and $\mathcal{H}_i = \mt_i + (\mt_i^\perp \cap \ms) $ where $i<j$. Under the assumption $\dim(\mt_i)=i$,   $\mt_{j}\supset\mt_{i}$ and $(\mt_j^\perp \cap \ms)\subset (\mt_i^\perp \cap \ms)$ holds, where the latter inclusion holds because $\mt_j^\perp\subset \mt_i^\perp$ and $\dim(\mt_j^\perp \cap \ms)=\Lambda-j<\dim(\mt_i^\perp \cap \ms)=\Lambda-i$ for $i<j$. Then $\exists$ a subspace $\mathcal{F}$ of dimension $j-i$ such that $(\mt_i^\perp \cap \ms)=(\mt_j^\perp \cap \ms)+\mathcal{F}$. Hence, $\H_i$ can be rewritten as $\H_i=\mt_i+(\mt_j^\perp \cap \ms)+\mathcal{F}$. Similarly, \begin{align}\H_j=\mt_i+\langle h_{i+1},\ldots,h_j\rangle +(\mt_j^\perp \cap \ms),\label{eq_pf_thm4}\end{align} 
		where $\mt_j=\mt_i+\langle h_{i+1},\ldots,h_j\rangle$ due to \eqref{eq_Hibar}.
		It is easy to show that, when $\mathcal{H}_i\neq\mathcal{H}_{j}$, $\mathcal{F}\neq\langle h_{i+1},\ldots,h_j\rangle$. 
		\begin{align*}
			\hspace{-1.5mm}d(\ms, \mathcal{H}_i)\stackrel{(a)}{=}	d(\ms, \mt_i) &\stackrel{(b)}{=} d(\ms, \mt_i+(\mt_j^\perp \cap \ms) )\\
			&< d(\ms, \mt_i+(\mt_j^\perp \cap \ms)+ \langle h_{i+1},\ldots,h_j\rangle)\\&=d(\ms, \mathcal{H}_j) (\text{from }\eqref{eq_pf_thm4}) 
		\end{align*} Equality $(a)$ is due to \cref{thm_Hi_near} statement \ref{thm_H0i_p1} and equality  $(b)$ holds due to \cref{lem_dist_eq} as $(\mt_j^\perp \cap \ms) \subset(\mt_i^\perp \cap \ms)$.  
		The inequality holds due to \cref{lem_dist_eq2} as $\langle h_{i+1},\ldots,h_j\rangle\nsubseteq(\mt_i^\perp \cap \ms)$.\\
		$d(\ms, \mathcal{H}_i) < d(\ms, \mathcal{T}_{\Omega})$ also follows in a similar way. \hfill $\blacksquare$
	\end{pf}\vspace*{-0.5\baselineskip}
	As a direct consequence of the above theorem, we obtain,
	$$\underbrace{d(\ms, \mathcal{H}_{0})}_{=0}<d(\ms,\H_1)<\cdots<d(\ms,\H_{\Lambda-1})<\underbrace{d(\ms,\H_\Lambda)}_{=d(\ms, \mathcal{T}_{\Omega})}$$
	assuming that a new direction of $\mt_\Omega$ is obtained at each stage. The above theorem implies that, as additional data become available, the proposed model sequentially diverges from the similar system.

	In the following theorem, it will be shown that the proposed model shifts closer to the true system subspace sequentially.

	\begin{theorem}[Monotonic convergence to $\mt_\Omega$] \label{thm_H0star}
		Let  $\mt_i$, $\mathcal{H}_i$ be defined as in \eqref{eq_Hibar} and \eqref{eq_Hi}, respectively. 
		Let Assumption \ref{asump_pa} hold and $\dim(\mt_i)=i ~\forall~i\in\{1,2,\ldots,\Lambda\}$. Then, 	
		\begin{enumerate}[label=(\roman*)]
			\item $d(\mt_\Omega, \mathcal{H}_i) > d(\mt_\Omega, \mathcal{H}_{j}) ~\forall ~\mathcal{H}_i\neq\mathcal{H}_{j}$ for $i<j\leqslant \Lambda$.\label{thm_H0star_p2}
			\item $d(\mt_\Omega, \ms) > d(\mt_\Omega, \mathcal{H}_i) ~\forall~i\in\{1,2,\ldots,\Lambda\}$ such that $\H_i\neq\mt_\Omega$.\label{thm_H0star_p3}
		\end{enumerate}
	\end{theorem}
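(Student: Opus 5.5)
This will mirror the proof of \cref{thm_H0i}, with the roles of $\ms$ and $\mt_\Omega$ interchanged and \cref{thm_H0star_near} used in place of \cref{thm_Hi_near}. Write $\mathcal{S}_i:=\mt_i^\perp\cap\ms$. For $i<j$ we have $\mt_j^\perp\subset\mt_i^\perp$, so $\mathcal{S}_j\subseteq\mathcal{S}_i$. By \eqref{prop_eq}, $\dim\mathcal{S}_i=\Lambda-i$, hence $\mathcal{S}_i=\mathcal{S}_j\oplus\mathcal{F}$ for some $\mathcal{F}$ of dimension $j-i$. By \cref{thm_H0star_near}\ref{thm_H0star_near_p1}, $d(\mt_\Omega,\H_i)=d(\mt_\Omega,\mathcal{S}_i)$ and $d(\mt_\Omega,\H_j)=d(\mt_\Omega,\mathcal{S}_j)$. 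It therefore suffices to show
\[
d(\mt_\Omega,\mathcal{S}_j)<d(\mt_\Omega,\mathcal{S}_j\oplus\mathcal{F})=d(\mt_\Omega,\mathcal{S}_i).
\]

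\textbf{Step 1: apply \cref{lem_dist_eq2} in contrapositive form.} Take $\A=\mt_\Omega$, $\B=\mathcal{S}_j$, $\C=\mathcal{S}_i\supseteq\B$. Enlarging $\B$ can only increase the distance, so we need strictness. Suppose equality held. Then \cref{lem_dist_eq2} gives $\mathcal{S}_i=\mathcal{S}_j\oplus\D$ with $\D\subseteq\mt_\Omega\cap\mathcal{S}_j^\perp$. Since $\mathcal{S}_j^\perp=\mt_j+\ms^\perp$, Assumption \ref{asump_pa} (via the identity in the proof of \cref{thm_H0star_near}) gives $\mt_\Omega\cap\mathcal{S}_j^\perp=\mt_j$. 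Hence $\D\subseteq\mt_j\cap\ms$. But $\D\subseteq\mathcal{S}_i\subseteq\mt_i^\perp$, so $\D\subseteq\mt_j\cap\mt_i^\perp$, and also $\D\subseteq\ms$.

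\textbf{Step 2: derive a contradiction from $\H_i\neq\H_j$.} This is the step I expect to be the main obstacle. From Step 1, $\mathcal{S}_i=\mathcal{S}_j\oplus\D$ with $\D\subseteq\mt_j$, so
\[
\H_i=\mt_i+\mathcal{S}_j+\D\subseteq\mt_j+\mathcal{S}_j=\H_j .
\]
Both have dimension $\Lambda$ by \cref{thm_dim_match}, so $\H_i=\H_j$, contradicting the hypothesis. One point needs care: \cref{lem_dist_eq2} must deliver $\dim\D=j-i$, which should follow from dimension counting, $\dim\mathcal{S}_i-\dim\mathcal{S}_j=j-i$. The inclusion argument itself only needs $\D\subseteq\mt_j$, so this is the cleanest route. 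If \cref{lem_dist_eq2}'s dimension statement proves awkward, I would instead argue directly: equality of the maximal angles forces the extra directions of $\mathcal{S}_i$ beyond $\mathcal{S}_j$ to lie in $\mt_\Omega\cap\mathcal{S}_j^\perp$.

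\textbf{Step 3: part \ref{thm_H0star_p3}.} Use the same argument with $\H_0=\ms$ (\cref{coro_converge}\ref{coro_s2}), that is, $i=0$ and $\mathcal{S}_0=\ms$. By \cref{thm_H0star_near}\ref{thm_H0star_near_p1}, $d(\mt_\Omega,\H_j)=d(\mt_\Omega,\mathcal{S}_j)$. Equality with $d(\mt_\Omega,\ms)$ would, by Step 1, give $\ms=\mathcal{S}_j\oplus\D$ with $\D\subseteq\mt_j$. Then $\ms\subseteq\H_j$, and equal dimensions force $\H_j=\ms$. Combined with $\dim$ counting this would also force $\ms=\mt_\Omega$ in the extreme case. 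So the strict inequality holds whenever $\H_j\neq\ms$, in particular when $\H_j\neq\mt_\Omega$ (excluding the trivial case $\ms=\mt_\Omega$).
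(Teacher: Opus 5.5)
Your route for \ref{thm_H0star_p2} is essentially the paper's. Both use \cref{thm_H0star_near}\ref{thm_H0star_near_p1} and \cref{lem_dist_eq} to reduce to a nested pair containing $\mt_j^\perp\cap\ms$: the paper compares $(\mt_j^\perp\cap\ms)+\mt_i\subseteq\H_i$, you compare $\mathcal{S}_j\subseteq\mathcal{S}_i$. Both then take strictness from \cref{lem_dist_eq2}. Your contradiction step ($\D\subseteq\mt_\Omega\cap\mathcal{S}_j^\perp=\mt_j$, hence $\H_i\subseteq\H_j$) is a cleaner version of the paper's ``$\mathcal{F}\nsubseteq\mt_j$''. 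There are, however, genuine gaps.

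The first gap is in Step 3. Even granting \cref{lem_dist_eq2}, your argument yields strictness only when $\H_j\neq\ms$, and ``in particular when $\H_j\neq\mt_\Omega$'' does not follow. The excluded case really occurs. Since $\dim(\ms\cap\mt_\Omega)\geqslant 2\Lambda-M=m\geqslant1$, one may start from a nonzero $\smat{x\t(0)\\u\t(0)}\in\ker{[A\t\;B\t]-[A\s\;B\s]}$, which makes $h_1\in\ms\cap\mt_\Omega$. Then $\mt_1\subseteq\ms$, so the paper's special-case theorem ($\mt_i\subseteq\ms\Rightarrow\H_i=\ms$) gives $\H_1=\ms$. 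Hence $d(\mt_\Omega,\H_1)=d(\mt_\Omega,\ms)$ although $\H_1\neq\mt_\Omega$ (for $\ms\neq\mt_\Omega$). So \ref{thm_H0star_p3} cannot be proved as stated.

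The more fundamental problem is that \cref{lem_dist_eq2} is false, and both you and the paper rely on it for strictness. Equality of the \emph{largest} principal angle does not force $\C=\B\oplus\D$ with $\D\subseteq\A\cap\B^\perp$. In $\R{4}$, take $\A=\langle e_1,e_2\rangle$, $\B=\langle\cos\theta\,e_1+\sin\theta\,e_3\rangle$ and $\C=\B+\langle\cos\theta\,e_2+\sin\theta\,e_4\rangle$. Then $d(\A,\B)=d(\A,\C)=\theta$, yet $\A\cap\C=\{0\}$; the dimension formula you found awkward is the symptom.

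Your fallback (``equality of the maximal angles forces the extra directions of $\mathcal{S}_i$ into $\mt_\Omega\cap\mathcal{S}_j^\perp$'') is exactly this false step. Equality only says that some maximizing direction $s^\ast$ of $\mathcal{S}_i$ already lies in $\mathcal{S}_j$, which holds whenever $h_{i+1},\dots,h_j\perp s^\ast$. Moreover, if $\mathcal{S}_i\cap\mt_\Omega=\{0\}$, then $\H_i\cap\mt_\Omega=\mt_i\subsetneq\mt_j$, so $\H_i\neq\H_j$.

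Concretely, take $n=2$, $m=1$, $A\t=\smat{0&1\\0&0}$, $A\s=\smat{1&1\\0&1}$, $B\t=B\s=\smat{0\\1}$, $x\t(0)=\smat{1\\0}$, $u\t(0)=1$, $u\t(1)=-1$. Assumption~\ref{asump_pa} holds, and any $u\t(2)\neq2$ gives $\dim\mt_3=3$. The principal angles of $\mathcal{S}_1$ to $\mt_\Omega$ are $\arcsin(1/\sqrt{3})$ and $\arcsin(1/\sqrt{7})$. The maximizer $s^\ast=(-1,1,0,0,1)^\top$ is orthogonal to $h_2=(0,1,-1,1,-1)^\top$. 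Therefore $d(\mt_\Omega,\H_1)=d(\mt_\Omega,\H_2)=\arcsin(1/\sqrt{3})$ while $\H_1\neq\H_2$.

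Part \ref{thm_H0star_p3} fails in the same way even when $\H_1\neq\ms$. Take $h_1\notin\ms$ orthogonal to a maximizing direction of $\ms$, which is possible once ${\rm rank}([A\t\;B\t]-[A\s\;B\s])\geqslant2$. This gives $d(\mt_\Omega,\H_1)=d(\mt_\Omega,\ms)$.

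Hence the strict inequalities fail in general, and no proof can close this gap. What your reduction does establish is the non-strict chain $d(\mt_\Omega,\H_j)\leqslant d(\mt_\Omega,\H_i)\leqslant d(\mt_\Omega,\ms)$. This follows immediately from $\mathcal{S}_j\subseteq\mathcal{S}_i\subseteq\ms$ and \cref{thm_H0star_near}\ref{thm_H0star_near_p1}.
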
\vspace*{-0.5\baselineskip}
	\begin{pf}
		As shown in the proof of Theorem \ref{thm_H0i}, we can write $\H_i=\mt_i+(\mt_j^\perp \cap \ms)+\mathcal{F}$, $\H_j=\mt_i+\langle h_{i+1},\ldots,h_j\rangle +(\mt_i^\perp \cap \ms)$ for $i<j$ and evidently, when $\mathcal{H}_i\neq\mathcal{H}_{j}$, $\mathcal{F}\neq\langle h_{i+1},\ldots,h_j\rangle$.
		Then,%
		\begin{align}\label{eq_thm_H0star3}
			\hspace*{-4mm}    \begin{aligned}
				d(\mt_\Omega, \mathcal{H}_{j})\hspace{-0.5mm}&\stackrel{(a)}{=}\hspace{-0.5mm}d(\mt_\Omega, \mt_j^\perp \cap \ms ) \hspace{-0.5mm}\stackrel{(b)}{=}\hspace{-0.5mm} d(\mt_\Omega, (\mt_j^\perp \cap \ms) + \mt_i) \\ 
				&< d(\mt_\Omega, (\mt_j^\perp \cap \ms) + \mt_i + \mathcal{F})\\&=d(\mt_\Omega, \mathcal{H}_i).
			\end{aligned}
		\end{align}%
		Here $(\mt_j^\perp \cap \ms )^\perp = \mt_j +\ms^\perp$ and hence, $(\mt_j^\perp \cap \ms )^\perp\cap \mt_\Omega = \mt_j$ as $\ms^\perp \cap \mt_\Omega=\{0\}$ due to Assumption \ref{asump_pa}. Equality $(a)$ in \eqref{eq_thm_H0star3}  is due to \cref{thm_H0star_near} statement \ref{thm_H0star_near_p1} and equality  $(b)$
		holds due to \cref{lem_dist_eq} as ${\mt_i \subset\mt_j}=(\mt_j^\perp \cap \ms )^\perp\cap \mt_\Omega$.
		The inequality in \eqref{eq_thm_H0star3} follows due to \cref{lem_dist_eq2}, as $\mathcal{F}\nsubseteq {\mt_j}$. The last equality in \eqref{eq_thm_H0star3} follows since $\H_i=\mt_i+(\mt_j^\perp \cap \ms)+\mathcal{F}$.
		
		One can show $d(\mt_\Omega, \ms) > d(\mt_\Omega, \mathcal{H}_i)$  using similar arguments and hence is omitted. \hfill $\blacksquare$
	\end{pf}\vspace*{-0.5\baselineskip}
	\begin{figure} \centering
		\includegraphics[height=0.16\textwidth, width =0.28\textwidth]{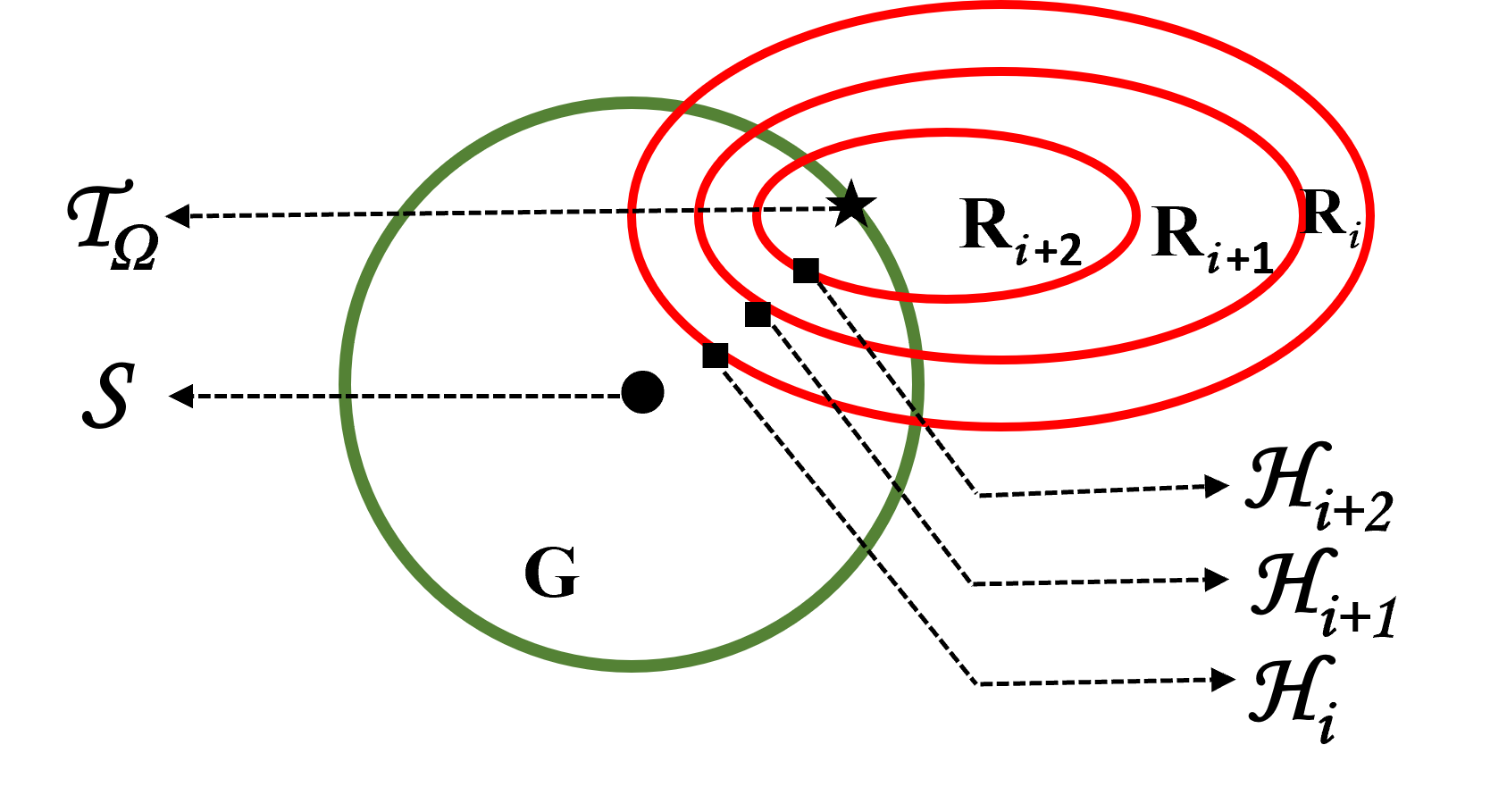}
		\caption{Variation of $d(\ms,\H_i)$ and $d(\mt_\Omega,\H_i)$}
		\label{fig_sim_sys2}
	\end{figure}
	As a consequence of above theorem, we have, \begin{align*} \hspace{-1mm}d(\mt_\Omega, \ms)\hspace{-1mm}>d(\mt_\Omega, \mathcal{H}_1)\hspace{-1mm}>\cdots>d(\mt_\Omega, \mathcal{H}_{\Lambda-1})\hspace{-1mm}>d(\mt_\Omega, \mathcal{H}_{\Lambda})=0, \end{align*} 
	assuming that we get a new direction of $\mt_\Omega$ at each stage. 
	\cref{fig_sim_sys2}  illustrates the variation of  $d(\ms,\H_i)$ and $d(\mt_\Omega,\H_i)$ with increasing $i$. 
	Note that, as $i$ increases, $d(\ms,\H_i)$ increases according to \cref{thm_H0i} and $d(\mt_\Omega,\H_i)$ decreases according to \cref{thm_H0star}. We emphasize that the red ellipse and the green circle in \cref{fig_sim_sys2} are purely illustrative in nature, included only to support the clear interpretation of our result.
	
	\subsection{Special cases arising from the proposed framework}
	Next, we see some special cases of our setup.  The following result deals with a special case when $\H_i$ constructed matches with $\mt_\Omega$ at some intermediate stage when we have insufficient data.
	\begin{lemma}\label{lem_eq}
		Let $\dim(\mt_i)=i ~\forall~i\in\{1,2,\ldots,\Lambda\}$, let Assumption \ref{asump_pa} hold and $\mt_i\neq\mt_\Omega$. Then, the following are equivalent
		\begin{enumerate}[label=(\arabic*)]
			\item $\mt_i^\perp \cap \ms=\mt_i^\perp \cap \mt_\Omega$.\label{lem_eq_p1}
			\item $\H_i=\mt_\Omega$.  \label{lem_eq_p2}
			\item $d(\H_i,\ms)=d(\mt_\Omega,\ms)$.\label{lem_eq_p3}
			\item $d(\mt_i,\ms)=d(\mt_\Omega,\ms)$.\label{lem_eq_p4}
		\end{enumerate}
	\end{lemma}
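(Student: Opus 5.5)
The plan is to prove the cycle \ref{lem_eq_p1}$\Rightarrow$\ref{lem_eq_p2}$\Rightarrow$\ref{lem_eq_p3}$\Rightarrow$\ref{lem_eq_p4}$\Rightarrow$\ref{lem_eq_p2}, and then close the loop with a separate argument for \ref{lem_eq_p2}$\Rightarrow$\ref{lem_eq_p1}. The tools are the dimension count \eqref{prop_eq} from the proof of \cref{thm_dim_match}, \cref{thm_Hi_near}\ref{thm_H0i_p1}, and \cref{lem_dist_eq2}.

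\emph{Step \ref{lem_eq_p1}$\Rightarrow$\ref{lem_eq_p2}.} Substitute \ref{lem_eq_p1} into \eqref{eq_Hi} to get $\H_i=\mt_i+(\mt_i^\perp\cap\mt_\Omega)$. Since $\mt_i\subseteq\mt_\Omega$, the space $\mt_\Omega$ splits orthogonally as $\mt_i\oplus(\mt_i^\perp\cap\mt_\Omega)$. Hence $\H_i=\mt_\Omega$.

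\emph{Step \ref{lem_eq_p2}$\Rightarrow$\ref{lem_eq_p1}.} If $\H_i=\mt_\Omega$, then $\mt_i^\perp\cap\ms\subseteq\H_i\cap\mt_i^\perp=\mt_i^\perp\cap\mt_\Omega$. By \eqref{prop_eq}, which uses Assumption \ref{asump_pa}, the left-hand side has dimension $\Lambda-i$. The right-hand side also has dimension $\dim\mt_\Omega-\dim\mt_i=\Lambda-i$. An inclusion between subspaces of equal dimension is an equality.

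\emph{Steps \ref{lem_eq_p2}$\Rightarrow$\ref{lem_eq_p3}$\Rightarrow$\ref{lem_eq_p4}.} The first implication is immediate. The second follows from \cref{thm_Hi_near}\ref{thm_H0i_p1}, which gives $d(\ms,\H_i)=d(\ms,\mt_i)$.

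\emph{Step \ref{lem_eq_p4}$\Rightarrow$\ref{lem_eq_p2}.} This is the only substantive step. Apply \cref{lem_dist_eq2} with $\A=\ms$, $\B=\mt_i$ and $\C=\mt_\Omega$; the inclusion $\B\subseteq\C$ holds. The equality of distances in \ref{lem_eq_p4} then yields a subspace $\D\subseteq\ms\cap\mt_i^\perp$ with $\mt_\Omega=\mt_i\oplus\D$. Counting dimensions, $\dim\D=\Lambda-i$. By \eqref{prop_eq}, $\dim(\ms\cap\mt_i^\perp)=\Lambda-i$ as well, so $\D=\ms\cap\mt_i^\perp$. Therefore $\mt_\Omega=\mt_i+(\mt_i^\perp\cap\ms)=\H_i$.

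The main obstacle is checking that \cref{lem_dist_eq2} applies. The paper's distance $d(\cdot,\cdot)$ is order-dependent: it is measured from the lower-dimensional subspace. In $d(\ms,\mt_i)$ that subspace is $\mt_i$, while in $d(\ms,\mt_\Omega)$ the two spaces have equal dimension $\Lambda$. So I must confirm that the lemma's hypothesis is meant with $\B,\C$ playing the role of the smaller space in both distances. I would verify this by reading both quantities as the largest principal angle, i.e.\ $\max_{v\in\B}\min\angle(v,\A)$, which is monotone nondecreasing in $\B$. Under that reading the hypothesis of \cref{lem_dist_eq2} is exactly \ref{lem_eq_p4}.

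The hypothesis $\mt_i\neq\mt_\Omega$ only ensures the situation is non-trivial, i.e.\ $i<\Lambda$. The argument itself goes through regardless.
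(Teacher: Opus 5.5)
Your route is the paper's apart from bookkeeping. You get \ref{lem_eq_p1}$\Leftrightarrow$\ref{lem_eq_p2} from the splitting $\mt_\Omega=\mt_i\oplus(\mt_i^\perp\cap\mt_\Omega)$ and the count \eqref{prop_eq}, and \ref{lem_eq_p2}$\Rightarrow$\ref{lem_eq_p3}$\Rightarrow$\ref{lem_eq_p4} via \cref{thm_Hi_near}\ref{thm_H0i_p1}. The return step uses \cref{lem_dist_eq2} with $\A=\ms$, $\B=\mt_i$, $\C=\mt_\Omega$ plus a dimension count; the paper lands on \ref{lem_eq_p1} rather than \ref{lem_eq_p2}, which is immaterial. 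Everything before the return step is correct.

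The return step fails, and not because of the ordering convention you worried about. Your reading of $d$ as the largest principal angle, monotone in the smaller space, is right. The problem is that \cref{lem_dist_eq2} is false. Equality of the \emph{largest} principal angle only says that enlarging $\B$ to $\C$ creates no direction worse than the worst one already in $\B$. The added directions may make any angle in $(0,d(\A,\B)]$ with $\A$, so nothing forces them into $\A$. For example, in $\R{4}$ take $\A=\langle e_1,e_2\rangle$, $\B=\langle\cos\theta\,e_1+\sin\theta\,e_3\rangle$ and $\C=\B+\langle\cos\phi\,e_2+\sin\phi\,e_4\rangle$ with $0<\phi<\theta<\pi/2$. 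Then $d(\A,\B)=d(\A,\C)=\theta$, but $\A\cap\C=\{0\}$, so $\C\neq\B\oplus\D$ for every $\D\subseteq\A$. The monotonicity you invoke is exactly why a max-type quantity cannot certify that the new directions lie in $\ms$.

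No other argument can fill this gap, because \ref{lem_eq_p4}$\Rightarrow$\ref{lem_eq_p2} is false under the stated hypotheses. Take $n=2$, $m=1$ (so $M=5$, $\Lambda=3$), with $[A\t~B\t]-[A\s~B\s]$ small and of rank $2$.
\begin{itemize}
\item Then $\ms\cap\mt_\Omega$ is a line, and the principal angles of $(\ms,\mt_\Omega)$ are $0,\phi,\theta$ with $0<\phi\leqslant\theta<\pi/2$. In particular Assumption \ref{asump_pa} holds.
\item The principal vectors $u_1=v_1$, $(u_2,v_2)$, $(u_3,v_3)$ satisfy $u_j^\top v_k=\cos\theta_k\,\delta_{jk}$, where $\theta_1=0$, $\theta_2=\phi$, $\theta_3=\theta$.
\item Every vector of $\mt_\Omega$ has the form $(z,[A\t~B\t]z)$. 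So you can choose $(x\t(0),u\t(0))$ to make $h_1=v_3$, then continue with generic inputs so that $\dim(\mt_i)=i$.
\item Now $d(\mt_1,\ms)=\theta=d(\mt_\Omega,\ms)$, so \ref{lem_eq_p4} and \ref{lem_eq_p3} hold.
\item However, $\mt_1^\perp\cap\ms=\langle u_1,u_2\rangle$, so $\H_1=\langle v_3,u_1,u_2\rangle$. This is not $\mt_\Omega$, because $u_2$ has a component of norm $\sin\phi>0$ orthogonal to $\mt_\Omega$.
\end{itemize}
The same configuration refutes \cref{lem_dist_eq2}.

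What is actually provable is \ref{lem_eq_p1}$\Leftrightarrow$\ref{lem_eq_p2}$\Rightarrow$\ref{lem_eq_p3}$\Leftrightarrow$\ref{lem_eq_p4}. To get a genuine equivalence you need a condition that sees all the angles. One such condition, for orthonormal bases, is $\norm{Q_{\ms}^\top Q_{\mt_\Omega}}_F^2=\norm{Q_{\ms}^\top Q_{\mt_i}}_F^2+\Lambda-i$. Write $\mt_\Omega=\mt_i\oplus\mathcal{W}$ with $\mathcal{W}=\mt_i^\perp\cap\mt_\Omega$. The condition then says $\norm{Q_{\ms}^\top Q_{\mathcal{W}}}_F^2=\dim\mathcal{W}$, i.e.\ $\mathcal{W}\subseteq\ms$. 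Your dimension count from \eqref{prop_eq} then gives \ref{lem_eq_p1}.
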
\vspace*{-0.5\baselineskip}
	\begin{pf}
		\ref{lem_eq_p1} $\Rightarrow$ \ref{lem_eq_p2} When $\mt_i^\perp \cap \ms=\mt_i^\perp \cap \mt_\Omega$, clearly we have $\H_i \subseteq \mt_\Omega$ as $(\mt_i^\perp \cap \ms) \subseteq \mt_\Omega$ holds in addition to $\mt_i \subseteq \mt_\Omega$. Also as $\dim(\H_i)=\dim(\mt_\Omega)$, we get $\H_i=\mt_\Omega$.\\
		\ref{lem_eq_p2} $\Rightarrow$ \ref{lem_eq_p3}  follows directly.\\
		\ref{lem_eq_p3} $\Rightarrow$ \ref{lem_eq_p4} follows from \cref{thm_Hi_near}, statement \ref{thm_H0i_p1}.\\
		\ref{lem_eq_p4} $\Rightarrow$ \ref{lem_eq_p1} 
		By \cref{lem_dist_eq2}, $\exists$ a subspace $\mathcal{F}$ such that $\mt_\Omega=\mt_i+\mathcal{F}$ such that $\mathcal{F}\subseteq(\ms\cap\mt_i^\perp)$. But $\dim(\mathcal{F})=\dim(\ms\cap\mt_{i}^\perp)$ and hence $\mathcal{F}=(\ms\cap\mt_i^\perp)$. Therefore, $\mt_\Omega=\mt_i+(\ms\cap\mt_i^\perp)$. Now, $(\mt_i^\perp\cap\mt_\Omega)=(\mt_i^\perp\cap\ms)$ as $(\mt_i^\perp\cap\mt_i)=\{0\}$. \hfill $\blacksquare$
	\end{pf}\vspace*{-0.5\baselineskip}
	The illustration of the above scenario is shown in \cref{fig_spl_case1}.
	In that case, all subsequently constructed $\H_i$'s will be identical and equal to $\mt_\Omega$.
	The following theorem guarantees that.
	\begin{theorem}
		Let $\dim(\mt_i)=i ~\forall~i\in\{1,2,\ldots,\Lambda\}$ and  let Assumption \ref{asump_pa} hold. For some $\mt_i$,  if the conditions in \cref{lem_eq} holds, then 
		$\H_j=\mt_\Omega ~\forall~j$ satisfying $i\leqslant j\leqslant \Lambda$.
	\end{theorem}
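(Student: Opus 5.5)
Since the conditions of \cref{lem_eq} hold for $\mt_i$, we have $\H_i=\mt_\Omega$ by statement \ref{lem_eq_p2}, and equivalently (statement \ref{lem_eq_p1}) $\mt_i^\perp\cap\ms=\mt_i^\perp\cap\mt_\Omega$. The plan is to show that the subspace identity \ref{lem_eq_p1} propagates from $i$ to every $j$ with $i\leqslant j\leqslant\Lambda$. Then \ref{lem_eq_p1} $\Rightarrow$ \ref{lem_eq_p2} of \cref{lem_eq}, applied at index $j$, gives $\H_j=\mt_\Omega$. The case $\mt_j=\mt_\Omega$, i.e.\ $j=\Lambda$, is handled directly by \cref{coro_converge}, so \cref{lem_eq}'s hypothesis $\mt_j\neq\mt_\Omega$ causes no trouble.

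The key step is the following. Fix $j$ with $i<j<\Lambda$. Since $\mt_i\subseteq\mt_j$, we have $\mt_j^\perp\subseteq\mt_i^\perp$. Hence
\[
\mt_j^\perp\cap\ms=\mt_j^\perp\cap(\mt_i^\perp\cap\ms)=\mt_j^\perp\cap(\mt_i^\perp\cap\mt_\Omega)=\mt_j^\perp\cap\mt_\Omega ,
\]
where the middle equality uses statement \ref{lem_eq_p1} at index $i$. This is exactly statement \ref{lem_eq_p1} at index $j$. The hypotheses of \cref{lem_eq} at $j$ (namely $\dim(\mt_j)=j$, Assumption \ref{asump_pa}, and $\mt_j\neq\mt_\Omega$ for $j<\Lambda$) hold, so \cref{lem_eq} yields $\H_j=\mt_\Omega$.

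Alternatively, one can bypass \cref{lem_eq} at index $j$ and argue directly. The inclusions $\mt_j\subseteq\mt_\Omega$ and $\mt_j^\perp\cap\ms=\mt_j^\perp\cap\mt_\Omega\subseteq\mt_\Omega$ give $\H_j\subseteq\mt_\Omega$. By \cref{thm_dim_match}, $\dim\H_j=\Lambda=\dim\mt_\Omega$, so $\H_j=\mt_\Omega$. This version covers $j=\Lambda$ uniformly as well.

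I do not expect a real obstacle. The only point needing care is the intersection identity, which rests on the inclusion $\mt_j^\perp\subseteq\mt_i^\perp$ together with associativity of intersections. The dimension count from \cref{thm_dim_match} then closes the argument without invoking any distance results.
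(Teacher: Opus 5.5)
Your proposal is correct and is essentially the paper's argument: the paper shows $\H_j\subseteq\mt_\Omega$ via $\mt_j^\perp\cap\ms\subseteq\mt_i^\perp\cap\ms=\mt_i^\perp\cap\mt_\Omega\subseteq\mt_\Omega$ and then closes with $\dim\H_j=\dim\mt_\Omega$. That is exactly your alternative route, and your main route through \cref{lem_eq} at index $j$ only repackages the same inclusion-plus-dimension step. Your slightly sharper identity $\mt_j^\perp\cap\ms=\mt_j^\perp\cap\mt_\Omega$ even makes the dimension count unnecessary, since $\mt_j+(\mt_j^\perp\cap\mt_\Omega)=\mt_\Omega$ whenever $\mt_j\subseteq\mt_\Omega$.
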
\vspace*{-0.5\baselineskip}
	\begin{pf}
		If the conditions in \cref{lem_eq} holds, then $\H_j = \mt_j + (\mt_j^\perp\cap \ms)\subseteq\mt_\Omega$ as $(\mt_j^\perp\cap \ms)\subseteq(\mt_i^\perp\cap \ms)=(\mt_i^\perp\cap \mt_\Omega)\subseteq\mt_\Omega$ holds for $i\leqslant j$ and $\mt_j\subseteq\mt_\Omega$ holds by definition. But $\dim(\H_j)=\dim(\mt_\Omega)$ and hence $\H_j=\mt_\Omega$.~\hfill $\blacksquare$
	\end{pf}\vspace*{-0.5\baselineskip}
	\begin{figure} \centering
		\hspace*{-0.3cm}
		\begin{subfigure}[b]{0.25\textwidth}
			\includegraphics[height=0.6\textwidth, width =1.3\textwidth]{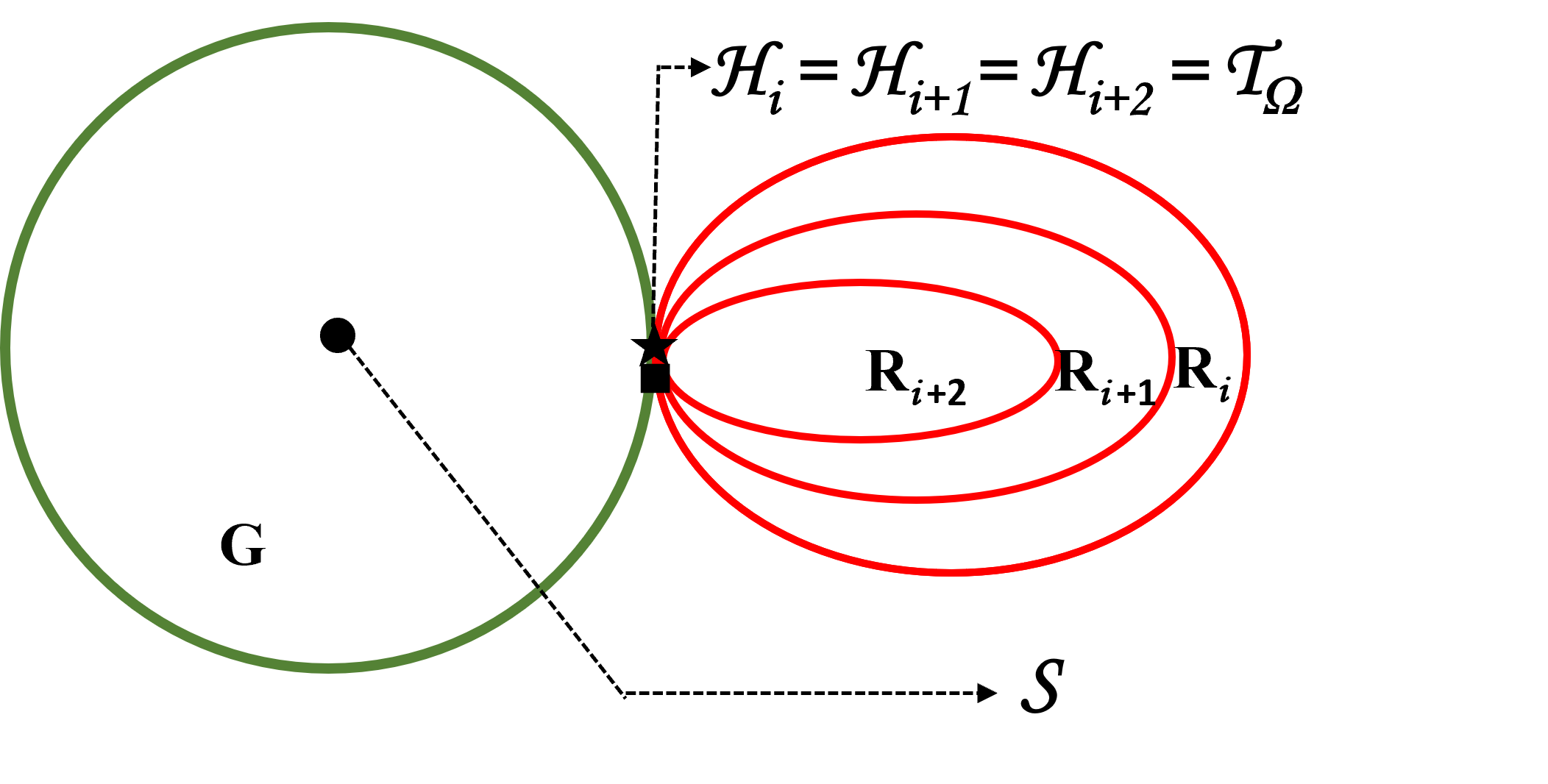}
			\caption{}
			\label{fig_spl_case1}
		\end{subfigure}\hspace*{0.2cm}		
		\begin{subfigure}[b]{0.25\textwidth}
			\includegraphics[height=0.58\textwidth, width =0.95\textwidth]{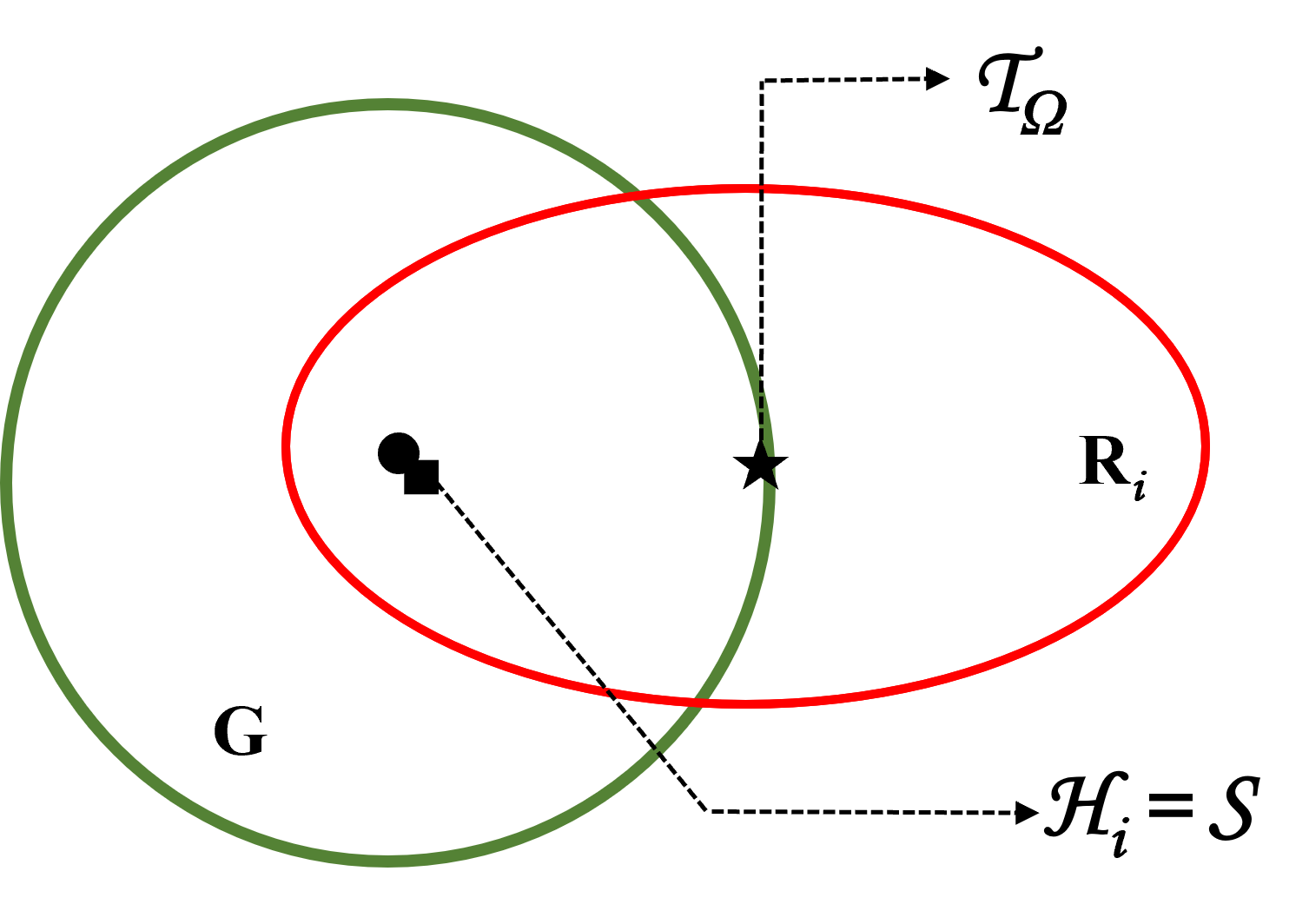}
			\caption{}
			\label{fig_spl_case2}
		\end{subfigure}
		\caption{Special cases}
	\end{figure}
	
	The following theorem  deals with another special case when the $\H_i$ constructed at the $i^\text{th}$ time instant $(i<\Omega)$ matches with $\ms$.
	\begin{theorem} 
		Let Assumption \ref{asump_pa} hold.   If $\mt_i \subseteq  \ms$, then $\H_i = \ms$.
	\end{theorem}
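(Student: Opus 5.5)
The argument is a containment-plus-dimension count, in the same spirit as the proofs of \cref{thm_dim_match} and of the equivalence \ref{lem_eq_p1} $\Rightarrow$ \ref{lem_eq_p2} in \cref{lem_eq}.

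First I will show $\H_i \subseteq \ms$. By \eqref{eq_Hi}, $\H_i = \mt_i + (\mt_i^\perp \cap \ms)$. The second summand lies in $\ms$ by construction. The first lies in $\ms$ by the hypothesis $\mt_i \subseteq \ms$. Since $\ms$ is a subspace, it is closed under sums, so $\H_i \subseteq \ms$.

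Next I will invoke \cref{thm_dim_match}, which holds under Assumption \ref{asump_pa} and gives $\dim(\H_i) = \Lambda = \dim(\ms)$. A subspace of $\ms$ with the same finite dimension as $\ms$ must equal $\ms$, so $\H_i = \ms$.

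Alternatively, one can avoid citing \cref{thm_dim_match} and argue directly. Because $\mt_i \subseteq \ms$, the subspace $\ms$ splits orthogonally as $\ms = \mt_i \oplus (\mt_i^\perp \cap \ms)$, which is exactly $\H_i$. This version does not even need Assumption \ref{asump_pa}.

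I do not expect any real obstacle. The only point needing care is that \cref{thm_dim_match} requires $i \leqslant \Omega$, and that holds by Assumption \ref{assump1}. The direct decomposition argument sidesteps this condition entirely.
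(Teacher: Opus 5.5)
Your main argument is correct and is essentially the paper's proof: $\H_i \subseteq \ms$ because both summands of \eqref{eq_Hi} lie in $\ms$, and then $\dim(\H_i)=\dim(\ms)$ from \cref{thm_dim_match} forces $\H_i=\ms$. Your alternative is also valid: whenever $\mt_i\subseteq\ms$, orthogonal projection onto $\mt_i$ gives $\ms=\mt_i\oplus(\mt_i^\perp\cap\ms)=\H_i$, so the conclusion does not actually require Assumption \ref{asump_pa}.
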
\vspace*{-0.5\baselineskip}
	\begin{pf}
		When $\mt_i \subseteq  \ms$, clearly we have $\H_i \subseteq \ms$ as $(\mt_i^\perp \cap \ms) \subseteq \ms$ also holds. Also as $\dim(\H_i)=\dim(\ms)$, we get $\H_i=\ms$. \hfill $\blacksquare$
	\end{pf}\vspace*{-0.5\baselineskip}
	The inclusion $\mt_i \subseteq  \ms$ means that the similar system is also consistent with the dataset $T_i$. In this case, the red ellipse  contains the solid dot. (See  \cref{fig_spl_case2}) Hence, the solid dot and solid square coincide in this case (i.e., $\H_i=\ms$).

	\section{From Subspace Representations to State-Space Models}
	
	In this section, we focus on identifying a state-space model from the subspace $\mathcal{H}_i$ constructed in accordance with \eqref{eq_Hi}, and establish the existence and uniqueness of the identified model.
	
	\subsection{Existence and uniqueness of $\ab{i}$}
	
	Recall that, $\H_i$'s are created in \eqref{eq_Hi} as a sum of two separate subspaces  $\mt_i$ and ${(\mt_i^\perp\cap\ms)}$. However, if we add two arbitrary subspaces in general, then while the combined subspace might have adequate dimension, no system might exist that can produce data consistent with the combined subspace.
	
		\begin{example}
			Consider the system  $x(k+1)=ax(k)+bu(k)$ where $a,b \in \R{}$. Suppose the data set $T_1 = \smat{x(0)\\u(0)\\x(1)} = \begin{bsmallmatrix} 1\\1\\2 \end{bsmallmatrix}$ is available.  Since ${\rm rank}(T_1)=1<2$ we cannot uniquely identify $(a,b)$ from just $T_1$. Suppose we randomly complete $T_1$ to $\widetilde{H}_1=\smat{1&\quad 1\\1&\quad 1\\2&\quad 1}.$ Clearly no $(a,b)$ pairs exist which are consistent with  $a+b=2$ and $a+b=1$ simultaneously. \hfill $\square$
		\end{example}
		
		The next theorem guarantees that we are assured of a unique system $\big($denoted by $\ab{i}\big)$ consistent with any basis  $H_i$  of $\H_i$  constructed according to \eqref{eq_Hi}. Define any $H_i\in\R{M\times \Lambda}$ such that $\im{H_i}=\H_i$. Further partition 
		\begin{equation}\label{eq_part_Hi}
			H_i=\begin{bsmallmatrix}
				{X_i^-}^\top & {U_i^-}^\top & {X_i^+}^\top
			\end{bsmallmatrix}^\top 
		\end{equation}
		such that  $X_i^- \in \R{n\times \Lambda}$, $U_i^-\in \R{m\times \Lambda}$, $X_i^+\in\R{n\times \Lambda}$.
		Then, the following result holds,
		\begin{theorem}[Existence and uniqueness of $\ab{i}$]  \label{thm_ID}
			Let $\dim(\mt_i)=i ~\forall~i\in\{1,2,\ldots,\Lambda\}$.	Then, there exists an unique $\ab{i}$ pair that satisfies $X_i^+=A_iX_i^-+B_iU_i^-$,	 where $X_i^-$, $U_i^-$, $X_i^+$ are defined as in \eqref{eq_part_Hi}.
		\end{theorem}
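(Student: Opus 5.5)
The claim splits into two parts: consistency, meaning every column of $H_i$ satisfies one common linear relation $x^+ = A x^- + B u^-$, and uniqueness, which follows from $\smat{X_i^-\\U_i^-}$ having full row rank $\Lambda$. Since $H_i$ is $\Lambda$ columns wide and $\dim(\H_i)=\Lambda$ by \cref{thm_dim_match}, the stacked matrix $\smat{X_i^-\\U_i^-}$ is square of size $\Lambda\times\Lambda$. If it is invertible, the unique pair is
\[
[A_i\;B_i]=X_i^+\smat{X_i^-\\U_i^-}^{-1}.
\]
Invertibility delivers existence and uniqueness at once, and the conclusion does not depend on the choice of basis: if $H_i'=H_iR$ with $R$ invertible, the same pair results. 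So everything reduces to showing that $\H_i$ meets the subspace $\mathcal{K}:=\{(0,0,x^+)\}=\im{\smat{0\\0\\I_n}}$ only trivially. Equivalently, the projection $P:\R{M}\to\R{\Lambda}$ onto the first $n+m$ coordinates is injective on $\H_i$.

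To prove injectivity, I would take $h\in\H_i$ with $Ph=0$ and write $h=t+s$ with $t\in\mt_i$ and $s\in\mt_i^\perp\cap\ms$. Both $\mt_\Omega$ and $\ms$ are graphs of linear maps over $\R{\Lambda}$, namely $[A\t\;B\t]$ and $[A\s\;B\s]$, and $P$ is injective on each of them. Then $Pt=-Ps$, so $t=\smat{z\\A\t z}$ and $s=\smat{-z\\-A\s z}$ for some $z\in\R{\Lambda}$, where I abbreviate $A\t z$ for $[A\t\;B\t]z$. Orthogonality $t^\top s=0$ gives
\[
-\norm{z}^2 - z^\top [A\t\;B\t]^\top[A\s\;B\s] z = 0.
\]
This does not immediately force $z=0$, so this is the main obstacle. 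A cleaner route is to use the specific structure: $t$ lies in $\mt_i$, not merely in $\mt_\Omega$, and $s$ is orthogonal to all of $\mt_i$, not only to $t$.

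My first attempt at the obstacle would be a dimension and perturbation argument using Assumption~\ref{assump1}. I would parametrize $\H_i$ as the graph of some $[A_i\;B_i]$ whenever $\H_i\cap\mathcal{K}=\{0\}$, and argue by continuity that the property persists. The starting point is $\mt_i=\{0\}$, where $\H_0=\ms$ is a graph by Corollary~\ref{coro_converge}. The endpoint is $\H_\Lambda=\mt_\Omega$, also a graph. If that fails, I would show directly that $\H_i\cap\mathcal{K}\neq\{0\}$ forces a nonzero vector $t\in\mt_i$ with $t^\top s=0$ for all $s\in\ms$ in a suitable subspace, contradicting Assumption~\ref{asump_pa}, which says $\ms^\perp\cap\mt_\Omega=\{0\}$. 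To do this, I would decompose $s$ along $\ms$'s graph structure and use the orthogonality of $s$ against each generator $h_1,\dots,h_i$.

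I expect some nondegeneracy of this kind, either via Assumption~\ref{asump_pa} or the closeness $\delta$ in Assumption~\ref{assump1}, to be exactly what the authors invoke. If no general argument closes the gap, I would state that the rank of $\smat{X_i^-\\U_i^-}$ equals $\Lambda$ as the key verified condition, and derive existence and uniqueness from it as above.
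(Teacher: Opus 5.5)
Your reduction is the same as the paper's. Its proof shows that the images of the top $\Lambda$-row blocks of $T_i$ and of a basis $S_i$ of $\mt_i^\perp\cap\ms$ meet only in $0$, which is exactly your injectivity of $P$ on $\H_i$. It also arrives at your identity $z^\top(I_\Lambda+C_T^\top C_S)z=0$, where $C_T:=[A\t\;B\t]$ and $C_S:=[A\s\;B\s]$.

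\textbf{The gap.} You flag the step that must force $z=0$ and then leave it open, and none of your fallbacks can close it. Assumption~\ref{asump_pa} is not enough. The continuity idea has no content: being a graph over the first $\Lambda$ coordinates is an open, not a closed, condition on ${\rm Gr}(\Lambda,M)$, and the $\H_i$ form a finite list, not a path. Orthogonality to all of $\mt_i$ adds nothing when $i=1$.

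\textbf{Counterexample.} Take $n=m=1$, $(A\s,B\s)=(4,3)$, $(A\t,B\t)=(3,5)$, $x\t(0)=1$, $u\t(0)=-1$, so $h_1=(1,-1,-2)^\top$; any $u\t(1)\neq 2$ gives $\dim(\mt_2)=2$. For $n=m=1$, Assumption~\ref{asump_pa} reduces to $C_SC_T^\top\neq-1$, and here $C_SC_T^\top=27$. Every system within $\delta=\sqrt5$ of $(4,3)$ has $B>0$, so it is controllable. Yet $\mt_1^\perp\cap\ms=\langle(1,-1,1)^\top\rangle$, so $H_1=\smat{1&1\\-1&-1\\-2&1}$ has a singular top block. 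A consistent pair would need $A_1-B_1=-2$ and $A_1-B_1=1$ at once, so no pair exists.

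\textbf{The missing ingredient.} You need a quantitative use of statement 2 of Assumption~\ref{assump1}. By polarization, $(C_Sz)^\top(C_Tz)=\tfrac14\norm{(C_S+C_T)z}_2^2-\tfrac14\norm{(C_T-C_S)z}_2^2$. Since $\norm{C_T-C_S}_2\leqslant\norm{C_T-C_S}_F\leqslant\delta$, this gives $\norm{z}_2^2+(C_Sz)^\top(C_Tz)\geqslant(1-\tfrac{\delta^2}{4})\norm{z}_2^2$. So for $\delta<2$ your identity forces $z=0$, hence $t=s=0$. Then the $\Lambda\times\Lambda$ block $\smat{X_i^-\\U_i^-}$ is invertible, and the rest of your argument goes through (existence, uniqueness, basis independence).

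The paper's proof does not fill this step. It dismisses the same quadratic form with ``must be true for any pair \ldots\ therefore $x=0$''. That inference is invalid, because $I_\Lambda+C_T^\top C_S$ need not be positive definite: in the example, $z=(1,-1)^\top$ annihilates it, and $\norm{C_T-C_S}_2=\sqrt5>2$. You located the real difficulty correctly. The statement needs a smallness hypothesis such as $\norm{[A\t\;B\t]-[A\s\;B\s]}_2<2$, under which it is true.
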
\vspace*{-0.5\baselineskip}
		\begin{pf}
			First, we prove the theorem for a specific choice of $H_i$.
			Consider	$T_i \in \R{M\times i}$ as defined in \eqref{eq_T_i}. Let $S_i \in \R{M\times (\Lambda-i)}$ be  such that its columns forms a basis for $\mt_i^\perp \cap \ms$. Partition the matrix $S_i$ as $S_i=\begin{bsmallmatrix} \widetilde{X}_i^{-\top} & \widetilde{U}_i^{-^\top} & \widetilde{X}_i^{+^\top}\end{bsmallmatrix}^\top$
			where       
			$\widetilde{X}_i^-\in \R{n\times (\Lambda-i)}$, $ \widetilde{U}_i^-\in \R{m\times (\Lambda-i)}$, $ \widetilde{X}_i^+\in \R{n\times (\Lambda-i)}$. Clearly, any such partition automatically satisfy the relations	$X\t^{i+}= \begin{bsmallmatrix}A\t & B\t\end{bsmallmatrix}\begin{bsmallmatrix}X\t^{i-}\\ U\t^{i-}\end{bsmallmatrix}$, and $\widetilde{X}_i^+=\begin{bsmallmatrix} A\s & B\s \end{bsmallmatrix}\begin{bsmallmatrix}\widetilde{X}_i^-\\ \widetilde{U}_i^- \end{bsmallmatrix}$. {Denote ${\rm Rowsp}(\cdot)$ to denote the row span of a matrix.}
			Therefore, ${\rm Rowsp}(X\t^{i+}) \subseteq {\rm Rowsp} \begin{bsmallmatrix}X\t^{i-}\\ U\t^{i-}\end{bsmallmatrix} $, ${\rm Rowsp}(\widetilde{X}_i^+) \subseteq {\rm Rowsp} \begin{bsmallmatrix}\widetilde{X}_i^-\\ \widetilde{U}_i^-\end{bsmallmatrix} $ holds.
			Since $\ran{T_i}=i$ and $\ran{S_i}=\Lambda-i$, it follows that ${\rm rank}\begin{bsmallmatrix}X\t^{i-}\\ U\t^{i-}\end{bsmallmatrix}=i$ and ${\rm rank}\begin{bsmallmatrix}\widetilde{X}_i^-\\ \widetilde{U}_i^-\end{bsmallmatrix}=\Lambda-i$.\\
			Next define $H_{oi}=\left[T_i ~S_i\right]$. Clearly ${\rm Im}\left[T_i ~S_i\right]=\H_i$ from \eqref{eq_Hi}. 
			
			{\em Claim:} ${\rm Im}	\begin{bsmallmatrix}	X\t^{i-}\\ U\t^{i-} 	\end{bsmallmatrix} \cap {\rm Im}\begin{bsmallmatrix}  \widetilde{X}_i^-\\ \widetilde{U}_i^-\end{bsmallmatrix}=\{0\}$.\\
			We prove the above claim by contradiction. Suppose $\exists ~ x\in {\R{\Lambda}}$ such that $x \in {\rm Im}	\begin{bsmallmatrix}	X\t^{i-}\\ U\t^{i-} 	\end{bsmallmatrix} \cap {\rm Im}\begin{bsmallmatrix}  \widetilde{X}_i^-\\ \widetilde{U}_i^-\end{bsmallmatrix}$ and $x\neq 0$, then $\begin{bsmallmatrix}	x\\x\t \end{bsmallmatrix} \in \mt_i$, $\begin{bsmallmatrix}	x\\x\s \end{bsmallmatrix} \in (\mt_i^\perp \cap \ms)$ where $x\t:= \smat{A\t & B\t}x$, $x\s:= \smat{A\s & B\s}x$.\\
			But, $\smat{x^\top &x\t^\top}\hspace{-1mm}\smat{x\\x\s}\hspace{-1mm}=\hspace{-1mm}0$ as $\mt_i \hspace{-1mm}\perp \hspace{-1mm}(\mt_i^\perp \cap \ms)$. Hence, 
			$x^\top x+x^\top\smat{A\t & B\t}^\top\smat{A\s & B\s}x=0$.\\ i.e., $x^\top \left(  {\rm I}_\Lambda+\smat{A\t & B\t}^\top\smat{A\s & B\s}\right)x=0$   must be true for any $\ab{_T}$, $\ab{_S}$ pair. Therefore, $x=0$. Contradiction.\\
			Hence, ${\rm rank}\begin{bsmallmatrix} X\t^{i-} &\quad \widetilde{X}_i^-\\ U\t^{i-}  &\quad \widetilde{U}_i^-\end{bsmallmatrix}={\Lambda}$.
			Also as $\ran{H_{0i}}=\Lambda$, ${\rm Rowsp}\hspace{-1mm}\left[X\t^{i+} ~ \widetilde{X}_i^+\right] \hspace{-1mm}\subseteq\hspace{-1mm} {\rm Rowsp}\begin{bsmallmatrix} X\t^{i-} &\quad \widetilde{X}_i^-\\ U\t^{i-}  &\quad \widetilde{U}_i^-\end{bsmallmatrix}$ is true. Hence, an unique $\ab{i}$ satisfying $\left[X\t^{i+} ~ \widetilde{X}_i^+\right]=[A_i~ B_i] \begin{bsmallmatrix} X\t^{i-} &\quad \widetilde{X}_i^-\\ U\t^{i-}  &\quad \widetilde{U}_i^-\end{bsmallmatrix}$ exists.
			
			It is now easy to argue that any matrix satisfying $\im{H_i}=\H_{i}$ partitioned as in \eqref{eq_part_Hi} is such that ${\rm rank}\smat{{X_i^-} \\ {U_i^-}}=\Lambda$ and an unique $\ab{i}$ exists. \hfill $\blacksquare$
		\end{pf}  \vspace*{-0.5\baselineskip}

		By the above theorem, the unique system consistent with the data in \eqref{eq_part_Hi} is determined as \citep{de2019formulas}:  
		\begin{equation}\label{eq_ABi}
			\smat{A_i \quad B_i}=X_i^+\smat{X_i^-\\ U_i^-}^\dagger.\end{equation}
		
		{\begin{example}\label{eg_scalar1}
				Let the data matrices $ S=\smat{1&\quad0\\0&\quad1\\0.7&\quad0.7}, T_1=\smat{1\\1\\1}$
				be available from a system with $m=n=1$. From the data set $S$, the similar system can be uniquely identified as $\ab{\s}=(0.7,0.7)$. Whereas, corresponding to $T_1$, the set of all consistent systems with the data is given by $\Sigma_1=\{(a,b)| a+b=1 \text{ where } a,b\in \R{}\}$. Note that $\Sigma_1$  is an unbounded set and is denoted by the red line in \cref{fig_eg_ab}.   The unknown true system $\ab{\t}$ can lie anywhere on this line. Here, $\mt_i^\perp\cap\ms=\langle\smat{1\\-1\\0}\rangle$ (shown by blue line). Hence, $\H_i=\langle\smat{1\\1\\1},\smat{1\\-1\\0}\rangle$
				and the corresponding system is $\ab{i}=(0.5,0.5)$. \hfill $\square$
		\end{example}}

		\begin{figure} 
			\centering
			\hspace*{0.3cm}
			\includegraphics[height=0.16\textwidth, width =0.3\textwidth]{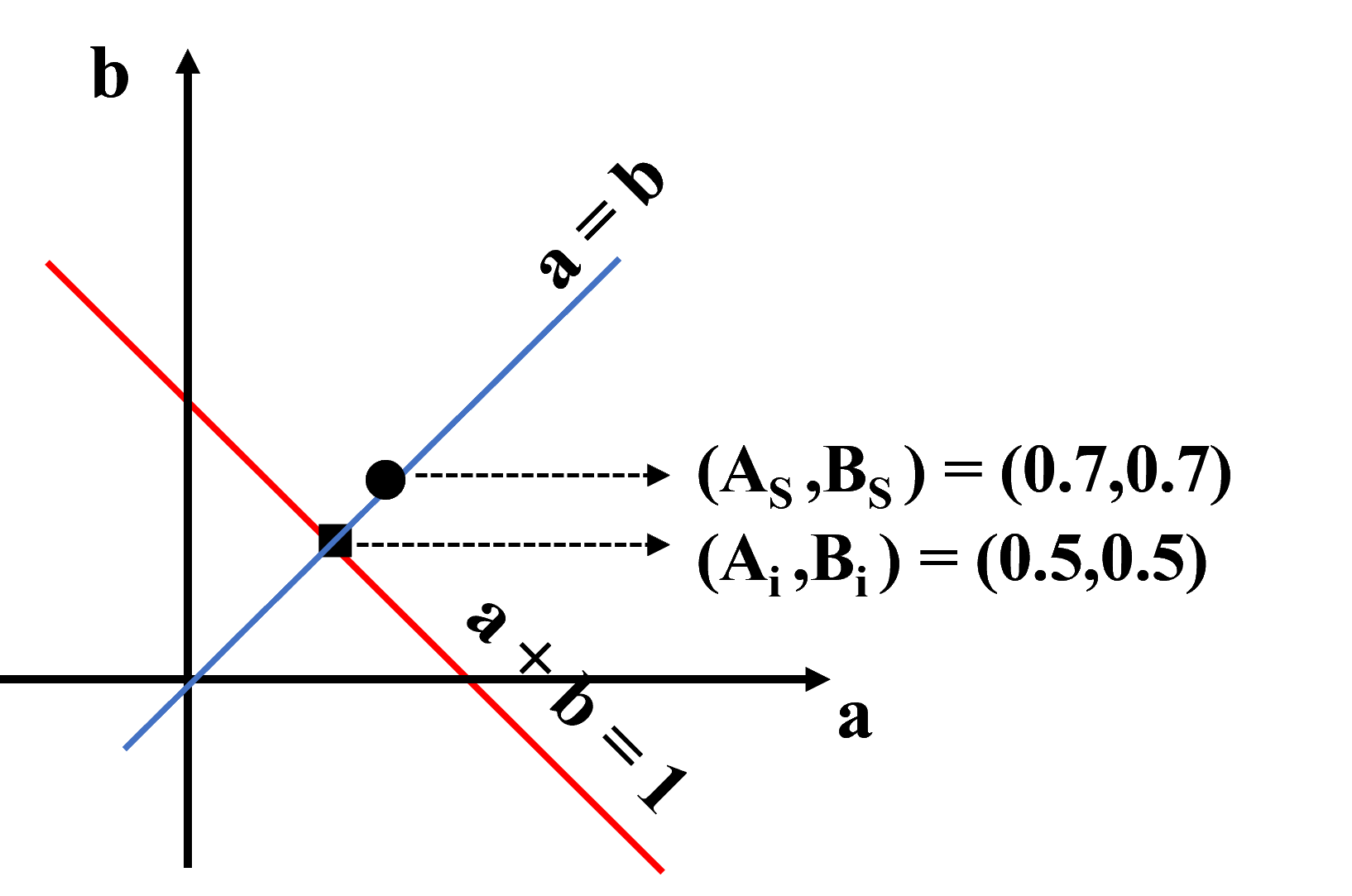}
			\caption{Graphical illustration of Example \ref{eg_scalar1}}
			\label{fig_eg_ab}
		\end{figure}
		
		The following illustrates the fact that appending data from the similar system in a way that guarantees the existence of a system might result in systems $\ab{i}$ that are far from the similar system, the true system.
		\begin{example}
			Consider the same data matrices $S$, $T_1$ as in  Example \ref{eg_scalar1}. Suppose we randomly complete $T_1$ to $\widetilde{H}_1=\smat{1&\quad 1.09302\\1&\quad 1\\1&\quad 1.4651}$ where $\smat{1.09302\\1\\1.4651} \in {\rm Im}(S)$. Also, here ${\rm Rowsp}\smat{1&\quad 1.4651}\subseteq{\rm Rowsp}\smat{1&\quad 1.09302\\1&\quad 1}$ and $\ran{\widetilde{H}_1}=\Lambda$ and hence a unique system corresponding to $\widetilde{H}_1$ exists and the unique system is $(\widetilde{A}_1,\widetilde{B}_1)=(5,-4)$. But note that $(\widetilde{A}_1,\widetilde{B}_1)$ is far from both $\ab{\s}$ and $\ab{\t}$. Hence, we can infer that combining data from similar systems in an arbitrary way would not be beneficial. \hfill $\square$
		\end{example}

		To summarize, the subspaces \(\mathcal{H}_i\), constructed with respect to the metric defined in \eqref{eq_dist_defn}, satisfy the following properties:
		\begin{enumerate*}[label=(\roman*)]
			\item {Dimension preservation} (\cref{thm_dim_match})
			\item {Minimizes deviation from the similar system}  (\cref{thm_Hi_near})
			\item {Monotonic divergence from similar system}  (\cref{thm_H0i}, statement \ref{thm_H0i_p3})
			\item {Monotonic convergence to true system}  (\cref{thm_H0star}, statement~\ref{thm_H0star_p2})
			\item Each constructed $\mathcal{H}_i$ lies closer to $\mt_\Omega$ than $\ms$ does to $\mt_\Omega$.
			(\cref{thm_H0star}, statement~\ref{thm_H0star_p3})
			\item  \(\mathcal{H}_i\)'s converge to \(\mt_\Omega\). (\cref{coro_converge})
			\item {Intermediate system identification} (\cref{thm_ID}).
		\end{enumerate*}
		\subsection{Identification of Autonomous systems}
		One of the main assumptions of our work is that we have a fully informative dataset from a similar system available. i.e., $\dim(\ms)=\Lambda$. But, for autonomous systems, since the input component is zero, we would have $\dim(\ms) < \Lambda$. However, with slight modifications to the constructed data matrices, our approach can also be used for incrementally identifying autonomous systems. 
		Next, we will show how our approach can be applied for autonomous systems. Consider 
		\begin{align*} 
			\begin{aligned}
				&\text{True system: } \hskip-2mm &&x\t(k+1)=A\t x\t(k), \quad k \in \{0,1,\ldots\}\\ 
				&\text{Similar system: } \hskip-2mm  &&x\s(k+1)=A\s x\s(k), \quad k \in \{0,1,\ldots\}
			\end{aligned}
		\end{align*}
		The data matrices $X\s^{-} \in \R{n \times N_s}$, $X\s^{+}\in \R{n \times N_s}$ which are defined as in \eqref{eq_xu_defn} are known to be available. Here, the matrices $S$, $T_i$ are defined as $S:=\smat{{X\s^-}^\top &{X\s^+}^\top}^\top \in \R{2n\times N\s}$, $T_i:=\smat{{X\t^{i-}}^\top&{X\t^{i+}}^\top}^\top \in \R{2n\times i}$.
		Here, it is assumed that the similar system data satisfies ${\rm rank}(X\s^-)=n$. This condition ensures that the underlying autonomous system is uniquely identified. It follows that $\dim(\ms)=n$, as ${\rm Rowsp}({X\s^+})\subseteq{\rm Rowsp}({X\s^-})$. 
		For this case too, the $\mathcal{H}_i$'s constructed from  $\ms$, $\mt_i$ as in \eqref{eq_Hi} inherits all the properties  discussed in \cref{sec_results}. Define $H_i\in\R{2n \times n}$ such that $\im{H_i}=\H_i$. Further partition $H_i=\begin{bsmallmatrix}{X_i^-}^\top  & {X_i^+}^\top\end{bsmallmatrix}^\top$
		such that  $X_i^- \in \R{n\times n}$, $X_i^+\in\R{n\times n}$.
		The unique system identified (say $A_i$) identified from $H_i$ is given by $ A_i = X_i^+ (X_i^-)^\dagger.$

		\section{Convergence of parameters}
		In this section, we study how the convergence properties of the $\H_i$'s translate to those of the system parameters $\ab{i}$ that are identified using the $\H_i$'s. We investigate the convergence of $\ab{i}$ computed from \eqref{eq_ABi} as $\H_i$'s gets computed from \eqref{eq_Hi} at time step $i$. More precisely, we investigate the dependencies between
		\begin{enumerate}[label=(\roman*)]
			\item $d(\H_i,\mt_\Omega)$ and {$\norm{\widehat{H}_i-\widehat{T}}_F$} for appropriately chosen basis $\widehat{H}_i$ and $\widehat{T}$, and
			\item $\norm{\widehat{H}_i-\widehat{T}}_F$ and $\norm{[A_i ~ B_i]-[A\t ~ B\t]}_F$ where $\ab{i}$ and $\ab{_T}$ are estimated from the matrices  $\widehat{H}_i$ and $\widehat{T}$ respectively.
		\end{enumerate}
		
		{The following example illustrates the fact that monotonicity in distance between subspaces doesn't necessarily translate to monotonicity in distance between systems represented by state space matrices.
			\begin{example}
				Consider the data sets 
				\begin{align*}
					D_0=\smat{1&0\\0&1\\1&1}, D_1=\smat{1&0\\0&1\\1.0149&1.0619}, D_2=\smat{1&0\\0&1\\1.0268&1.0633},
				\end{align*}
				which are collected from different systems with $m=n=1$. For $i=1,2,3$, let  $\mathcal{D}_i:=\im{D_i}$.  
				Here, 
				\begin{align*}d(\mathcal{D}_0,\mathcal{D}_1)=0.0257 > d(\mathcal{D}_0,\mathcal{D}_2)= 0.0252.\end{align*}
				It is easy to see that all the data sets $D_i$'s are informative for identification \citep{8960476}. For $i=0,1,2$, if we use $(a_i.b_i)$ to denote the corresponding system identified from $D_i$, then $(a_0,b_0)=(1,1)$, $(a_1,b_1)=(1.0149,1.0169)$ and $(a_2,b_2)=(1.0268,1.0633)$. Here,
				\begin{align*}
					\underbrace{\norm{[a_0~b_0]-[a_1,b_1]}_F}_{=0.0637} <  \underbrace{\norm{[a_0~b_0]-[a_2~b_2]}_F}_{=0.0687}.
				\end{align*}
				Though $\mathcal{D}_2$ is closer to $\mathcal{D}_0$ in subspace metric than $\mathcal{D}_1$, the same is not true in state space metric. \hfill $\square$
			\end{example}
			Recall that, if $\dim(\mathcal{A}\cap\mathcal{B})=i$, then $\theta_1=\theta_2=\cdots=\theta_i=0$ and the corresponding principal vectors satisfy $u_k=v_k$ and $u_k\in (\mathcal{A}\cap\mathcal{B})$ where $k=1,\ldots,i$ \citep{wedin2006angles}.
			In \cref{fig_planes1}, note that  $\dim(U\cap V)=1$ and hence  one principal angle $\theta_1=0$.}
		The following observation is used in our subsequent results. 
		\begin{lemma}\label{lem_zero_pangles}
			Assume that ${\dim (\mt_i )}=i ~\forall~i\in\{1,2,\ldots,\Lambda\}$. Then,
			\begin{enumerate}
				\item The pair $(\ms$, $\H_{i})$ has at least $\Lambda-i$ zero principal angles and at most $i$ non-zero principal angles.
				\item  The pair $(\mt_\Omega$, $\H_{i})$  has at least $i$ zero principal angles and at most $\Lambda-i$ non-zero principal angles. \label{statement_2}
			\end{enumerate}
		\end{lemma}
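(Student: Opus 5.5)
The plan is to reduce both statements to counting the dimension of an intersection. I will then use the fact recalled just before the lemma: if two subspaces of equal dimension $\Lambda$ intersect in a $k$-dimensional subspace, then exactly $k$ of their $\Lambda$ principal angles vanish. More precisely, at least $k$ vanish, because $\theta_1=\cdots=\theta_k=0$ with principal vectors in the intersection. Since there are only $\Lambda$ principal angles in total (by \cref{thm_dim_match}, $\dim\H_i=\dim\ms=\dim\mt_\Omega=\Lambda$), at most $\Lambda-k$ of them can be nonzero. So it suffices to exhibit a subspace of dimension $\Lambda-i$ inside $\ms\cap\H_i$, and a subspace of dimension $i$ inside $\mt_\Omega\cap\H_i$.

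For statement 1, I use the definition \eqref{eq_Hi}. The subspace $\mt_i^\perp\cap\ms$ lies in $\ms$ trivially, and it lies in $\H_i$ because it is a summand of $\H_i$. Hence $\mt_i^\perp\cap\ms\subseteq\ms\cap\H_i$. From \eqref{prop_eq} in the proof of \cref{thm_dim_match}, which relies on Assumption \ref{asump_pa}, we have $\dim(\mt_i^\perp\cap\ms)=\dim\ms-\dim\mt_i=\Lambda-i$. Therefore $\dim(\ms\cap\H_i)\geqslant\Lambda-i$, and the counting argument gives at least $\Lambda-i$ zero angles and at most $i$ nonzero ones.

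For statement 2, we have $\mt_i\subseteq\mt_\Omega$ by construction and $\mt_i\subseteq\H_i$ by \eqref{eq_Hi}. So $\mt_i\subseteq\mt_\Omega\cap\H_i$, and $\dim\mt_i=i$ by hypothesis. Hence $\dim(\mt_\Omega\cap\H_i)\geqslant i$, which gives at least $i$ zero principal angles and at most $\Lambda-i$ nonzero ones.

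The only delicate point is the principal-angle fact for an intersection of dimension at least $k$. If it needs justification beyond the citation to \citep{wedin2006angles}, I would argue directly from the recursive definition. Choose an orthonormal basis $w_1,\dots,w_k$ of a $k$-dimensional subspace of the intersection. At stage $j\leqslant k$, the pair $u=v=w_j$ is feasible: it is orthogonal to the earlier principal vectors if those are taken to be $w_1,\dots,w_{j-1}$, which is permitted since each attains $\cos\theta=1$. It gives $u^\top v=1$, so $\theta_j=0$. I also need Assumption \ref{asump_pa} to be in force so that \eqref{prop_eq} is valid; the statement as written omits it, so I would invoke it explicitly, as the preceding theorems do.
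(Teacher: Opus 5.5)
Your proof is correct and follows the same route as the paper's: exhibit $\mt_i^\perp\cap\ms$ (of dimension $\Lambda-i$ by \eqref{prop_eq}) inside $\ms\cap\H_i$, then count zero angles among the $\Lambda$ principal angles. The paper writes the equality $\ms\cap\H_i=(\ms\cap\mt_i)+(\mt_i^\perp\cap\ms)$ where you use only the inclusion, and it handles statement 2 ``in a similar way'', which is exactly your $\mt_i\subseteq\mt_\Omega\cap\H_i$ argument; you are also right that Assumption \ref{asump_pa} is implicitly required, for \eqref{prop_eq} and for $\dim\H_i=\Lambda$, even though the lemma statement omits it.
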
\vspace*{-0.5\baselineskip}
		\begin{pf}
			Note that from \eqref{eq_Hi}, we get $\ms \cap \H_i = (\ms\cap\mt_i)+(\mt_i^\perp\cap \ms)$. Assuming ${\dim (\mt_i )}=i$, 	 it follows that, $\dim(\ms \cap \H_i)\geqslant \dim(\mt_i^\perp\cap \ms)=\Lambda-i$.
			As $\dim(\ms \cap \H_i)$ is at least $\Lambda-i$, we have at least $\Lambda-i$ principal angles to be zero for the pair $(\ms,\H_i)$. As, $\dim{\ms} = \dim{\H_{i}}=\Lambda$, $\Lambda$ principal angles are defined for the pair of subspaces $(\ms$, $\H_{i})$, and hence at most $i$ principal angles would be non-zero for the pair $(\ms$, $\H_{i})$.\\
			Statement (\ref{statement_2}) follows in a similar way. \hfill $\blacksquare$
		\end{pf}\vspace*{-0.5\baselineskip}
		\subsection{Bounding $\norm{\widehat{H}_i-\widehat{T}}_F$ above by a function of $d(\H_i,\mt_\Omega)$}
		Next, we propose a method, based on \cite[Theorem 1]{bjorck1973numerical}, for choosing a basis for $\H_i$ and $\mt_\Omega$.
		Let $Q\t$, $Q_i {~\in \R{M\times \Lambda}}$ be matrices containing orthonormal vectors that span $\mt_\Omega$, $\H_{i}$ respectively. Obtain the SVD of $Q\t^\top Q_i$ as $Q\t^\top Q_i=U\Sigma V^\top$, {where $U, \Sigma, V \in  \R{\Lambda\times \Lambda}$}. Define,
		\begin{equation}\label{eq_H_hat}
			\widehat{T}:=Q\t U, \quad \widehat{H}_i:=Q_iV.
		\end{equation}
		In the above equation, $\widehat{T}$, $\widehat{H}_i {~\in \R{M\times \Lambda}}$ contains principal vectors of the pair $(\mt_\Omega$, $\H_{i})$ and is such that $\im{\widehat{T}}=\mt_\Omega$, $\im{\widehat{H}_i}=\H_{i}$. 
		
		In a similar way, for the pair $(\ms$, $\H_{i})$,  we define matrices  $\widetilde{S}$, $\widetilde{H}_i {~\in \R{M\times \Lambda}}$ containing the corresponding principal vectors.
		The following lemma relates the distance between the matrices  $\widehat{T}$, $\widehat{H}_i$ with distance between the  subspaces $\mt_\Omega$, $\H_{i}$.
		\begin{lemma} \label{lem_bound_Hs}
			Let $\widehat{T}$, $\widehat{H}_i$ be defined as in  \eqref{eq_H_hat}. Assume that ${\dim (\mt_i )}=i ~\forall~i\in\{1,2,\ldots,\Lambda\}$. Then,
			\begin{equation*} 
				\norm{\widehat{H}_i-\widehat{T}}_F \leqslant 2\sqrt{\Lambda-i} \sin\left(\tfrac{d(\mt_\Omega,\H_i)}{2}\right).
			\end{equation*}
			Also, $\norm{\widetilde{H}_i-\widetilde{S}}_F \leqslant 2\sqrt{i} \sin\left(\frac{d(\ms,\H_i)}{2}\right)$ holds.
		\end{lemma}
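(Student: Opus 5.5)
The plan is to compute $\norm{\widehat{H}_i-\widehat{T}}_F^2$ column by column using the principal-vector structure of the bases in \eqref{eq_H_hat}, and then use \cref{lem_zero_pangles} to count the nonzero terms.

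By \cite[Theorem 1]{bjorck1973numerical}, the SVD $Q\t^\top Q_i=U\Sigma V^\top$ has $\Sigma=\mathrm{diag}(\cos\theta_1,\ldots,\cos\theta_\Lambda)$, where $\theta_k$ are the principal angles between $\mt_\Omega$ and $\H_i$. The columns $t_k$ of $\widehat{T}=Q\t U$ and $h_k$ of $\widehat{H}_i=Q_iV$ are orthonormal principal vectors, and
\begin{equation*}
\widehat{T}^\top\widehat{H}_i=U^\top Q\t^\top Q_i V=\Sigma .
\end{equation*}
Hence $t_k^\top h_k=\cos\theta_k$. Expanding the Frobenius norm column-wise gives
\begin{equation*}
\norm{\widehat{H}_i-\widehat{T}}_F^2=\sum_{k=1}^{\Lambda}\norm{h_k-t_k}_2^2=\sum_{k=1}^{\Lambda}\left(2-2\cos\theta_k\right)=\sum_{k=1}^{\Lambda}4\sin^2\left(\tfrac{\theta_k}{2}\right).
\end{equation*}
This uses only unit norms and the diagonal structure of $\Sigma$; the off-diagonal entries play no role because each column difference involves only matched pairs.

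Next apply \cref{lem_zero_pangles}, statement \ref{statement_2}: at least $i$ of the $\theta_k$ vanish, so at most $\Lambda-i$ terms are nonzero. Each nonzero term satisfies $\theta_k\leqslant\theta_{\max}=d(\mt_\Omega,\H_i)\leqslant\pi/2$. Since $\sin(\cdot/2)$ is increasing on $[0,\pi/2]$, each term is bounded by $4\sin^2(d(\mt_\Omega,\H_i)/2)$. Summing and taking square roots yields the first bound.

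The second bound for $\widetilde{S},\widetilde{H}_i$ follows identically. Here we use the SVD of $Q\s^\top Q_i$, and by statement 1 of \cref{lem_zero_pangles} at most $i$ principal angles between $\ms$ and $\H_i$ are nonzero.

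The main obstacle is a subtlety in the SVD step: we must confirm that the singular vectors pair principal vectors correctly even when singular values repeat, for instance the multiple zero angles. The argument avoids this, since it uses only $\widehat{T}^\top\widehat{H}_i=\Sigma$, which holds for any SVD choice. Also, since $\dim\mt_\Omega=\dim\H_i=\Lambda$, exactly $\Lambda$ angles are defined, so the count from \cref{lem_zero_pangles} applies directly.
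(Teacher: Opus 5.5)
Your proposal is correct and follows essentially the same route as the paper: the column-wise identity $\norm{h_k-t_k}_2^2=4\sin^2(\theta_k/2)$, dropping the zero principal angles guaranteed by \cref{lem_zero_pangles}, and bounding each remaining term by the maximal angle $d(\mt_\Omega,\H_i)$. The only difference is that you derive the column identity directly from $\widehat{T}^\top\widehat{H}_i=\Sigma$ rather than citing \citep{alsalti2024robust}; this keeps the argument self-contained and makes clear that it does not depend on which SVD is chosen.
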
\vspace*{-0.5\baselineskip}
		\begin{pf}
			Let $h\t^{(k)}$, $h_i^{(k)} \in \R{M}$ be vectors used to denote the $k^\text{th}$ column of $\widehat{T}$, $\widehat{H}_i$ respectively. Then,  
			\begin{equation*}
				\norm{h_i^{(k)}-h\t^{(k)}}_2^2=4\sin^2\left(\tfrac{\theta_k}{2}\right) \left(\hspace{-0.2cm}\begin{tabular}{c}\text{See Proof of Lemma } $8$ \\[-4pt]in \citep{alsalti2024robust}.\end{tabular} \hspace{-0.2cm}\right) 
			\end{equation*}
			where $\theta_k$ is the  corresponding principal angle  between the principal vectors $h_i^{(k)}$, $h\t^{(k)}$. Hence, 
			
			\begin{align*}
				\norm{\widehat{H}_i-\widehat{T}}_F \hspace{-2.5mm}&=\hspace{-1mm}\sqrt{\sum_{k=1}^{\Lambda}\norm{h_i^{(k)}-h\t^{(k)}}_2^2}  = \sqrt{\sum_{k=1}^{\Lambda}4\sin^2\left(\tfrac{\theta_k}{2}\right)}\\ 
				&\hspace{-6mm}{=} 2\sqrt{\sum_{\mathclap{\; k=i+1}}^{\Lambda}\sin^2\left(\tfrac{\theta_k}{2}\right)} \left(\hspace{-0.2cm}\begin{tabular}{c} 
					By \cref{lem_zero_pangles}, at least $i$\\[-4pt]
					principal angles are zero.
				\end{tabular}\hspace{-0.2cm}\right)\\
				&\hspace{-6mm}\leqslant\hspace{-1mm} 2\sqrt{\hspace{-1mm}(\Lambda-i)\sin^2\hspace{-1mm}\left(\tfrac{\theta_{\Lambda}}{2}\right)} \left(\hspace{-0.2cm}\begin{tabular}{c} $\because \theta_1\hspace{-1mm}\leqslant\hspace{-1mm}\theta_2\leqslant\hspace{-1mm} \cdots\hspace{-1mm} \leqslant\hspace{-1mm}\theta_{\Lambda}$\\[-4pt] \text{\citep{bjorck1973numerical}}\end{tabular}\hspace{-0.2cm}\right)\\
				&\hspace{-6mm}=\hspace{-1mm}2\sqrt{\Lambda\hspace{-1mm}-\hspace{-1mm}i}\sin\hspace{-1mm}\left(\tfrac{d(\mt_\Omega,\H_i)}{2}\right) (\because\hspace{-1mm} d(\mt_\Omega,\H_i)\hspace{-1mm} =\hspace{-1mm}\theta_{\Lambda} \mbox{ from }\eqref{eq_dist_defn})
			\end{align*}
			The bound for {$\norm{\widetilde{H}_i-\widetilde{S}}_F$}  is derived in a similar way. \hfill $\blacksquare$
		\end{pf}\vspace*{-0.5\baselineskip}
		Note that, the bound for {$ \norm{\widehat{H}_i-\widehat{T}}_F$}  decreases with increase in $i$ because  $\H_i$ gets closer to $\mt_\Omega$ and so the matrix  $\widehat{H}_i$ also moves closer  to $\widehat{T}$. When $d(\mt_\Omega,\H_i)=0$, the right hand side becomes zero and $\widehat{H}_i$ exactly matches  $\widehat{T}$.  In contrast, the bound for { $\norm{\widetilde{H}_i-\widetilde{S}}_F$} increases as $i$ increases. This is because the subspace $\H_{i}$ relies more on the similar system data initially, and hence  $\widetilde{H}_i$ is also initially closer to $\widetilde{S}$. As $\H_{i}$ deviates from $\ms$, the matrix  $\widetilde{H}_i$ also moves farther from $\widetilde{S}$.
		\subsection{Bound on  relative deviation of $\ab{i}$ from $\ab{\t}$}
		Next, we aim to investigate the convergence of $\norm{\smat{ A_i ~B_i}-\smat{A\t ~B\t}}_F$. Let the matrices $\widehat{T}$, $\widehat{H}_i \in \R{M\times \Lambda}$ be partitioned as follows:
		$\widehat{T}=\smat{\widehat{X}\t^{-\top}& \widehat{U}\t^{-\top}& \widehat{X}\t^{+\top}}^\top$, $\widehat{H}_i=\smat{\widehat{X}_i^{-^\top}&\widehat{U}_i^{-\top}&\widehat{X}_i^{+\top}}^\top$ 
		where $\widehat{X}\t^-$, $\widehat{X}\t^+$, $\widehat{X}_i^-$, $\widehat{X}_i^+ \in \R{n\times \Lambda}$ and $\widehat{U}\t^-$, $\widehat{U}_i^-  \in \R{m\times \Lambda}$.
		\begin{align*}\hspace*{-10mm}
			\begin{array}{lll}
				\text{Let }&	\delta \widehat{X}_i^- := \widehat{X}_i^- - \widehat{X}\t^-, \qquad & \delta A_i:= A_i-A\t,\\
				&\delta \widehat{X}_i^+ := \widehat{X}_i^+-\widehat{X}\t^+, \qquad & \delta B_i:=B_i-B\t,\\
				&\delta \widehat{U}_i^- :=\widehat{U}_i^- -\widehat{U}\t^-. &
			\end{array}
		\end{align*}
		Then, by \cref{thm_ID}, $\widehat{X}_i^+=A_i\widehat{X}_i^-+B_i \widehat{U}_i^-$ holds and equivalently we can write
		\begin{equation}\label{eq_Xs1}
			\hspace{-2mm}\begin{smallmatrix}\widehat{X}\t^+ + \delta \widehat{X}_i^+\end{smallmatrix} = \smat{A\t +\delta A_i&  B\t+\delta B_i}\smat{\widehat{X}\t^- + \delta \widehat{X}_i^-\\ \widehat{U}\t^- + \delta \widehat{U}_i^-}.
		\end{equation}
		\begin{align}\label{notat_n_i}
			\begin{split}
				\text{\normalsize{Define:}}\qquad	n_1:={\norm{\begin{bsmallmatrix}\delta \widehat{X}_i^- \\ \delta \widehat{U}_i^- \end{bsmallmatrix}}_F,} n_2:&=\norm{\delta \widehat{X}_i^+}_F,\\
				\nu_1:=\norm{\begin{bsmallmatrix} \widehat{X}\t^- \\ \widehat{U}\t^- \end{bsmallmatrix}}_F, \nu_2:&=\norm{\widehat{X}_i^+}_F.
			\end{split}
		\end{align}
		\normalsize
		
		Denote $\kappa_2(\cdot)$ to be the $2$ condition number of a matrix.
		The following lemma bounds the magnitude of relative deviation between $\ab{i}$, $\ab{_T}$  caused as the data matrices $\begin{bsmallmatrix}\widehat{X}\t^-\\\widehat{U}\t^-\end{bsmallmatrix}$, $\widehat{X}\t^+$ deviate to $\begin{bsmallmatrix}\widehat{X}_i^-\\\widehat{U}_i^-\end{bsmallmatrix}$, $\widehat{X}_i^+$ respectively.
		\begin{lemma} \label{lem_delAB_bound1} 
			
			\begin{align} \label{eq_lem_delAB_bound1}
				\frac{\norm{\smat{\delta A_i ~ \delta B_i}}_F}{\norm{\smat{ A_i ~ B_i}}_F} \leqslant 
				\kappa_2\begin{bsmallmatrix}\widehat{X}\t^-\\ \widehat{U}\t^-\end{bsmallmatrix} \left(  \frac{n_1}{\nu_1} +\frac{n_2}{\nu_2} + 
				\frac{n_1 n_2}{\nu_1 \nu_2}   \right)
			\end{align}
		\end{lemma}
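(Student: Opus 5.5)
The plan is a first-order perturbation argument for the linear relation that defines $\ab{i}$, followed by a division by $\norm{\smat{A_i ~ B_i}}_F$. The analysis is standard for the sensitivity of $X = BA^{-1}$-type equations.

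\textbf{Setup and exact error equation.} Write $W\t := \smat{\widehat{X}\t^-\\ \widehat{U}\t^-}\in\R{\Lambda\times\Lambda}$ and $\delta W := \smat{\delta \widehat{X}_i^-\\ \delta \widehat{U}_i^-}$, so that $W_i := W\t+\delta W = \smat{\widehat{X}_i^-\\ \widehat{U}_i^-}$. Since $\im{\widehat{T}}=\mt_\Omega$ has dimension $\Lambda$, the argument at the end of the proof of \cref{thm_ID} (applied to $\mt_\Omega$) shows that $W\t$ is square and invertible, so $\kappa_2(W\t)$ is finite. We have $\widehat{X}\t^+ = \smat{A\t ~ B\t}W\t$, and by \cref{thm_ID} also $\widehat{X}_i^+ = \smat{A_i ~ B_i}W_i$. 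Subtracting the two relations, or equivalently expanding \eqref{eq_Xs1}, and grouping terms so that the unknown error multiplies the \emph{unperturbed} matrix gives
\begin{equation*}
\delta \widehat{X}_i^+ = \smat{\delta A_i ~ \delta B_i}W\t + \smat{A_i ~ B_i}\delta W .
\end{equation*}
Hence $\smat{\delta A_i ~ \delta B_i} = \big(\delta\widehat{X}_i^+ - \smat{A_i ~ B_i}\delta W\big)W\t^{-1}$.

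\textbf{Norm bounds.} Using $\norm{XY}_F\le \norm{X}_F\norm{Y}_2$ and $\norm{XY}_F\le\norm{X}_2\norm{Y}_F\le\norm{X}_F\norm{Y}_F$, we obtain
\begin{equation*}
\norm{\smat{\delta A_i ~ \delta B_i}}_F \le \norm{W\t^{-1}}_2\big(n_2 + \norm{\smat{A_i ~ B_i}}_F\, n_1\big).
\end{equation*}
Dividing by $\norm{\smat{A_i ~ B_i}}_F$ turns the second term into $\norm{W\t^{-1}}_2\, n_1$. For the first term I need a lower bound on $\norm{\smat{A_i ~ B_i}}_F$ in terms of $\nu_2$. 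From $\widehat{X}_i^+ = \smat{A_i ~ B_i}W_i$ we get $\nu_2 \le \norm{\smat{A_i ~ B_i}}_F\norm{W_i}_2$, and $\norm{W_i}_2\le \norm{W\t}_2+n_1$. Therefore
\begin{equation*}
\frac{n_2}{\norm{\smat{A_i ~ B_i}}_F}\le \frac{n_2}{\nu_2}\big(\norm{W\t}_2 + n_1\big).
\end{equation*}
Multiplying through by $\norm{W\t^{-1}}_2$ and writing $\norm{W\t^{-1}}_2\norm{W\t}_2=\kappa_2(W\t)$ produces the three terms $\kappa_2\frac{n_1}{\norm{W\t}}$, $\kappa_2\frac{n_2}{\nu_2}$ and $\kappa_2\frac{n_1n_2}{\norm{W\t}\nu_2}$. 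This is exactly the structure of \eqref{eq_lem_delAB_bound1}, including the cross term $\frac{n_1n_2}{\nu_1\nu_2}$.

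\textbf{Main obstacle.} The delicate step is the normalization $\nu_1$. The natural quantity appearing above is $\norm{W\t}_2$, whereas the statement uses the Frobenius norm $\nu_1=\norm{W\t}_F\ge\norm{W\t}_2$. Replacing $1/\norm{W\t}_2$ by $1/\nu_1$ goes in the unfavourable direction.

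I would first try to keep Frobenius norms throughout, interpreting the condition number in the mixed form $\norm{W\t}_F\norm{W\t^{-1}}_2$ so that the factor $\norm{W\t^{-1}}_2 = \kappa/\nu_1$ comes out exactly. I would also bound $\norm{W_i}\le\nu_1+n_1$ in Frobenius norm for the $\nu_2$ step; the rest of the chain above then goes through verbatim with $\nu_1$ in place of $\norm{W\t}_2$.

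If the statement really intends the spectral $\kappa_2$, then the unscaled argument yields the bound only with $\norm{W\t}_2$ in place of $\nu_1$. In that case I would exploit the special structure of $\widehat{T}$: its orthonormal columns give $W\t^\top W\t + \widehat{X}\t^{+\top}\widehat{X}\t^+ = {\rm I}_\Lambda$, so all singular values of $W\t$ lie in $(0,1]$. I would then check whether this structure closes the gap between $\norm{W\t}_2$ and $\nu_1$ up to the stated constant. If it does not, I would state the lemma with the mixed condition number instead.
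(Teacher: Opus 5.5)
Your route is the paper's route. Its proof applies the standard perturbation bound for a linear system whose matrix and right-hand side are both perturbed, normalized by the perturbed solution (Watkins, Thm.~2.3.8). It applies this row-wise to $[A_i~B_i]W_i=\widehat{X}_i^+$ and then ``relates it with the Frobenius norm''. Your error equation $\delta\widehat{X}_i^+=[\delta A_i~\delta B_i]W_T+[A_i~B_i]\delta W$, your two norm estimates, and the resulting bound $\kappa_2(W_T)\bigl(n_1/\norm{W_T}_2+n_2/\nu_2+n_1n_2/(\norm{W_T}_2\nu_2)\bigr)$ are all correct. This is what that route actually delivers.

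The step you could not close is the one the paper's proof passes over: turning $\norm{\delta W}_2/\norm{W_T}_2$ into $n_1/\nu_1$. It cannot be closed. The structural rescue fails, and the inequality as stated, with spectral $\kappa_2$ and Frobenius $\nu_1$, is false.

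In this setting $\norm{W_T}_2=1$ whenever $m\geq 1$. This is because $W_T^\top W_T=I_\Lambda-\widehat{X}_T^{+\top}\widehat{X}_T^{+}$ and ${\rm rank}(\widehat{X}_T^+)\leq n<\Lambda$. On the other hand $\nu_1^2={\rm tr}(W_T^\top W_T)\geq m$. So the stated bound divides your $n_1$-terms by $\nu_1\geq\sqrt{m}$, and this loss is real.

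Here is a concrete counterexample, in coordinates $(x,u_1,u_2,u_3,x^+)$ of $\R{5}$:
\begin{itemize}
\item Take $n=1$, $m=3$, $[A_T~B_T]=[0~1~0~0]$, and similar system $[A_S~B_S]=[0~~1{+}\epsilon~~0~~0]$.
\item Use the true data $x(0)=1$, $u(0)=0$, $u(1)=(0,1,0)^\top$, $u(2)=(0,0,1)^\top$. Then $\mathcal{T}_3=\langle e_1,e_3,e_4\rangle\subseteq\mathcal{S}$, so $\mathcal{H}_3=\mathcal{S}$ and $[A_3~B_3]=[A_S~B_S]$. All of the paper's standing assumptions hold.
\item The principal vectors are $e_1,e_3,e_4$ (common to both subspaces), together with $p_T=(e_2+e_5)/\sqrt{2}$ and $p_3=(e_2+(1+\epsilon)e_5)/c$, where $c=\sqrt{1+(1+\epsilon)^2}$.
\item Hence $W_T$ has singular values $1,1,1,1/\sqrt{2}$, giving $\kappa_2(W_T)=\sqrt{2}$ and $\nu_1=\sqrt{7/2}$.
\item The remaining quantities are $n_1=1/\sqrt{2}-1/c$, $n_2=(1+\epsilon)/c-1/\sqrt{2}$ and $\nu_2=(1+\epsilon)/c$.
\item To first order, $n_1\approx n_2\approx\epsilon/(2\sqrt{2})$ and $\nu_2\approx1/\sqrt{2}$. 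The stated right-hand side is then about $\tfrac{\epsilon}{2}(\sqrt{2/7}+\sqrt{2})\approx0.974\,\epsilon$.
\item The left-hand side is $\epsilon/(1+\epsilon)$. At $\epsilon=0.01$ the two sides are $0.00990>0.00966$.
\end{itemize}

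The failure is not marginal. With $[A_T~B_T]=[0~s~0\cdots0]$, the first-order ratio of right- to left-hand side is $\rho+(1-\rho^2)/\sqrt{m+\rho^2}$, where $\rho=(1+s^2)^{-1/2}$. This tends to $0$ as $s$ and $m$ grow.

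So drop the ``check the structure'' branch and commit to your fallback. The provable statement replaces $\kappa_2(W_T)$ by the mixed quantity $\nu_1\norm{W_T^{-1}}_2$; your estimate $\norm{W_i}_2\leq\nu_1+n_1$ finishes that proof exactly. Alternatively, and more sharply, keep $\kappa_2(W_T)$ and replace $\nu_1$ by $\norm{W_T}_2$, which equals $1$ here.

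\cref{thm_delAB1} uses this lemma with $\nu_1$, so it needs the same correction. Its qualitative conclusion, that the bound vanishes as $d(\mathcal{T}_\Omega,\mathcal{H}_i)\to0$, is unaffected.
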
\vspace*{-0.5\baselineskip}
		\begin{pf}	
			Applying the result of \cite[Theorem 2.3.8]{watkins2004fundamentals} row-wise by comparing $\widehat{X}\t^+\hspace{-1mm}=\hspace{-1mm}A\t\widehat{X}\t^-\hspace{-1mm}+\hspace{-1mm}B\t \widehat{U}\t^-$ and  \eqref{eq_Xs1}, and by relating it with the Frobenius norm we get the desired result. \hfill $\blacksquare$
		\end{pf}\vspace*{-0.5\baselineskip}
		
		Next, we combine \cref{lem_bound_Hs} and \cref{lem_delAB_bound1} to characterize the convergence of $\ab{i}$ in terms of the subspace distance $d(\H_i,\mt_\Omega)$. First, we state an intermediate lemma required to prove the main result in  \cref{thm_delAB1}. 
		\begin{lemma}\label{lem_blkmat}
			Consider a subspace $\mathcal{V} \subseteq \R{n}$, with $\dim(\mathcal{V})=q$. Let $Q_A$, $Q_B \in \R{n\times q}$ denote matrices containing orthonormal columns that span $\mathcal{V}$. Partition $Q_A=\begin{bsmallmatrix}
				A_1\\[0.1em] A_2
			\end{bsmallmatrix}$, $Q_B=\begin{bsmallmatrix}
				B_1\\[0.1em]B_2
			\end{bsmallmatrix}$ where $A_1$, $B_1 \in \R{q \times q}$ and $A_2$, $B_2 \in \R{(n-q) \times q}$. Then, $\norm{B_k}_F=\norm{A_k}_F$ for $k=\{1,2\}$.
			Also, if $A_1$, $B_1$ are invertible, then we have $\kappa_2(A_1)=\kappa_2(B_1)$.
		\end{lemma}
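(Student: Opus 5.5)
Two orthonormal bases of the same subspace differ by an orthogonal change of basis, so the whole lemma reduces to invariance properties under right multiplication by an orthogonal matrix.

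\emph{Step 1: the change of basis.} Since $\im{Q_A}=\im{Q_B}=\mathcal{V}$ and both matrices have $q$ linearly independent columns, there is a unique invertible $W\in\R{q\times q}$ with $Q_B=Q_AW$. Using $Q_A^\top Q_A=Q_B^\top Q_B=I_q$ we get
\[
I_q=Q_B^\top Q_B=W^\top Q_A^\top Q_A W=W^\top W,
\]
so $W$ is orthogonal. Applying the partition row-blockwise gives $B_1=A_1W$ and $B_2=A_2W$.

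\emph{Step 2: Frobenius norms.} For $k=1,2$,
\[
\norm{B_k}_F^2=\mathrm{tr}(W^\top A_k^\top A_k W)=\mathrm{tr}(A_k^\top A_k WW^\top)=\norm{A_k}_F^2,
\]
by cyclicity of the trace and $WW^\top=I_q$.

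\emph{Step 3: condition numbers.} Right multiplication by an orthogonal matrix preserves singular values. If $A_1=U\Sigma V^\top$ is an SVD, then $B_1=U\Sigma (W^\top V)^\top$ is an SVD of $B_1$, since $W^\top V$ is orthogonal. So $A_1$ and $B_1$ have the same singular values. When $A_1$ is invertible, $B_1=A_1W$ is invertible as well, and $\kappa_2(B_1)=\sigma_{\max}/\sigma_{\min}=\kappa_2(A_1)$.

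There is no real obstacle here. The only point needing care is Step 1, namely checking that $W$ exists and is orthogonal, rather than merely invertible, using the orthonormality of both bases.
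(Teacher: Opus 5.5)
Your proof is correct and follows essentially the same route as the paper: both write $Q_B = Q_A G$ with $G$ orthogonal, read off $B_k = A_k G$ blockwise, and then use the orthogonal invariance of the Frobenius norm and of the singular values. The differences are cosmetic. You derive the orthogonality of the change-of-basis matrix directly from $W^\top W = I_q$, where the paper cites a prior result. You also compare SVDs, where the paper compares $\norm{A_1}_2$ and $\norm{A_1^{-1}}_2$ with the corresponding norms of $B_1$.
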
\vspace*{-\baselineskip}
		\begin{pf}
			See Appendix. \hfill $\blacksquare$
		\end{pf}\vspace*{-\baselineskip}
		
		\subsection{Bound on System Identification Error as a function of $d(\H_i,\mt_\Omega)$}
		Next, we state the main result characterizing the error in identified parameters at time step $i$. Recall the definition of $h_i$ from \eqref{eq_Hibar}.

		\begin{theorem} \label{thm_delAB1}  Consider the notations as in  \eqref{eq_Xs1} and \eqref{notat_n_i}. Assume that ${\dim (\mt_i )}=i ~\forall~i\in\{1,2,\ldots,\Lambda\}$. Then the following inequality holds:
			\begin{equation} \label{eq_thm_delAB1}
				\hspace{-4mm}\frac{\norm{\smat{\delta A_i ~ \delta B_i}}_F}{\norm{\smat{ A_i ~ B_i}}_F} \leqslant  \gamma\kappa_2\begin{bsmallmatrix}\widehat{X}\t^-\\ \widehat{U}\t^-\end{bsmallmatrix} \hspace{-1mm}\left(  \frac{1}{\nu_1} +\hspace{-1mm}\frac{1}{\norm{h_b}_2} + 
				\frac{\gamma}{\nu_1 \norm{h_b}_2}   \right)
			\end{equation}
			where $\gamma :=2\sqrt{\Lambda-i} \sin\left(\frac{d(\mt_\Omega,\H_i)}{2}\right)$, $h_b \in \R{n}$ is defined to be the last $n$ components of a unit vector in direction of $h_1$.
		\end{theorem}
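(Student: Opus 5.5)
The plan is to start from \cref{lem_delAB_bound1} and bound each ratio in \eqref{eq_lem_delAB_bound1} separately. By definition, $n_1$ and $n_2$ are Frobenius norms of row-blocks of $\widehat{H}_i-\widehat{T}$, so
\[
n_1\leqslant\norm{\widehat{H}_i-\widehat{T}}_F \quad\text{and}\quad n_2\leqslant\norm{\widehat{H}_i-\widehat{T}}_F .
\]
\cref{lem_bound_Hs} then bounds both by $\gamma=2\sqrt{\Lambda-i}\sin\bigl(d(\mt_\Omega,\H_i)/2\bigr)$. Substituting $n_1\leqslant\gamma$ and $n_2\leqslant\gamma$ into \eqref{eq_lem_delAB_bound1} gives
\[
\gamma\,\kappa_2\bigl(\cdot\bigr)\Bigl(\tfrac{1}{\nu_1}+\tfrac{1}{\nu_2}+\tfrac{\gamma}{\nu_1\nu_2}\Bigr).
\]
It therefore remains only to show $\nu_2=\norm{\widehat{X}_i^+}_F\geqslant\norm{h_b}_2$, which converts $1/\nu_2$ into $1/\norm{h_b}_2$ in both places where it appears.

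For the lower bound on $\nu_2$, I will use that $\widehat{H}_i=Q_iV$ has orthonormal columns spanning $\H_i$. By \cref{lem_blkmat} (or directly, since $V$ is orthogonal and right multiplication preserves row-block Frobenius norms), $\norm{\widehat{X}_i^+}_F$ is basis independent among orthonormal bases of $\H_i$. I will choose an orthonormal basis of $\H_i$ whose first vector is $h_1/\norm{h_1}_2$. This is legitimate because $h_1\in\mt_1\subseteq\mt_i\subseteq\H_i$, and $h_1\neq0$ since $\dim(\mt_1)=1$. The last $n$ rows of this basis matrix contain $h_b$ as their first column. Hence
\[
\norm{\widehat{X}_i^+}_F^2=\sum_k\norm{(\text{last }n\text{ entries of column }k)}_2^2\geqslant\norm{h_b}_2^2 .
\]

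The main subtlety is making the basis-invariance step rigorous. The partition in \cref{lem_blkmat} is stated with a square top block, whereas here the blocks are the $(n+m)$ rows and the last $n$ rows, and $\Lambda=n+m$, so the top block is indeed $\Lambda\times\Lambda$ and the lemma's $A_2$ block is exactly the $X^+$ block. Even so, the argument via $\norm{WV}_F=\norm{W}_F$ for orthogonal $V$ is self-contained, and I would present that. One point needs care: if $h_b=0$ (for example $x(1)=0$), the bound is vacuous or infinite. This is consistent with the statement, so no extra case is needed. Finally, the $\kappa_2$ factor carries over unchanged, since \cref{lem_delAB_bound1} already uses $\kappa_2\smat{\widehat{X}\t^-\\ \widehat{U}\t^-}$.
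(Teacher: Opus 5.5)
Your proposal is correct and follows the paper's proof essentially step for step. You bound $n_1,n_2\leqslant\gamma$ through \cref{lem_bound_Hs}, since both are row-blocks of $\widehat{H}_i-\widehat{T}$. You then obtain $\nu_2\geqslant\norm{h_b}_2$ by passing, via the basis-invariance in \cref{lem_blkmat}, to an orthonormal basis of $\H_i$ that has $h_1/\norm{h_1}_2$ as a column; your direct argument that right-multiplication by an orthogonal matrix preserves row-block Frobenius norms is just a self-contained version of that lemma.
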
\vspace*{-0.5\baselineskip}
		\begin{pf} First, we aim to bound the quantities $n_1$, $n_2$ in \eqref{notat_n_i}.
			Note that the matrices $\begin{bsmallmatrix}\delta \widehat{X}_i^- \\ \delta \widehat{U}_i^- \end{bsmallmatrix}$, $\delta \widehat{X}_i^+$ both are sub-matrices of $\widehat{T}-\widehat{H}_i$. Hence, by \cref{lem_bound_Hs},
			\begin{equation}\label{pf_bdd1}
				\{n_1,n_2\}   \hspace{-1mm}\leqslant \hspace{-1mm}  	\norm{\widehat{T}\hspace{-1mm}-\hspace{-1mm}\widehat{H}_i}_{F} \leqslant\hspace{-1mm} 2\sqrt{\Lambda-\hspace{-1mm}i}\sin\left(\hspace{-1mm}\tfrac{d(\mt_\Omega,\H_i)}{2}\hspace{-1mm}\right),
			\end{equation}
			where $n_1$, $n_2$ are defined as in \eqref{notat_n_i}.
			Next, we aim to bound $\frac{1}{\nu_2}$ in \eqref{eq_lem_delAB_bound1}. Recall  $h_1$  in \eqref{eq_Hibar}.  Let $\widehat{h}_1$ denote a unit vector in direction of $h_1$ and is partitioned as $\widehat{h}_1=\smat{h_a \\[0.1em] h_b}$ with $h_a \in {\R{\Lambda}}$ and $h_b \in \R{n}$. Note that, $\widehat{h}_1 \in \mt_i$, as $h_1 \in \mt_i$ holds $\forall ~ i  \in\{1,2,\ldots,\Lambda\}$.
			
			Let $Q_i\in {\R{M\times \Lambda}}$ be a matrix with  orthonormal columns such that $\im{Q_i}=\H_i$ and $\widehat{h}_i$ being one of its columns. Then, $\exists$ an orthogonal matrix $G {\in \R{\Lambda \times \Lambda}}$ such that $Q_i=\widehat{H}_iG$ (See Proof of Lemma $8$ in \citep{alsalti2024robust}).  Let $Q_i^b {\in \R{n \times \Lambda}}$ be a submatrix of $Q_i$ defined to be equal to last $n$ {rows} of $Q_i$. Then, by \cref{lem_blkmat}, $\norm{Q_i^b}_F=\norm{\widehat{X}_i^+}_F =\nu_2$. As $\norm{h_b}_2 \leqslant \norm{Q_i^b}_F~ \forall ~ i \in {\{1,2,\ldots,\Lambda\}}$, we have $\norm{h_b}_2 \leqslant  \nu_2$. Therefore,
			
			\begin{equation}\label{pf_bdd2}
				\frac{1}{\nu_2}\leqslant\frac{1}{\norm{h_b}_2} 
			\end{equation}\normalsize
			Applying \eqref{pf_bdd1} and \eqref{pf_bdd2} to the bound in \cref{lem_delAB_bound1}, we get the desired result. \hfill $\blacksquare$
		\end{pf}\vspace*{-0.5\baselineskip}
		Also by \cref{lem_blkmat}, since $\kappa_2\begin{bsmallmatrix}\widehat{X}\t^-\\ \widehat{U}\t^-\end{bsmallmatrix}$ and $\nu_1$ remains invariant with the choice of basis, the  bound in \eqref{eq_thm_delAB1} changes monotonically with $d(\mt_\Omega,\H_i)$. The above theorem leads us to the following conclusions
		\begin{enumerate}
			\item As $i$ increases, the uncertainty set containing the true system $\ab{_T}$ decreases.
			\item Since  $d(\H_i,\mt_\Omega)<d(\ms,\mt_\Omega)$, the $\H_i$'s  created as in \eqref{eq_Hi}, give a better uncertainty quantification than the initial similar system considered.
			\item The error in the parameter estimates tends to zero as $\H_i$ converges to $\mt_\Omega$.
		\end{enumerate}
		
		In the next result, we characterize the deviation in identified parameters with respect to $\ab{\s}$ at time step $i$.%
		
		Let the matrices $\widetilde{S}$ , $\widetilde{H}_i \in \R{M\times \Lambda}$ be partitioned as follows:
		$\widetilde{S}=\smat{\widetilde{X}\s^{-\top}& \widetilde{U}\s^{-\top}& \widetilde{X}\s^{+\top}}^\top$, $\widetilde{H}_i=\smat{\widetilde{X}_i^{-^\top}&\widetilde{U}_i^{-\top}&\widetilde{X}_i^{+\top}}^\top$  where $\widetilde{X}\s^-$, $\widetilde{X}\s^+$, $\widetilde{X}_i^-$, $\widetilde{X}_i^+ \in \R{n\times \Lambda}$ and $\widetilde{U}\s^-$, $\widetilde{U}_i^-  \in \R{m\times \Lambda}$.
		Let $\Delta A_i:= A_i-A\s$, $\Delta B_i:= B_i-B\s$ and $m_1:={\norm{\begin{bsmallmatrix} \widetilde{X}_i^- -\widetilde{X}\s^- \\  \widetilde{U}_i^- -\widetilde{U}\s^- \end{bsmallmatrix}}_F} $.
		
		\begin{theorem} \label{thm_delAB2}  Consider the notations as defined above. Assume that ${\dim (\mt_i )}=i ~\forall~i\in\{1,2,\ldots,\Lambda\}$. Then the following inequality holds:
			\begin{equation} \label{eq_thm_delAB2}
				\hspace{-3mm}\frac{\norm{\smat{\Delta A_i ~ \Delta B_i}}_F}{\norm{\smat{ A_i ~ B_i}}_F} \leqslant  \beta\kappa_2\begin{bsmallmatrix}\widehat{X}\s^-\\ \widehat{U}\s^-\end{bsmallmatrix} \hspace{-1mm}\left(  \frac{1}{m_1} +\hspace{-1mm}\frac{1}{\norm{h_b}_2} + 
				\frac{\beta}{m_1 \norm{h_b}_2}   \right)
			\end{equation}
			where $\beta :=2\sqrt{i} \sin\left(\frac{d(\ms,\H_i)}{2}\right)$, $h_b \in \R{n}$ is defined to be the last $n$ components of a unit vector in direction of $h_1$.
		\end{theorem}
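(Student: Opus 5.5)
The plan is to mirror the proof of \cref{thm_delAB1}, with the pair $(\ms,\H_i)$ in place of $(\mt_\Omega,\H_i)$. Everything is done in the principal-vector bases $\widetilde{S}$, $\widetilde{H}_i$ built for $(\ms,\H_i)$ as in \eqref{eq_H_hat}.

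\textbf{Step 1: the perturbation lemma.} Since $\im{\widetilde{S}}=\ms$, the similar system satisfies $\widetilde{X}\s^+=A\s\widetilde{X}\s^-+B\s\widetilde{U}\s^-$. By \cref{thm_ID}, $\widetilde{X}_i^+=A_i\widetilde{X}_i^-+B_i\widetilde{U}_i^-$. I write the second relation as the first plus perturbations, exactly as in \eqref{eq_Xs1}. I then apply \cref{lem_delAB_bound1} row-wise, with $(A\t,B\t)$ replaced by $(A\s,B\s)$. This gives
\[
\frac{\norm{\smat{\Delta A_i ~ \Delta B_i}}_F}{\norm{\smat{A_i ~ B_i}}_F}\leqslant \kappa_2\smat{\widetilde{X}\s^-\\ \widetilde{U}\s^-}\left(\frac{m_1}{\mu_1}+\frac{m_2}{\mu_2}+\frac{m_1m_2}{\mu_1\mu_2}\right).
\]
Here $m_2:=\norm{\widetilde{X}_i^+-\widetilde{X}\s^+}_F$, $\mu_1$ is the Frobenius norm of the similar-system regressor block, and $\mu_2:=\norm{\widetilde{X}_i^+}_F$. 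By \cref{lem_blkmat}, the condition number and $\mu_1$ do not depend on which orthonormal basis of $\ms$ is used, so the paper's $\kappa_2\smat{\widehat{X}\s^-\\ \widehat{U}\s^-}$ can be used interchangeably.

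\textbf{Step 2: bounding the numerators.} Both $m_1$ and $m_2$ are norms of submatrices of $\widetilde{H}_i-\widetilde{S}$. The second bound of \cref{lem_bound_Hs} then gives $m_1,m_2\leqslant\beta=2\sqrt{i}\sin(d(\ms,\H_i)/2)$. That bound uses the fact that at most $i$ principal angles of $(\ms,\H_i)$ are nonzero (\cref{lem_zero_pangles}).

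\textbf{Step 3: bounding $1/\mu_2$.} This is the same argument as \eqref{pf_bdd2}. Because $h_1\in\mt_i\subseteq\H_i$, I complete the unit vector $\widehat{h}_1$ to an orthonormal basis $Q_i$ of $\H_i$. Then $Q_i=\widetilde{H}_iG$ for some orthogonal $G$, and \cref{lem_blkmat} gives $\norm{h_b}_2\leqslant\norm{Q_i^b}_F=\mu_2$. Hence $1/\mu_2\leqslant 1/\norm{h_b}_2$.

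\textbf{Step 4: assembling, and the main obstacle.} Substituting Steps 2 and 3 into Step 1 gives a bound of the stated shape, but with $1/\mu_1$ in the first and third terms. The statement instead writes $1/m_1$ there. I would treat this as notation and read the statement's $m_1$ in the denominators as the regressor norm $\mu_1$ (the analogue of $\nu_1$). If $m_1$ is meant literally, the inequality cannot follow from \cref{lem_delAB_bound1}: $m_1\leqslant\beta$ makes $\beta/m_1\geqslant1$, so bounding $m_1$ from above does not control $1/m_1$. Reconciling this denominator is the step I expect to be the only real obstacle. The remaining steps are direct transcriptions of the proof of \cref{thm_delAB1}.
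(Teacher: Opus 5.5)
Your proposal is correct and follows the route the paper intends: its proof of this theorem only says it goes along ``similar lines'' as \cref{thm_delAB1}, which means \cref{lem_delAB_bound1} applied to the pair $(\ms,\H_i)$, the second bound of \cref{lem_bound_Hs} for the numerators, and the $h_b$ argument of \eqref{pf_bdd2}. You are also right to read the denominator $m_1$ as the similar-system regressor norm $\norm{\smat{\widetilde{X}\s^- \\ \widetilde{U}\s^-}}_F$ (the analogue of $\nu_1$), and $\widehat{X}\s^-,\widehat{U}\s^-$ as the tilde blocks; defining $m_1$ as a perturbation norm is a slip in the statement, and the paper's analogy with \cref{thm_delAB1} produces exactly your version.
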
\vspace*{-0.5\baselineskip}
		\begin{pf}
			The above result can be obtained in  similar lines of \cref{thm_delAB1} and hence the proof is skipped. \hfill $\blacksquare$
		\end{pf}\vspace*{-0.5\baselineskip}
		Note that the above bound increases with $i$, as $\ab{i}$ move away from the similar system. 
		
		\begin{remark}
			The assumption $\dim(\mt_i)=i$ in the above results has been used for simplicity of the proofs. Our approach and all the guarantees provided are valid even when $\dim(\mt_i)=r<i$.  Under this case, the bounds in \cref{lem_bound_Hs} becomes  $\norm{\widehat{H}_i-\widehat{T}}_F \leqslant 2\sqrt{\Lambda-r} \sin\left(\frac{d(\mt_\Omega,\H_i)}{2}\right)$ and $\norm{\widetilde{H}_i-\widetilde{S}}_F \leqslant 2\sqrt{r} \sin\left(\frac{d(\ms,\H_i)}{2}\right)$. The bound given in \cref{thm_delAB1} also changes accordingly. For \cref{thm_H0star} if the assumption is not used, at some intermediate stage, we would encounter $\mt_i=\mt_{i+1}$ and hence $d(\mt_\Omega,\H_i)=d(\mt_\Omega,\H_{i+1})$. Hence, statement \ref{thm_H0star_p2} of \cref{thm_H0star} becomes  $d(\mt_\Omega, \mathcal{H}_i) \geqslant d(\mt_\Omega, \mathcal{H}_{j})$ for $i<j$.
		\end{remark}

		\section{Numerical implementation}
		In this section, we discuss the numerical aspects of implementing the proposed approach. A basis for $\mt_i^\perp \cap \ms$ is required at each time step for computing $\mathcal{H}_i$. 
		Denote $\ker{\cdot}$ to denote the kernel of a matrix $\cdot$ and ${\rm Basis}(\cdot)$ to denote a matrix whose columns form a basis for the subspace $\cdot$.  Denote $S^\perp$ as a matrix whose rows form a basis for the left-kernel of the matrix $S$.
		Then a basis (say $S_i$) for $\mt_i^\perp \cap \ms$ can be computed \citep{wonhamlinear} as $S_i = \mathrm{Basis}(\mt_i^\perp \cap \ms) = \mathrm{Basis}\left(\mathrm{Ker}\begin{bsmallmatrix}S^\perp \\ T_i^\top\end{bsmallmatrix}\right).$
		Then	$S^\perp$ is computed from the  SVD of $S$, and $S_i$ is computed from the SVD of the matrix $\begin{bsmallmatrix}		S^\perp \\ T_i^\top	\end{bsmallmatrix}$. 
		The computational procedure is summarized in Algorithm \ref{algo2}.

		\begin{algorithm}[h]
			\hrule\vspace{-0.5em}
			\caption{Incremental transfer identification scheme based on sequentially collected data} \label{algo2}
			\hrule
			\begin{enumerate}
				\item \textbf{Input:} $\{{u}\s(i),{x}\s(i)\}_{i=0}^{N_s}$ collected from the similar system.
				\item Construct the matrix $S$, as defined in \eqref{eq_S}, from the available similar system data.
				\item Compute the SVD of $S$ to obtain a basis for its left kernel (say $S^\perp$). \label{algo_p1}
				\item Initialize $T_i=0$  and assign $\mu=\Lambda$.\\
				{\bfseries while {$\mu \neq 0$}}
				\item Collect the current data and denote it by $h_i$.
				\item Update the matrix $T_i=\smat{T_{i-1}\quad h_i}$.
				\item Compute a basis for ${\mt_i^\perp}\cap{\ms}$ (say $S_i$) by performing SVD on $\begin{bsmallmatrix}
					S^\perp \\ T_i^\top \end{bsmallmatrix}$	and let $\mu=\ran{S_i}$. \label{algo_p2}
				\item Define $H_i=\smat{T_i \quad S_i }$.
				\item Partition the rows of $H_i$ as in \eqref{eq_part_Hi} to obtain the matrices $(X_i^-, U_i^-, X_i^+)$.
				\item Calculate  $\smat{A_i \quad B_i}=X_i^+\smat{X_i^-\\ U_i^-}^\dagger$. \label{algo_p3}\\
				\textbf{end while}
				\item \textbf{Output:} $(A_i,B_i)$ - An estimate of $\ab{\t}$ at time step $i$.
			\end{enumerate}
			\hrule
		\end{algorithm}

		Computing a basis for the kernel or the left-kernel of a matrix numerically typically involves performing a Singular Value Decomposition (SVD). Likewise, the computation of the Moore–Penrose pseudoinverse also necessitates an SVD step. For a matrix \( A \in \mathbb{R}^{m \times n} \) with \( m > n \), the computational complexity of the SVD is \(\mathcal{O}(mn^2)\) floating-point operations. Similarly, for multiplication of matrices of dimensions \( m \times n \) and \( n \times p \) requires \(\mathcal{O}(mnp)\) operations.
		See \citep{watkins2004fundamentals,golub2013matrix} for details. Computational costs involved in the steps \cref{algo2} are given in  \cref{tab_cost1}.
		\begin{table}[h] 
			\centering
			\caption{Computational Complexity}\label{tab_cost1}
			\begin{tabular}{|>{\centering\arraybackslash}m{0.6cm}|>{\centering\arraybackslash}m{4.25cm}|>{\centering\arraybackslash}m{2.6cm}|}  
				\hline
				\rule[-1ex]{0pt}{2.5ex}\vskip-0.5cm
				\textbf{Step No.}& \textbf{Steps involved} & \textbf{FLOPS
				} \\[-4pt]
				\hline
				\rule[-1ex]{0pt}{2.5ex}
				\ref{algo_p1}	  & Basis for the left-kernel of the matrix $S \in \R{M \times N_s}$ &  $\approx \mathcal{O}(M^2N_s)$\\[-3pt]
				\hline
				\rule[-1ex]{0pt}{2.5ex}
				\ref{algo_p2} &  Basis for the kernel of the matrix $\begin{bsmallmatrix}
					S^\perp \\ T_i^\top \end{bsmallmatrix}$ (where $S^\perp\in \R{M\times(M-\Lambda)}$, $T_i \in \R{M \times i}$) & $\approx \mathcal{O}((M-\Lambda+i)^2M)$ \\[-3pt]
				\hline
				\rule[-1ex]{0pt}{2.5ex}\vskip-0.3cm
				\ref{algo_p3}(i) & Pseudo-inverse of $\smat{X_i^-\\ U_i^-} \in \R{\Lambda \times \Lambda}$ & $\approx \mathcal{O}(\Lambda^3)$\\  
				\hline
				\rule[-1ex]{0pt}{2.5ex}\vskip-0.4cm
				\ref{algo_p3}(ii) & Matrix multiplication  & $ \approx \mathcal{O}(\Lambda^2n)$\\
				\hline
			\end{tabular}
		\end{table}

		\begin{remark}
			Note that $\dim (\mt_i)$ increases as we have new information from the actual system, and hence $\dim (\mt_i^\perp \cap \ms)$ decreases. When the  identification of the true system is complete, then  $\dim (\mt_i)=\Lambda$ and $\dim (\mt_i^\perp \cap \ms) = 0$ (due to \eqref{prop_eq}). This condition  has been used as the stopping criteria in Algorithm \ref{algo2}. At that point $ \mathcal{H}_i = \mt_i=\mt_\Omega$.
		\end{remark}

		\section{Illustrative Examples}
		
		In this section, we provide two numerical case studies to demonstrate the effectiveness of the proposed approach.
		\subsection{Transfer identification based on sequentially collected data}
		In this study, we apply the proposed method to  an LTI zone power model for a pressurized heavy water nuclear reactor (PHWR) from \citep{vaswani2022optimised}
		with state dimension of \( n = 56 \) and an input dimension of \( m = 15 \) as the true system, denoted by \( \ab{\t} \). Here, at any time instant $i$ our aim is to identify a suitable model for $\ab{\t}$ in online mode using data available till the current instant. For this case, $\Lambda=71$.
		
		We construct a perturbed version of the true system as the similar system \( \ab{\s} \). This perturbed system is defined as
		\(
		(A\t + \delta A_i, B\t + \delta B_i),
		\)
		where  the perturbation matrices \( \delta A_i \in \R{56\times 56}\) and \( \delta B_i \in \R{56\times 15}\) is uniformly sampled from the interval $(0,\sigma)$. Here, we consider two cases with $\sigma=0.05$ and $\sigma=0.1$. 
		
		A trajectory of length $300$ time steps is collected from the  system \( \ab{\s} \). The initial conditions and excitation inputs are randomly selected to ensure sufficient richness in the data. The true system \( \ab{\t} \) begins operation at time \( t = 0 \).
		
		At each time step \( i \), a data matrix \( T_i \in \R{127 \times i} \) is constructed based on the observed trajectory. Subsequently, the procedure outlined in Algorithm~\ref{algo2} is applied to estimate the system parameters, resulting in an identified system \( \ab{i} \).
		
		To evaluate the accuracy of the identification process, two metrics are computed at each stage:
		\begin{enumerate*}[label=(\roman*)]
			\item The distance \( d(\mathcal{H}_i, \mathcal{T}_\Omega) \), which quantifies the distance between the constructed subspace $\H_i$ and the subspace $\mt_\Omega$ corresponding to the true system.
			\item The Frobenius norm error \( \| [A_i \quad B_i] - [A\t \quad B\t] \|_F \), which measures the deviation of the estimated system matrices from the true matrices.
		\end{enumerate*}

		The evolution of these metrics over time is illustrated in Figure~\ref{fig_simulation}, providing insight into the convergence behavior.
		We have presented a comparison of two similar systems created by uniformly sampling the perturbation terms $\delta A_i$, $\delta B_i$ from the intervals $(0,0.05)$ and $(0,0.1)$. When $\ran{T_i}$ becomes equal to $71$, $d(\H_i,\mt_\Omega)$  becomes zero and hence $\H_i=\mt_\Omega$ and $\ab{i}=\ab{\t}$.

		\begin{figure}
			\centering
			\includegraphics[width=0.8\linewidth,height=0.45\linewidth]{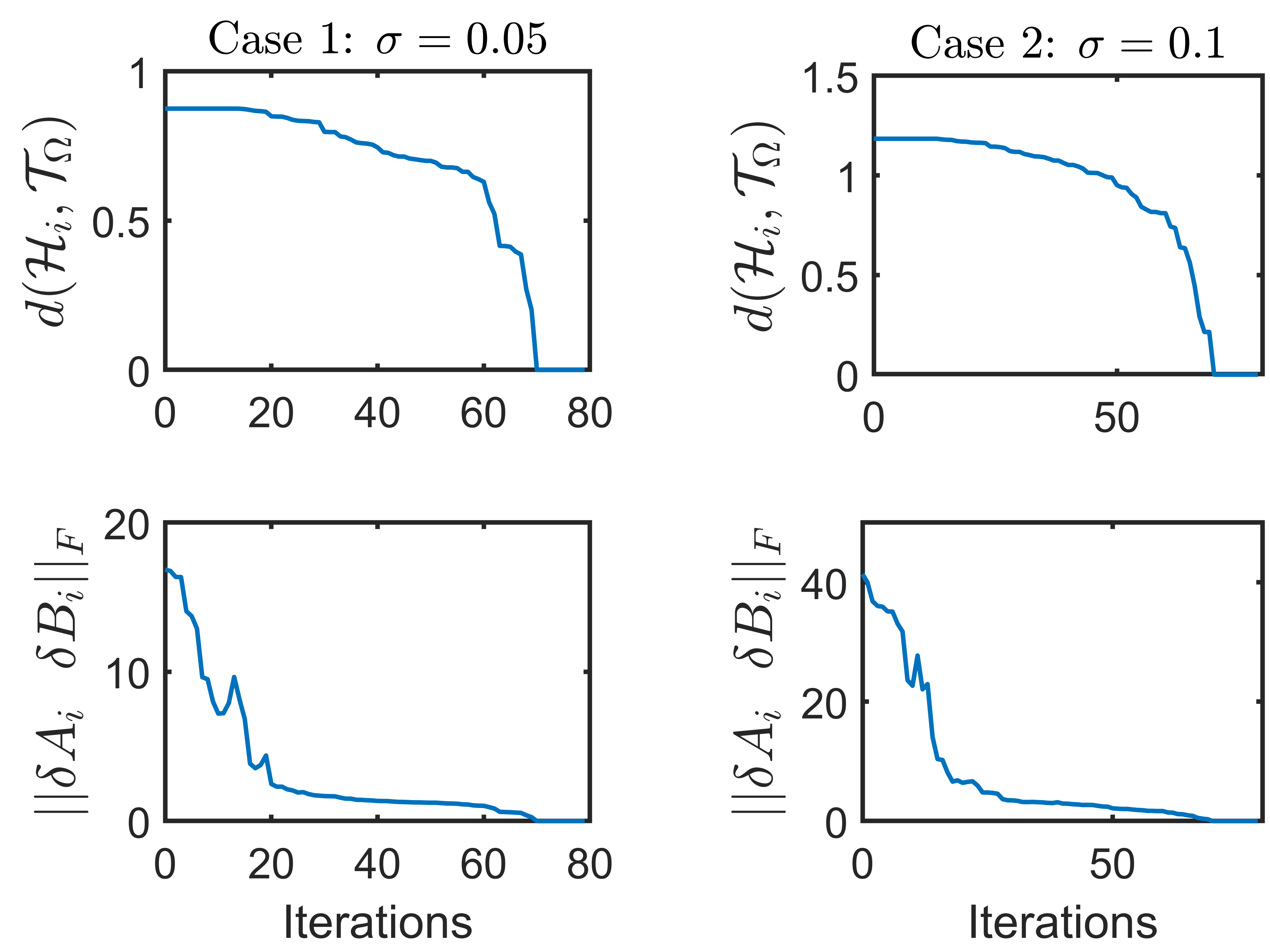}
			\caption{Variation of $d(\H_i,\mt_\Omega)$ and $\norm{[A_i \quad B_i]-[A\t \quad B\t]}_F$}
			\label{fig_simulation}
		\end{figure}
		\subsection{Pole placement using certainty equivalence approach with batch data}
		Here, we consider an open unstable system given in \citep{dean2020sample} 
		\begin{align*}
			A\t=\smat{1.01 &\quad0.01 &\quad0\\0.01 &\quad1.01&\quad 0.01\\0 &\quad0.01 &\quad1.01}, B\t= I_3.
		\end{align*}
		Our goal is to find a state feedback controller to place the closed-loop poles of the above system  at $\{-0.5,0.5,0.75\}$.
		Also, consider the similar system
		\begin{align*}
			A\s\hspace{-1mm}=\hspace{-1mm}\smat{0.0560   &-0.2909  &  0.1998\\[3pt]
				-1.0053   & 0.2756  & -0.4477\\[3pt]
				0.1049   & 0.3415  &  1.5790}, \hspace{-0.5mm}B\s\hspace{-1mm}=\hspace{-1mm}\smat{0.6828 & \:  0.1730   & \:0.1366\\[3pt]
				-0.0453   &\: 1.6228   &\: 0.0700\\[3pt]
				-0.1402   &\: 0.1571   &\: 0.6139}.
		\end{align*}
		The above matrices are obtained by perturbing $\ab{\t}$ entrywise with normal random variables of mean $0$ and standard deviation $0.5$.
		Here,\linebreak $\norm{[A\s ~ B\s]-[A\t ~ B\t]}_F=2.0006$. It is assumed that we have a data set from the similar system satisfying $\ran{S}=\Lambda=6$ and hence $\ab{\s}$ is known. Whereas, $\ab{\t}$ is unknown, but we have the following data collected from $\ab{\t}$
		\begin{align*}
			\{u_k\}_{k=0}^{3}&=\begin{Bsmallmatrix}\smat{1\\1\\1},\smat{-1\\1\\1},\smat{1\\-1\\1},\smat{1\\-1\\-1}\end{Bsmallmatrix},\\
			\{x_k\}_{k=0}^{4}&=\hspace{-1mm}\begin{Bsmallmatrix}\hspace{-1mm}\smat{1\\[2pt]-1\\[2pt]-1},\hspace{-0.5mm}\smat{2\\[2pt]-0.01\\[2pt]-0.02}, \hspace{-0.5mm}\smat{1.0199\\[2pt]1.0097\\[2pt]-1.0203},\hspace{-0.5mm}\smat{2.0402\\[2pt]0.0198\\[2pt]-0.0204},\hspace{-0.5mm} \smat{3.0608\\[2pt]-0.9598\\[2pt]-1.0204}\hspace*{-1mm}\end{Bsmallmatrix}.
		\end{align*}

		From the available data, a data matrix $T_3$ is constructed, and here $\ran{T_3}=3<\Lambda$. Subsequently, we construct the subspace $\H_3$ according to \eqref{eq_Hi} and identify the system
		
		\begin{align*}
			A_3\hspace{-1mm}=\hspace{-1mm}\smat{ 0.7857   & \: 0.1184  &\: -0.5471\\[3pt]
				-0.3831   &\: 1.0674  &\: -1.0759\\[3pt]
				0.0984   & \: 0.3010   &\: 1.5753}, B_3\hspace{-1mm}=\hspace{-1mm}\smat{1.1150  &\:  0.1123  & \:-0.4416\\[3pt]
				0.0717   & \:1.0648 &\:  -0.7718\\[3pt]
				0.2814  & \: 0.2876 & \:  1.1888},
		\end{align*}
		and $\norm{[A_3 ~ B_3]-[A\t ~ B\t]}_F=1.7708$. Now, we consider $\ab{3}$ as an approximate model/estimate of $\ab{\t}$ and we aim to solve the desired pole-placement problem disregarding the modeling error. This is called the certainty equivalence approach in the literature \citep{dorfler2023certainty,mania2019certainty}. If $\ab{\t}$ is known, one could easily find a state feedback controller to place the poles of the closed loop system precisely at $\{-0.5,0.5,0.75\}$. Since the true system is unknown, the best one can do is to find a state feedback controller that places the closed-loop poles in a neighborhood of the desired pole locations.
		
		Using MATLAB's \texttt{place} command, we obtain the controller 
		$K_3=\smat{ 0.3218  &\quad -0.0667 &\quad  -0.1242\\[2pt]
			-0.3204   &\quad 1.4221 &\quad  -0.4062\\[2pt]
			0.0841&\quad   -0.0751  &\quad  0.8219}$, which places the poles of $\ab{3}$ at $\{-0.5,0.5,0.75\}$. The closed loop poles of $(A\t-B\t K_3)$ are $\{-0.4844,0.2565,0.6921\}$ and $\norm{K_3-K\t}_F= 0.8228$. The resulting closed-loop system is stable, with poles located close to our desired position.
		
		Suppose one uses the same certainty equivalence approach by considering $\ab{\s}$ as our estimate of $\ab{\t}$. In that case, we obtain a controller $K\s=\smat{-0.5301& \quad  -0.6005& \quad   0.0865\\[2pt]
			-0.6435   & \quad0.4480&   \quad-0.3363\\[2pt]
			0.2145    &\quad0.3045 & \quad  1.4562}$  which places the poles of $\ab{\s}$ at $\{-0.5,0.5,0.75\}$. However, the closed-loop system $(A\t-B\t K\s)$ has poles at $\{-0.3319,0.1477,1.8401\}$ and hence it is unstable.
		\begin{figure}
			\centering
			\includegraphics[width=0.6\linewidth,height=0.5\linewidth]{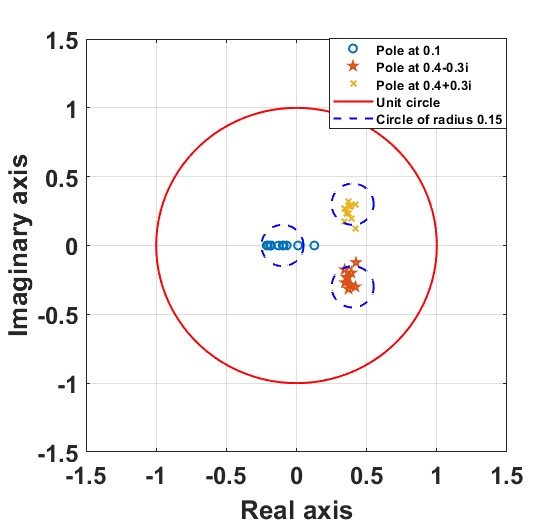}
			\caption{Closed loop poles of $\ab{\t}$}
			\label{fig_pole_placement}
		\end{figure}
		
		\cref{fig_pole_placement} shows the closed-loop poles of $\ab{\t}$ for $10$ different realizations obtained using the certainty equivalence approach, with $\{-0.1,\,0.4 \pm 0.3\iota\}$ as the desired pole locations. For each case, a data trajectory satisfying $\ran{T_i} = 3$ is collected from $\ab{\t}$. The matrices $\ab{\s}$ are generated by perturbing $\ab{\t}$ entrywise with normal random variables of mean $0$ and standard deviation $0.125$.
		From \cref{fig_pole_placement}, we can infer that the closed-loop poles are quite close to the desired locations.
		
		\section{Conclusion and future works}\label{sec_con}
		We investigated the problem of transfer identification for LTI systems from a geometric perspective. This approach of characterizing systems led us to a simple closed-form expression to combine the data from the similar system and the true system. In addition, several attributes of the constructed subspaces $\H_i$ have been analyzed. We have also characterized a bound that captures the error in the identified system matrices $\ab{i}$ at any time instant $i$. To illustrate the effectiveness of the proposed approach, two numerical case studies have been conducted: one on a high-dimensional system and another focused on pole placement. Extending these deterministic results to the noisy data setting remains a significant challenge and is the focus of ongoing research.

		\begin{ack}                               
			The authors thank Vatsal Kedia for useful discussions and valuable inputs regarding this work.
		\end{ack}
		\scriptsize
		\bibliographystyle{plainnat}        
		\bibliography{bibfile_id}     
		\normalsize
		
		\begin{appendix} 
			\section{Appendix}\label{appendicx}
			\subsection{Proof of \cref{lem_dist_eq}}\label{ap_pf1}
			Let $Q_\D=\smat{d_1~d_2\cdots d_k}$. It is easy to see that $\smat{Q_\B ~Q_\D}$ forms an orthonormal basis for $\C$ as $d_i\perp\B$ for $i=1,2,\ldots,k$.
			Next consider,
			\begin{align*}
				\begin{bsmallmatrix} Q_\mathcal{B}^\top \\ Q_\D^\top \end{bsmallmatrix}\begin{smallmatrix}Q_\mathcal{A}Q_\mathcal{A}^\top\end{smallmatrix} \begin{bsmallmatrix}
					Q_\mathcal{B} ~~ Q_\D
				\end{bsmallmatrix} &= \begin{bsmallmatrix}
					Q_\mathcal{B}^\top Q_\mathcal{A}Q_\mathcal{A}^\top Q_\mathcal{B} \quad Q_\mathcal{B}^\top  Q_\mathcal{A}Q_\mathcal{A}^\top Q_\D \\   Q_\D^\top Q_\mathcal{A}Q_\mathcal{A}^\top Q_\mathcal{B} \quad Q_\D^\top Q_\mathcal{A}Q_\mathcal{A}^\top Q_\D
				\end{bsmallmatrix} \nonumber \\ &=\begin{bsmallmatrix}
					Q_\mathcal{B}^\top Q_\mathcal{A}Q_\mathcal{A}^\top Q_\mathcal{B} & 0_{q\times k}\\ 0_{k \times q} & {\rm I}_k
				\end{bsmallmatrix},
			\end{align*}
			where $q=\dim(\B)$.
			Note that $Q_\mathcal{A}Q_\mathcal{A}^\top$ is the orthogonal projection onto $\mathcal{A}$, hence $Q_\mathcal{A}Q_\mathcal{A}^\top d_i = d_i$ as $d_i\in \A$. As $d_i$'s are orthogonal to $\mathcal{B}$, $Q_\mathcal{B}^\top  Q_\mathcal{A}Q_\mathcal{A}^\top d_i =  Q_\mathcal{B}^\top d_i =0$ holds. Similarly, $d_i^\top Q_\mathcal{A}Q_\mathcal{A}^\top d_i =d_i^\top d_i = 1$.
			
			As singular values of a matrix of the form  $Q_\mathcal{A}^\top Q_\mathcal{B}$ are cosines of principal angles \cite[Theorem 1]{bjorck1973numerical}, $1$ corresponds to the maximum singular value. Hence, we conclude that $ \sigma_{\min} (Q_\mathcal{A}^\top Q_\mathcal{B})=\sigma_{\min}(Q_\mathcal{A}^\top[Q_\mathcal{B} \quad Q_\D])$. And since the minimum singular value determines the maximum principal angle between the subspaces, equivalently, we have $d( \mathcal{A}, \mathcal{B}) = d( \mathcal{A}, \mathcal{C})$.
			
			Next to prove statement \ref{lem1_p2}, consider $\D = \mathcal{A}\cap \mathcal{B}^\perp$. As $\B$ is not partially orthogonal to $\A$, in the similar lines of obtaining \eqref{prop_eq}, here we have $\dim(\D)=\dim(\mathcal{A}\cap \mathcal{B}^\perp)=\dim(\A)-\dim(\B)$. As $\C=\B\oplus\D$,  $\dim(\C)\hspace{-1mm}=\hspace{-1mm}\dim(\B)+\dim(\D)\hspace{-1mm}=\hspace{-1mm}\dim(\A)$. 
			\subsection{Proof of \cref{lem_dist_eq2}}\label{ap_pf2}
			Let $\dim(\A\cap\C)=\gamma$, $\dim(\A\cap\B)=\beta$. Then, $\exists$ matrices $Q_{\A}$, $Q_{\B}$, $Q_{\C}$ with orthonormal columns that spans $\A$, $\B$, $\C$ respectively and the symmetric matrix $Q_{\C}^\top Q_{\A}Q_{\A}^\top Q_{\C}$ can be partitioned as 
			\begin{align}\label{eq_blkmat1}
				\begin{smallmatrix}Q_{\C}^\top Q_{\A}Q_{\A}^\top Q_{\C}\end{smallmatrix}=\begin{bsmallmatrix} \mathcal{N} &  &  \\ & {\rm I}_{\beta} &\\ & & {\rm I}_{\gamma-\beta} \end{bsmallmatrix},
			\end{align}
			with $\begin{smallmatrix}Q_{\B}^\top Q_{\A}Q_{\A}^\top Q_{\B}\end{smallmatrix}=\begin{bsmallmatrix} \mathcal{N} &   \\ & {\rm I}_{\beta}  \end{bsmallmatrix}$. In the above block matrix, $\dim(\A\cap\C)=\gamma$, we have $\gamma$ principal angles to be zero for the pair $(\A,\C)$ and hence $Q_{\A}^\top Q_{\C}$ has $\gamma$ singular values exactly equal to $1$. Hence, \eqref{eq_blkmat1} has an identity matrix of size $\gamma$. A similar argument can be used for the pair $(\A,\B)$ and one can conclude that $Q_{\B}^\top Q_{\A}Q_{\A}^\top Q_{\B}$ would have an identity matrix of size $\beta$.
			The symmetric matrix $\mathcal{N}$ in \eqref{eq_blkmat1} has eigenvalues not equal to $1$. 
			\newpage
			Then, the matrix $Q_{\C}$ can be partitioned as
			$Q_{\C}=\smat{Q_{\B} ~ d_1 ~ d_2 \cdots d_{\gamma-\beta}}$. Clearly, each $d_i \perp \B$ for $i=1,2,\ldots,(\gamma-\beta)$. Also as $d_i^\top Q_{\A}Q_{\A}^\top d_i =1$, we have $d_i \in \A$ for $i=1,2,\ldots,(\gamma-\beta)$. As $d_i$'s satisfies $d_i \in \A \cap \B^\perp$,  $\D=\langle d_1,d_2, \ldots, d_{\gamma-\beta} \rangle$ is the required subspace and is of dimension $\dim(\A\cap\C)-\dim(\A\cap\B)$.
			\subsection{Proof of \cref{lem_blkmat}}\label{ap_pf3}
			From the  Proof of Lemma $8$ in \citep{alsalti2024robust},  $\exists$ an orthogonal matrix $G$ such that  $Q_B=Q_A G$. Hence, we have $B_1=A_1G$, $B_2=A_2G$. Hence, $\norm{B_k}_F=\norm{A_k}_F$ for $k=\{1,2\}$ follows. 
			
			If, $A_1$, $B_1$ are invertible, then $\norm{A_1^{-1}}_2\hspace{-1mm}=\hspace{-1mm}\norm{G^\top B_1^{-1}}_2\hspace{-1mm}=\hspace{-1mm}\norm{B_1^{-1}}_2$. Hence, $\kappa_2(A)\hspace{-1mm}=\hspace{-1mm}\norm{A_{1}}_2\norm{A_1^{-1}}_2\hspace{-1mm}=\hspace{-1mm}\norm{B_{1}}_2\norm{B_1^{-1}}_2\hspace{-1mm}=\hspace{-1mm}\kappa_2(B_1)$.
			
		\end{appendix}	
	\end{document}